\begin{document}
	\setlength{\belowdisplayskip}{4pt}
	\setlength{\abovedisplayskip}{4pt}
	\setlength{\belowdisplayshortskip}{1pt}
	\setlength{\abovedisplayshortskip}{1pt}
	
	\setlength{\marginparwidth}{2.5cm}
	
	\setlength{\textfloatsep}{4pt plus 4pt minus 4pt} 
	\setlength{\floatsep}{0pt} 
	\setlength{\intextsep}{0pt} 
	
	\setlength{\abovecaptionskip}{0pt}
	\setlength{\belowcaptionskip}{0pt}

\title{
	Partial Optimality in Cubic Correlation Clustering for General Graphs
}

\author{\name David Stein \email David.Stein1@tu-dresden.de \\
       \addr Faculty of Computer Science, TU Dresden, 01069 Dresden, Germany
       \AND
       \name Bjoern Andres \email bjoern.andres@tu-dresden.de \\
       \addr Faculty of Computer Science, TU Dresden, 01069 Dresden, Germany \\
       \addr Center for Scalable Data Analytics and AI Dresden/Leipzig, Dresden, Germany 
       \AND
       \name Silvia Di Gregorio \email digregorio@lipn.fr \\
       \addr Universit\a'e Sorbonne Paris Nord, LIPN, CNRS UMR 7030, F-93430 Villetaneuse, France \\
       \addr Faculty of Computer Science, TU Dresden, 01069 Dresden, Germany
       }

\editor{My editor}

\maketitle

\begin{abstract}
The higher-order correlation clustering problem for a graph $G$ and costs associated with cliques of $G$ consists in finding a clustering of $G$ so as to minimize the sum of the costs of those cliques whose nodes all belong to the same cluster. 
To tackle this NP-hard problem in practice, local search heuristics have been proposed and studied in the context of applications. 
Here, we establish partial optimality conditions for cubic correlation clustering, i.e., for the special case of at most 3-cliques. 
We define and implement algorithms for deciding these conditions and examine their effectiveness numerically, on two data sets.
\end{abstract}

\begin{keywords}
cubic correlation clustering, 
cubic clique partitioning, 
partial optimality, 
discrete optimization,
data reduction
\end{keywords}

\allowdisplaybreaks


\section{Introduction}
We study an optimization problem whose feasible solutions are all clusterings of a graph that is given as input. 
Here, a clustering of a graph $G = (V, E)$ is a partition $\Pi$ of the node set $V$ such that for any $U \in \Pi$, the subgraph $G[U]$ of $G$ induced by $U$ is connected.
In addition to the edges, we are given a subset $T$ of node triples $pqr$ such that $pq \in E$ and $qr \in E$ and $pr \in E$.
For any $e \in E$ and any $t \in T$, we are given numbers $c_e, c_t \in \mathbb{R}$ called costs.
The objective is to find a clustering $\Pi$ of $G$ so as to minimize the sum of the costs of those edges and triples whose nodes all belong to the same cluster.
More formally: 
\begin{definition}\label{def: first def}
For any graph $G = (V, E)$, any $T \subseteq \{ pqr \in \binom V3 \mid pq \in E, qr \in E, pr \in E \}$ and any $c \colon T \cup E \cup \{\emptyset\} \to \mathbb{R}$, the instance of the \emph{cubic correlation clustering} problem is
\begin{align}
\min_{\Pi \in P_G} \quad
c_\emptyset 
+ \sum_{R \in \Pi} \ \displaystyle\sum_{e \in E \cap \binom R2} c_e
+ \sum_{R \in \Pi} \ \sum_{t \in T \cap \binom R3} c_t
\enspace .
\label{eq:problem}
\end{align}
Here, $P_G$ denotes the set of all clusterings of $G$.
\end{definition}

The cubic correlation clustering problem is \textsc{np}-hard, as it  generalizes the \textsc{np}-hard clique partitioning problem for complete graphs \citep{goetschel-1989}, specializing to the latter in the case that $c_t = 0$ for all $t \in T$.
Potential applications of cubic correlation clustering include the tasks of fitting equilateral triangles to points in a plane, the task of fitting an unknown number of planes containing the origin to a set of points in 3-dimensional space, and the task of fitting an unknown number of lines to a set of points in 2-dimensional space, as discussed, e.g.,~by \cite{LevKarAndKeu22}. 

In this article, we ask whether we can compute a partial solution to the problem efficiently, i.e., decide efficiently for some edges or triples whether their elements are in the same cluster or distinct clusters of an optimal clustering.
In order to find such \emph{partial optimality}, we characterize \emph{improving maps} and state efficiently-verifiable sufficient conditions of their improvingness, 
a technique introduced by \citet{shekhovtsov-2013}; see also \cite{shekhovtsov-2014,shekhovtsov-2015}. 
In order to examine the effectiveness of these partial optimality conditions empirically, we implement algorithms for deciding these conditions, and conduct experiments, cf. Figure~\ref{figure:experiments}.

This article is an extension of a conference article \citep{stein-2023} about partial optimality conditions for the cubic correlation clustering problem with respect to a complete graph. Here, we extend the conference article in three ways:
Firstly, we transfer all conditions by~\cite{stein-2023} from complete to general graphs.
Secondly, we generalize the implementation of~\cite{stein-2023} from complete to general graphs.
Thirdly, we evaluate the effectiveness of these conditions on instances defined with respect to general (also incomplete) graphs.
These extensions make the algorithms for partial optimality applicable to larger instances with incomplete graphs. 
The complete source code of the algorithms and for reproducing the experiments is available at \url{https://github.com/dsteindd/partial-optimality-in-cubic-correlation-clustering-for-general-graphs}.

\begin{figure}[t]
	\centering
	\begin{minipage}[t]{0.5\linewidth}
		\centering
		\hspace{-\linewidth}
		\vspace{-2ex}
		a)
		\linebreak
		\begin{tikzpicture}[scale=1.4,rotate=90]
			\label{fig:illustration-partition}
			\draw (1.6, 0.3) ellipse (0.5 and 1);
			\draw (1.6, -1.2) ellipse (0.5 and 0.3);
			\draw (0.4, -0.8) ellipse (0.4 and 0.6);
			\draw (0.4, 0.65) ellipse (0.4 and 0.6);
			
			\filldraw[tertiary] (1.65, 0.65) circle (1pt);
			\filldraw[tertiary] (1.6, -0.1) circle (1pt);
			
			\draw[tertiary] (1.65, 0.65) -- (1.6, -0.1);
			
			\filldraw[tertiary] (0.3, 0.5) circle (1pt);
			\filldraw[tertiary] (0.6, 0.8) circle (1pt);
			\filldraw[tertiary] (0.1, 0.8) circle (1pt);			
			\draw[tertiary] (0.3, 0.5) -- (0.6, 0.8) -- (0.1, 0.8) -- cycle;
			%
			\filldraw[deep-red] (1.3, 0.2) circle (1pt);
			\filldraw[deep-red] (0.4, 0.3) circle (1pt);
			\draw[deep-red] (1.3, 0.2) -- (0.4, 0.3);
			
			\filldraw[deep-red] (1.35, -0.4) circle (1pt);
			\filldraw[deep-red] (0.7, -0.7) circle (1pt);
			\filldraw[deep-red] (1.3, -1.1) circle (1pt);
			\draw[deep-red] (1.35, -0.4) -- (0.7, -0.7) -- (1.3, -1.1) -- cycle;
			
			\filldraw (0.5, 1) circle (1pt);
			\filldraw (0.6, 0.5) circle (1pt);
			\filldraw (0.2, 0.3) circle (1pt);
			\filldraw (0.3, 0.7) circle (1pt);
			\filldraw (0.25, 0.9) circle (1pt);
			
			\filldraw (1.5, 1.1) circle (1pt);
			\filldraw (1.75, 0.3) circle (1pt);
			\filldraw (1.7, 0.9) circle (1pt);
			\filldraw (2.0, 0) circle (1pt);
			\filldraw (1.5, 0.1) circle (1pt);
			\filldraw (1.6, -0.5) circle (1pt);
			\filldraw (1.7, -0.2) circle (1pt);
			\filldraw (1.9, 0.6) circle (1pt);
			\filldraw (1.2, 0.4) circle (1pt);
			\filldraw (1.3, 0.7) circle (1pt);
			\filldraw (1.5, 0.5) circle (1pt);
			\filldraw (1.3, -0.2) circle (1pt);
			
			\filldraw (0.5, -1.1) circle (1pt);
			\filldraw (0.6, -0.6) circle (1pt);
			\filldraw (0.2, -0.5) circle (1pt);
			\filldraw (0.4, -0.8) circle (1pt);
			\filldraw (0.25, -1.1) circle (1pt);
			
			\filldraw (1.7, -1.4) circle (1pt);
			\filldraw (1.85, -1.2) circle (1pt);
			\filldraw (1.6, -1.1) circle (1pt);
			\filldraw (1.35, -1.25) circle (1pt);
		\end{tikzpicture}%
	\end{minipage}%
	\begin{minipage}[t]{0.5\linewidth}
		\centering
		\hspace{-\linewidth}
		\vspace{-2ex}
		b)
		\linebreak
	  \begin{tikzpicture}[xscale=1.4, yscale=-1.4,rotate=70]
	        	\label{fig:illustration-equilateral}
			\draw[black] (0.110281, 1.62544) -- (-0.596785, 0.782788) -- (0.486504, 0.591775) -- cycle;
			\draw[black] (0.302535, 0.00746821) -- (1.06858, 0.650256) -- (0.128886, 0.992276) -- cycle;
			\draw[black] (1.75981, -0.25) -- (0.980385, -0.7) -- (1.75981, -1.15) -- cycle;
			\draw plot[mark=*, mark size=0.1ex, mark options={draw=black, fill=black}] coordinates {(0.124757, 1.29645)};
			\draw plot[mark=*, mark size=0.1ex, mark options={draw=black, fill=black}] coordinates {(0.0156063, 1.56903)};
			\draw plot[mark=*, mark size=0.1ex, mark options={draw=black, fill=black}] coordinates {(0.0442783, 1.85471)};
			\draw plot[mark=*, mark size=0.1ex, mark options={draw=black, fill=black}] coordinates {(0.352589, 1.85776)};
			\draw plot[mark=*, mark size=0.1ex, mark options={draw=black, fill=black}] coordinates {(0.341288, 1.76396)};
			\draw plot[mark=*, mark size=0.1ex, mark options={draw=black, fill=black}] coordinates {(-0.788512, 0.629457)};
			\draw plot[mark=*, mark size=0.1ex, mark options={draw=black, fill=black}] coordinates {(-0.512841, 0.893121)};
			\draw plot[mark=*, mark size=0.1ex, mark options={draw=black, fill=black}] coordinates {(-0.681334, 0.771977)};
			\draw plot[mark=*, mark size=0.1ex, mark options={draw=black, fill=black}] coordinates {(-0.599445, 0.491229)};
			\draw plot[mark=*, mark size=0.1ex, mark options={draw=black, fill=black}] coordinates {(-0.214179, 0.741795)};
			\draw plot[mark=*, mark size=0.1ex, mark options={draw=black, fill=black}] coordinates {(-0.600574, 0.563377)};
			\draw plot[mark=*, mark size=0.1ex, mark options={draw=black, fill=black}] coordinates {(0.684411, 0.85045)};
			\draw plot[mark=*, mark size=0.1ex, mark options={draw=black, fill=black}] coordinates {(0.431443, 0.609925)};
			\draw plot[mark=*, mark size=0.1ex, mark options={draw=black, fill=black}] coordinates {(0.337814, 0.569987)};
			\draw plot[mark=*, mark size=0.1ex, mark options={draw=black, fill=black}] coordinates {(0.402965, 0.536616)};
			\draw plot[mark=*, mark size=0.1ex, mark options={draw=black, fill=black}] coordinates {(0.807089, 0.664689)};
			\draw plot[mark=*, mark size=0.1ex, mark options={draw=black, fill=black}] coordinates {(0.611389, 0.82912)};
			\draw plot[mark=*, mark size=0.1ex, mark options={draw=black, fill=black}] coordinates {(0.244796, 0.331031)};
			\draw plot[mark=*, mark size=0.1ex, mark options={draw=black, fill=black}] coordinates {(0.422437, -0.0280672)};
			\draw plot[mark=*, mark size=0.1ex, mark options={draw=black, fill=black}] coordinates {(0.371178, 0.0800537)};
			\draw plot[mark=*, mark size=0.1ex, mark options={draw=black, fill=black}] coordinates {(0.242193, -0.101483)};
			\draw plot[mark=*, mark size=0.1ex, mark options={draw=black, fill=black}] coordinates {(0.621105, 0.0697063)};
			\draw plot[mark=*, mark size=0.1ex, mark options={draw=black, fill=black}] coordinates {(0.468242, -0.292406)};
			\draw plot[mark=*, mark size=0.1ex, mark options={draw=black, fill=black}] coordinates {(0.315929, 0.175373)};
			\draw plot[mark=*, mark size=0.1ex, mark options={draw=black, fill=black}] coordinates {(0.810572, 0.490854)};
			\draw plot[mark=*, mark size=0.1ex, mark options={draw=black, fill=black}] coordinates {(0.801826, 0.628778)};
			\draw plot[mark=*, mark size=0.1ex, mark options={draw=black, fill=black}] coordinates {(0.786745, 0.888599)};
			\draw plot[mark=*, mark size=0.1ex, mark options={draw=black, fill=black}] coordinates {(1.09038, 1.10942)};
			\draw plot[mark=*, mark size=0.1ex, mark options={draw=black, fill=black}] coordinates {(1.13157, 0.460952)};
			\draw plot[mark=*, mark size=0.1ex, mark options={draw=black, fill=black}] coordinates {(1.23018, 0.811745)};
			\draw plot[mark=*, mark size=0.1ex, mark options={draw=black, fill=black}] coordinates {(1.00766, 0.916428)};
			\draw plot[mark=*, mark size=0.1ex, mark options={draw=black, fill=black}] coordinates {(0.228776, 0.929358)};
			\draw plot[mark=*, mark size=0.1ex, mark options={draw=black, fill=black}] coordinates {(0.245888, 1.25887)};
			\draw plot[mark=*, mark size=0.1ex, mark options={draw=black, fill=black}] coordinates {(0.105466, 1.03045)};
			\draw plot[mark=*, mark size=0.1ex, mark options={draw=black, fill=black}] coordinates {(0.142065, 1.05879)};
			\draw plot[mark=*, mark size=0.1ex, mark options={draw=black, fill=black}] coordinates {(0.33803, 1.17063)};
			\draw plot[mark=*, mark size=0.1ex, mark options={draw=black, fill=black}] coordinates {(0.41758, 0.794262)};
			\draw plot[mark=*, mark size=0.1ex, mark options={draw=black, fill=black}] coordinates {(1.75563, -0.715098)};
			\draw plot[mark=*, mark size=0.1ex, mark options={draw=black, fill=black}] coordinates {(1.42387, -0.433054)};
			\draw plot[mark=*, mark size=0.1ex, mark options={draw=black, fill=black}] coordinates {(1.67543, -0.238974)};
			\draw plot[mark=*, mark size=0.1ex, mark options={draw=black, fill=black}] coordinates {(1.16393, -0.0127456)};
			\draw plot[mark=*, mark size=0.1ex, mark options={draw=black, fill=black}] coordinates {(1.40133, -0.267431)};
			\draw plot[mark=*, mark size=0.1ex, mark options={draw=black, fill=black}] coordinates {(1.1587, -0.546172)};
			\draw plot[mark=*, mark size=0.1ex, mark options={draw=black, fill=black}] coordinates {(1.00785, -0.949736)};
			\draw plot[mark=*, mark size=0.1ex, mark options={draw=black, fill=black}] coordinates {(0.89444, -0.571423)};
			\draw plot[mark=*, mark size=0.1ex, mark options={draw=black, fill=black}] coordinates {(0.886873, -0.692756)};
			\draw plot[mark=*, mark size=0.1ex, mark options={draw=black, fill=black}] coordinates {(1.00278, -0.755562)};
			\draw plot[mark=*, mark size=0.1ex, mark options={draw=black, fill=black}] coordinates {(1.79729, -1.28684)};
			\draw plot[mark=*, mark size=0.1ex, mark options={draw=black, fill=black}] coordinates {(1.63846, -1.29231)};
			\draw plot[mark=*, mark size=0.1ex, mark options={draw=black, fill=black}] coordinates {(1.88776, -1.06207)};
			\draw plot[mark=*, mark size=0.1ex, mark options={draw=black, fill=black}] coordinates {(1.69246, -1.41393)};
			\draw plot[mark=*, mark size=0.1ex, mark options={draw=black, fill=black}] coordinates {(1.89045, -1.0887)};
			\draw plot[mark=*, mark size=0.1ex, mark options={draw=black, fill=black}] coordinates {(1.89538, -0.978857)};
			\draw plot[mark=*, mark size=0.1ex, mark options={draw=black, fill=black}] coordinates {(1.85791, -1.05489)};
		\end{tikzpicture}
	\end{minipage}
	\caption{In order to examine the effectiveness of partial optimality conditions empirically, we implement algorithms for deciding these conditions and measure the fraction of fixed variables with respect to a parameter controlling the noise of the problem, for (a) synthetic instances with four clusters and noisy costs, and (b) instances for the task of finding equilateral triangles in a noisy point cloud.}
	\label{figure:experiments}
\end{figure}
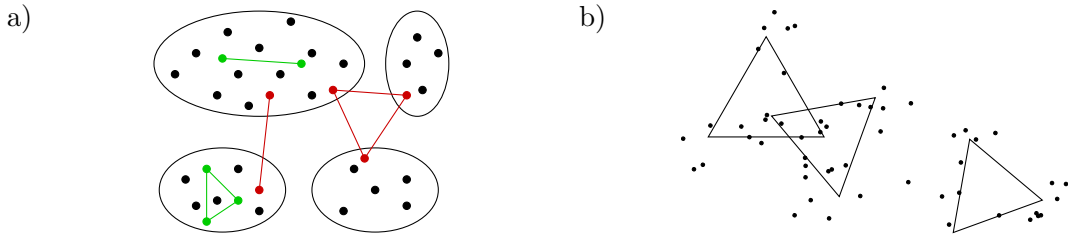


\section{Related Work}\label{sec: related work}

We choose to state the cubic correlation clustering problem 
(Definition~\ref{def: first def})
in the form of a non-linear binary program 
(Proposition~\ref{def: pb}), 
a special case of the higher-order correlation clustering problem introduced by \citet{kim-2014}.
Combinatorial optimization problems like this involving higher-order objective functions 
have applications as accurate mathematical abstractions of intrinsically non-linear tasks
\citep{AgaLimZelPerKriBel05,KapSpeReiSch16,kim-2014,LevKarAndKeu22,OchBro12,PulChiSadSut17}.

Like \citet{kim-2014}, we introduce costs only for node triples $t \in T$ such that $G[t]$ is complete. 
This means that we do not introduce costs for node triples for which the induced subgraph $G[t]$ is incomplete (possibly even disconnected).
Encoding whether three nodes that do not induce a complete subgraph belong to the same cluster would require additional variables and constraints \citep{hornakova-2017} that generalize the correlation clustering problem in a different manner than a cubic objective function does.

Being able to efficiently fix some variables to an optimal value and thus reduce the size of the problem can be valuable in practice.
Consequently, much effort has been devoted to studying partial optimality for non-convex problems 
\citep{AdaLasShe98,BilSut92,HamHanSim84,KapSpeReiSch13,KohEtAl08,shekhovtsov-2014,shekhovtsov-2015}.
In particular, we mention here the impressive application of partial optimality conditions to Potts models for image segmentation in which more than 95\% of the variables can be fixed \citep[Fig.~1]{shekhovtsov-2015}.
In contrast to the customary approach of considering a convex, usually linear, relaxation and establishing partial optimality conditions regarding the variables in the extended formulation, we study such conditions directly in the original discrete variable space.
Unlike the above-mentioned articles, we concentrate on taking advantage of the specific structure of the cubic correlation clustering problem.

To this end, we build on the works of \citet{alush-2012} and \citet{Lange-2018,Lange-2019} who establish partial optimality conditions for problems equivalent to correlation clustering with a linear objective function.
Regarding their terminology, we remark that the correlation clustering problem, the clique partitioning problem, and the multicut problem are equivalent if the objective functions are linear. 
The correlation clustering problem keeps attracting considerable attention by the community also in the context of approximation algorithms \citep{Vel22}.
The cubic correlation clustering problem we consider here generalizes the specialization to complete graphs of both the correlation clustering problem and the multicut problem.
We transfer all partial optimality conditions established by 
\citet{alush-2012} and \citet{Lange-2018,Lange-2019} 
for the correlation clustering problem and the multicut problem to the cubic correlation clustering problem.
In addition, we establish new results. 
Unlike \citet{Lange-2018}, we do not contribute partial optimality conditions for the max cut problem. 
\section{Preliminaries}\label{section:preliminaries}
In order to establish partial optimality conditions for the cubic correlation clustering problem (Definition~\ref{def: first def}), we state this problem in the form of the non-linear integer program introduced by \citet{kim-2014}:
\begin{proposition}\label{def: pb}
    The instance of the cubic correlation clustering problem (\ccp) with respect to a graph $G = (V, E)$, a set $T \subseteq \{ pqr \in \binom V3 \mid pq \in E, pr \in E, qr \in E \}$ and a function $c \colon T \cup E \cup \{\emptyset\} \to \mathbb{R}$ has the form of the cubic integer program
    \begin{align}
        \min_{x: E \to \{0, 1\}} \quad
        & c_\emptyset
        + \sum_{pq \in E} c_{pq} \, x_{pq} 
        + \sum_{pqr \in T} c_{pqr} \, x_{pq} \, x_{pr} \, x_{qr} 
        \label{eq: pb}
        \\ 
        \mathrm{s.t.:}\ \quad 
        & \forall (V_C, E_C) \in \mathrm{cycles}(G) \ 
        \forall e' \in E_C \colon \quad
        \sum_{e \in E_C \setminus \{e'\}}x_e - x_{e'} \leq \vert E_C\vert - 2 \ ,
        \label{eq:def-ccp}
    \end{align}
    where $\mathrm{cycles}(G)$ is the set containing all cycles in $G$.
\end{proposition}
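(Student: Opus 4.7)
The plan is to construct an objective-preserving bijection between $P_G$ and the set of $x \colon E \to \{0,1\}$ feasible for \eqref{eq: pb}--\eqref{eq:def-ccp}. Given a clustering $\Pi \in P_G$, define $x^\Pi$ by $x^\Pi_{pq} = 1$ iff $p$ and $q$ belong to the same cluster of $\Pi$. Conversely, given a feasible $x$, define $\Pi^x$ to be the set of connected components of the subgraph $(V, \{e \in E \colon x_e = 1\})$; this is automatically a clustering of $G$ in the sense of the excerpt.

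First I would verify that $x^\Pi$ satisfies every cycle inequality. Fix a cycle $(V_C, E_C)$ in $G$ and an edge $e' = pq \in E_C$. If $x^\Pi_{pq} = 1$, the inequality reduces to $\sum_{e \in E_C \setminus \{e'\}} x^\Pi_e \leq |E_C| - 1$, which holds trivially. If $x^\Pi_{pq} = 0$, then $p, q$ lie in distinct clusters of $\Pi$, so traversing the path $E_C \setminus \{e'\}$ from $p$ to $q$ must cross a cluster boundary, producing at least one edge $e$ with $x^\Pi_e = 0$ and yielding the required bound.

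The main obstacle is the converse direction: showing $x^{\Pi^x} = x$ on all of $E$, i.e.\ that the cycle inequalities are strong enough to propagate co-clustering across chords. One implication is immediate: if $pq \in E$ with $x_{pq} = 1$, then $p, q$ share a component and $x^{\Pi^x}_{pq} = 1$. For the other implication, suppose $pq \in E$ and $p, q$ lie in a common component; pick a path $p = v_0, v_1, \ldots, v_k = q$ in $(V, \{e \colon x_e = 1\})$, which together with $pq$ forms a cycle $C$ in $G$. Applying \eqref{eq:def-ccp} to $C$ with $e' = pq$ gives $k - x_{pq} \leq k - 1$, forcing $x_{pq} = 1$. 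That $\Pi^{x^\Pi} = \Pi$ is immediate from the definition of $x^\Pi$ and the required connectedness of each $R \in \Pi$.

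It remains to observe that the objectives agree under the bijection. Since $x^\Pi_{pq} = 1$ iff $\{p, q\} \subseteq R$ for some $R \in \Pi$, we obtain $\sum_{pq \in E} c_{pq} \, x^\Pi_{pq} = \sum_{R \in \Pi} \sum_{e \in E \cap \binom{R}{2}} c_e$. Likewise $x^\Pi_{pq} \, x^\Pi_{pr} \, x^\Pi_{qr} = 1$ iff the three pairs are pairwise co-clustered, which by the partition property is equivalent to $\{p, q, r\} \subseteq R$ for some $R \in \Pi$, matching the cubic term. The additive constant $c_\emptyset$ coincides on both sides, completing the reformulation of \eqref{eq:problem} as \eqref{eq: pb}--\eqref{eq:def-ccp}.
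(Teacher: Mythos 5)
Your proof is correct and takes essentially the same route as the paper: the paper defines the identical bijection $\Pi \mapsto x^\Pi$ and simply cites Chopra and Rao (1993) for the fact that the cycle inequalities characterize exactly the partition-induced vectors, whereas you prove that characterization directly (both directions are sound, including the chord argument that forces $x_{pq}=1$ along a cycle closed by an intra-component edge). The only detail worth noting is that the path you close into a cycle should be taken simple, which is always possible within a connected component.
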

\begin{proof}
    For each graph partition $\Pi\in P_G$ and every edge $pq \in E$, let $x_{pq} = 1$ if and only if $p$ and $q$ are in the same set of $\Pi$.
    This establishes a one-to-one relation between the set $P_G$ of all graph partitions of $G$ and the feasible set $X_G$ of all $x: E \to \{0, 1\}$ that satisfy the above inequalities \eqref{eq:def-ccp}, see \citep{chopra-1993}.
    Under this bijection, the objective functions of Definition~\ref{def: first def} and Proposition~\ref{def: pb}
    are equivalent.
\end{proof}
\vspace{6pt}

Below, let \csp{G}{c} denote this instance of the problem, $\phi_c$ its objective function, and $X_G$ its feasible set, i.e.,~the set of all $x: E \to \{0, 1\}$ that satisfy the inequalities~\eqref{eq:def-ccp}. 


\label{section:improving-maps}
Our main technique is the construction of improving maps \citep{shekhovtsov-2013}, which is based on the following preliminary notions.
\begin{definition}
	Let $X \neq \emptyset$, $\phi\colon X \to \mathbb{R}$ and $\sigma \colon X \to X$. If for every $x \in X$, we have $\phi(\sigma(x)) \leq \phi(x)$, then
	$\sigma$ is called \emph{improving} for the problem $\min_{x\in X}\phi(x)$.
\end{definition}
\begin{proposition}
	\label{lemma:persistency-predicate}
	Let $X \neq \emptyset$, $\phi \colon X \to \mathbb{R}$ and $\sigma \colon X \to X$ an improving map. Moreover, let $Q\subseteq X$. 
	If, for every $x \in X$, $\sigma(x) \in Q$,
	then there is an optimal solution $x^*$ to $\min_{x\in X}\phi(x)$ such that $x^*\in Q$.
\end{proposition}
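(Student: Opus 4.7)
The plan is to start from \emph{any} minimizer of $\phi$ over $X$ (which exists because the cubic correlation clustering instances at hand have a finite feasible set, though more generally one could assume the minimum is attained), and then push it into $Q$ by applying $\sigma$ once. Concretely, let $x_0 \in X$ be an optimal solution, i.e., $\phi(x_0) \leq \phi(x)$ for all $x \in X$, and define $x^{*} := \sigma(x_0)$. By hypothesis, $x^{*} \in Q$.

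It remains to verify that $x^{*}$ is itself optimal. Since $\sigma$ is improving, we have $\phi(x^{*}) = \phi(\sigma(x_0)) \leq \phi(x_0)$. Combined with the optimality of $x_0$, which forces $\phi(x_0) \leq \phi(x^{*})$, this yields $\phi(x^{*}) = \phi(x_0) = \min_{x \in X} \phi(x)$. Hence $x^{*}$ is an optimal solution lying in $Q$, as required.

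There is essentially no obstacle here: the argument uses only the defining inequality of an improving map and the containment condition $\sigma(X) \subseteq Q$. The only subtle point worth mentioning is that the statement implicitly assumes existence of a minimizer; in our setting $X = X_G$ is finite, so this is automatic, but for completeness one could either state finiteness explicitly or phrase the conclusion as "if an optimal solution exists, then one can be chosen in $Q$".
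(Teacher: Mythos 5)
Your proof is correct and follows essentially the same argument as the paper: take an optimal $x_0$, apply $\sigma$ once, and use improvingness together with optimality of $x_0$ to conclude $\sigma(x_0)$ is optimal and lies in $Q$. Your remark about the implicit existence of a minimizer (automatic here since $X_G$ is finite) is a fair observation that the paper leaves unstated.
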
%
\begin{proof}
	Let $x^*$ be an optimal solution to $\min_{x\in X}\phi(x)$ such that $x^*\not\in Q$. Then $\sigma(x^*)$ is also an optimal solution to $\min_{x\in X}\phi(x)$ and $\sigma(x^*)\in Q$.
\end{proof}
\begin{corollary}
	\label{lemma:persistency-variable}
	Let $S \neq \emptyset$, $X \subseteq \{0, 1\}^S$, $\phi \colon X \to \mathbb{R}$ and $\sigma \colon X \to X$ an improving map. 
	Moreover, let $s \in S$ and $\beta \in \{0, 1\}$. 
	If for every $x \in X$, $\sigma(x)_s = \beta$,
	then there is an optimal solution $x^*$ to $\min_{x\in X}\phi(x)$ such that $x^*_s = \beta$.
\end{corollary}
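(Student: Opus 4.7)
The plan is to derive this corollary as an immediate specialization of \Cref{lemma:persistency-predicate}. The key observation is that the condition ``$x_s = \beta$'' for a fixed $s \in S$ and $\beta \in \{0,1\}$ naturally carves out a subset of $X \subseteq \{0,1\}^S$, namely
\[
Q := \{x \in X \mid x_s = \beta\} \subseteq X \, .
\]
With this choice of $Q$, the hypothesis of the corollary---that $\sigma(x)_s = \beta$ for every $x \in X$---translates exactly into the statement that $\sigma(x) \in Q$ for every $x \in X$, which is the hypothesis of \Cref{lemma:persistency-predicate}.

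I would therefore proceed as follows. First, fix $Q$ as above and verify that $Q \subseteq X$. Second, observe that since $\sigma$ maps $X$ into $X$ and satisfies $\sigma(x)_s = \beta$ for all $x \in X$, we have $\sigma(x) \in Q$ for all $x \in X$. Third, apply \Cref{lemma:persistency-predicate} to obtain an optimal solution $x^* \in Q$ of $\min_{x \in X}\phi(x)$. By definition of $Q$, this $x^*$ satisfies $x^*_s = \beta$, which is the claim.

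There is essentially no obstacle here; the statement is a restatement of \Cref{lemma:persistency-predicate} in the Boolean-cube setting with $Q$ chosen as a single-variable level set. The only thing worth being slightly careful about is the non-emptiness requirement on $X$ inherited from the proposition, which is assumed in the corollary as well, and the implicit fact that an optimal solution exists; both of these are transferred verbatim from the hypotheses of \Cref{lemma:persistency-predicate}, so no additional argument is required.
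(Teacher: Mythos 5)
Your proposal is correct and matches the paper's intent exactly: the corollary is stated without a separate proof precisely because it is the specialization of Proposition~\ref{lemma:persistency-predicate} to $Q = \{x \in X \mid x_s = \beta\}$, which is what you do. No further comment is needed.
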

Our construction starts from the elementary maps of \citet{Lange-2019}, i.e.~the map $\sigma_{\delta(U)}$ 
that cuts a set $U \subseteq V$ from its complement, and the map $\sigma_U$ that joins all sets intersecting with a set $U \subseteq V$:
\begin{definition}
	\label{def:elementary-cut-map}
	For any node 
	set $V$ and $U \subseteq V$,
	the \emph{elementary cut map} $\sigma_{\delta(U)}\colon \cp_G \to \cp_G$ is such that for all $x \in \cp_G$ and all $pq \in E$:
	\begin{equation}
		\sigma_{\delta(U)}(x)_{pq} := \begin{cases}
			0 & \textnormal{if $| \{p, q\} \cap U| = 1$} \\
			x_{pq} & \textnormal{otherwise}
		\end{cases}.
	\end{equation}
\end{definition}
\begin{lemma}
	For any node set $V$ and $U \subseteq V$, the map $\sigma_{\delta(U)}$ is well-defined, i.e., for all $x\in X_G$ we have that $\sigma_{\delta(U)}(x)\in X_G$.
\end{lemma}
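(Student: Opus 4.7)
The plan is to check that the cycle inequalities \eqref{eq:def-ccp} are preserved under $\sigma_{\delta(U)}$. Let $y = \sigma_{\delta(U)}(x)$. By construction $y_e \leq x_e$ for every $e \in E$, and $y_e = 0$ whenever $e \in \delta(U)$, where $\delta(U) = \{pq \in E \mid |\{p,q\}\cap U| = 1\}$ denotes the cut.

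Fix a cycle $(V_C, E_C)$ of $G$ and an edge $e' \in E_C$. The inequality to verify rearranges to: if $y_{e'} = 0$, then there exists $e \in E_C \setminus \{e'\}$ with $y_e = 0$. If $y_{e'} = 1$ the inequality is trivial since the left-hand side is at most $|E_C| - 1$. So assume $y_{e'} = 0$, and distinguish two cases according to why this is the case.

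First, suppose $x_{e'} = 0$. Then by the corresponding cycle inequality applied to $x \in X_G$ we have $\sum_{e \in E_C \setminus \{e'\}} x_e \leq |E_C| - 2$, so some $e \in E_C \setminus \{e'\}$ satisfies $x_e = 0$; since $y_e \leq x_e$, also $y_e = 0$, as required. Second, suppose $e' \in \delta(U)$. Traversing the cycle $(V_C, E_C)$ once and recording, for each vertex encountered, whether it lies in $U$ or in $V \setminus U$, one toggles state exactly when crossing an edge of $\delta(U)$; since we return to the starting vertex, the number of cycle edges in $\delta(U)$ is even. As $e' \in E_C \cap \delta(U)$, there is some further edge $e \in (E_C \setminus \{e'\}) \cap \delta(U)$, and by construction $y_e = 0$.

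The routine step is unpacking the cycle inequality into the implication above; the only substantive step is the parity argument showing that a cut intersects every cycle in an even number of edges, which is standard but is the one place the combinatorial structure of $\delta(U)$ is used. I do not anticipate any real obstacle beyond clearly separating the two cases.
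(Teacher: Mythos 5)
Your proof is correct and takes essentially the same approach as the paper's: a case split on whether $e'$ lies in $\delta(U)$, using the cycle inequality for $x$ in the first case and the fact that a cycle meets a cut in an even number of edges in the second (the paper phrases the latter as ``$e'$ would be the only cut edge of the cycle, contradiction''). The only cosmetic difference is that you verify the inequality directly where the paper argues by contradiction.
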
%
\begin{proof}
	Let $x\in X_G$  
    and let $x' = \sigma_{\delta(U)}(x)$. For the sake of contradiction, we assume that $x'\notin X_G$. Thus, there is a cycle $(V_C, E_C)\in \mathrm{cycles}(G)$ and an edge $e'\in E_C$ such that 
	$\sum_{e\in E_C\setminus \{e'\}} x'_e - x'_{e'} > \vert E_C\vert - 2.$
	This is only possible if $x'_e = 1$ for all $e\in E_C\setminus \{e'\}$ and $x'_{e'} = 0$. 
    By definition of $x'$, we have that $x'_e = x_e = 1$ for all $e\in E_C\setminus \{e'\}$ and all $e\in E_C\setminus \{e'\}$ are not cut by $U$. 
    For $e'$ we distinguish two cases:
	
	(1) If $e'$ is not cut by $U$, we have that $x'_{e'} = x_{e'}$.
    It follows that $0 = x'_{e'} = x_{e'}$. Thus 
	$\sum_{e\in E_C\setminus \{e'\}} x_e - x_{e'} > \vert E_C\vert - 2.$
	This is a violation of $x\in X_G$.
	
	(2) If $\lvert e'\cap U\rvert = 1$, namely $e'$ is cut by $U$, then $e'$ is the only edge in $E_C$ that has exactly one element in common with $U$. Therefore $(V_C, E_C)$ cannot be a cycle, which is a contradiction.
\end{proof}%
\begin{definition}
	\label{def:elementary-join-map}
	For any node sets $V$ and $U \subseteq V$,
	the \emph{elementary join map} $\sigma_U \colon \cp_G \to \cp_G$ is such that for all $x \in  \cp_G$ and all $pq \in E$:
	\begin{equation}
		\sigma_U(x)_{pq} := \begin{cases}
			1 & \textnormal{if $pq \in \tbinom U2 \cap E$} \\
			1 & \textnormal{if there is a $pq$-path $(V_P, E_P)$ such that $\forall e \in E_P \colon x_e = 1 \vee e\in \tbinom U2 \cap E$}\\
			x_{pq} & \textnormal{otherwise}
		\end{cases}
	\end{equation}
\end{definition}%
\begin{lemma}
	For any node set $V$ and $U \subseteq V$, the map $\sigma_{U}$ is well-defined, i.e., for all $x\in X_G$ we have $\sigma_{U}(x)\in X_G$.
\end{lemma}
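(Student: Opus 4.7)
The plan is to mirror the contradiction argument used for the cut map. Fix $x \in X_G$ and set $x' := \sigma_U(x)$. Assume for contradiction that $x' \notin X_G$, so there exist a cycle $(V_C, E_C) \in \mathrm{cycles}(G)$ and an edge $e' = pq \in E_C$ with $\sum_{e \in E_C \setminus \{e'\}} x'_e - x'_{e'} > |E_C| - 2$. As in the previous lemma, this forces $x'_e = 1$ for every $e \in E_C \setminus \{e'\}$ and $x'_{pq} = 0$.

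The key auxiliary object is the subgraph $H := (V, E_H)$ with edge set
\begin{equation*}
E_H := \{e \in E : x_e = 1\} \cup \left(\tbinom{U}{2} \cap E\right).
\end{equation*}
Inspecting Definition~\ref{def:elementary-join-map}, one sees that $\sigma_U(x)_{ab} = 1$ if and only if $a$ and $b$ lie in the same connected component of $H$ (the first and third cases place $ab$ directly in $E_H$, while the second case exhibits an $ab$-path in $H$). So $x'_{pq} = 0$ tells us that $p$ and $q$ lie in \emph{different} components of $H$.

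On the other hand, for each $e = ab \in E_C \setminus \{e'\}$, we have $x'_{ab} = 1$, so by the equivalence above, $a$ and $b$ lie in the same component of $H$. Concatenating these connections along the $pq$-path $(V_C, E_C \setminus \{e'\})$ yields a walk in $H$ from $p$ to $q$, and therefore $p$ and $q$ lie in the same component of $H$ — a contradiction. Extracting a simple $pq$-path $(V_P, E_P)$ from this walk produces exactly the path witnessing $\sigma_U(x)_{pq} = 1$ required by Definition~\ref{def:elementary-join-map}.

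The only mildly subtle point is the slightly recursive-looking second case in the definition of $\sigma_U$: an edge $ab$ can satisfy $x'_{ab} = 1$ without $ab$ itself being in $E_H$. Encapsulating the definition as ``$x'_{ab} = 1$ iff $a,b$ are in the same component of $H$'' eliminates this subtlety and makes the concatenation step transparent; this reformulation is the main conceptual step of the proof, after which the argument is a direct analogue of the well-definedness proof for $\sigma_{\delta(U)}$.
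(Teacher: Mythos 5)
Your proof is correct, but it takes a genuinely different route from the paper's. The paper argues by contradiction on a violating cycle $(V_C,E_C)$ chosen to minimize the number of edges outside $\tbinom U2$ that the join map flips from $0$ to $1$; it then performs a symmetric-difference surgery with the witnessing path to strictly decrease that count, eventually reducing to a cycle on which $x$ itself violates the inequality \eqref{eq:def-ccp}. You instead observe that $\sigma_U(x)_{ab}=1$ holds for $ab\in E$ exactly when $a$ and $b$ lie in the same connected component of the auxiliary graph $H=(V,\,\{e\in E: x_e=1\}\cup(\tbinom U2\cap E))$ --- a characterization that is easy to verify from Definition~\ref{def:elementary-join-map}, since $x_{ab}=1$ already places $ab$ in $E_H$ --- and then the contradiction is immediate: the edges of $E_C\setminus\{e'\}$ chain $p$ to $q$ through components of $H$, so $p$ and $q$ are connected in $H$ and the second case of the definition forces $x'_{pq}=1$. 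This eliminates the minimality induction and the symmetric-difference construction entirely, reducing the whole argument to transitivity of connectivity; the paper's approach, while heavier, works directly with cycle inequalities and does not require isolating the component characterization as a separate claim. Both proofs are sound; yours is shorter and, in my view, more transparent.
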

\begin{proof}
	Let $x\in X_G$ and let $x' = \sigma_{U}(x)$. For the sake of contradiction, we assume that $x'\notin X_G$. Thus there is a cycle $(V_C, E_C)\in \mathrm{cycles}(G)$ and an edge $f\in E_C$ such that 
	\begin{equation}
		\label{eq:conflicted-cycle-well-definedness-join-map}
		\sum_{e\in E_C\setminus \{f\}} x'_e - x'_{f} > \vert E_C\vert - 2\enspace.
	\end{equation}
	Therefore $x'_e = 1$ for all $e\in E_C\setminus \{f\}$ and $x'_{f} = 0$. Moreover $x_f = 0$ since $x' \geq x$ by definition of $\sigma_U$.
	
	For any $E'\subseteq E$, let $E_{0 \to 1}(E') := \{e\in E'\mid x_e = 0 , x'_e = 1\}\setminus \tbinom U2$. 
    Since $x'_f = 0$ we have that $f\notin E_{0\to 1}(E_C)$.
	Next we show that $E_{0\to 1}(E_C) = \emptyset$. 
	For the sake of contradiction, let us assume that $E_{0\to 1}(E_C) \neq \emptyset$. 
    We pick a cycle $(V_C, E_C) \in \mathrm{cycles}(G)$ with minimal $\lvert E_{0 \to 1}(E_C)\rvert$ and that fulfills~\eqref{eq:conflicted-cycle-well-definedness-join-map}. 
	We proceed to construct a different cycle $(V_{C'}, E_{C'})\in \mathrm{cycles}(G)$ such that $\lvert E_{0\to 1}(E_{C'}) \rvert = \lvert E_{0\to 1}(E_C)\rvert - 1$ that still fulfills ~\eqref{eq:conflicted-cycle-well-definedness-join-map}, which would lead to a contradiction. 
	Let $uv\in E_{0\to 1}(E_C)$ be any arbitrary edge. We know that $uv\in E_C\setminus \{f\}$. Therefore $x_{uv} = 0$ and $x'_{uv} = 1$ by definition of $E_{0\to 1}(E_C)$.
	Then, according to the definition of $\sigma_U$, there is a $uv$-path $(V_P, E_P)$ such that $x_e = 1$ for all $e\in E_P \setminus \tbinom U2$. 
	Let $(V_C', E_C')\in \mathrm{cycles}(G)$ be the symmetric difference of the two cycles $(V_C, E_C)$ and $(V_P, E_P \cup \{uv\})$, i.e., $E_{C'} = \left(E_{C}\setminus \left( E_P \cup \{uv\}\right)\right) \cup \left(\left(E_P \cup \{uv\}\right)\setminus E_C\right)$ and $V_{C'} = \{v\in V\mid \exists e\in E_{C'}\colon v\in e\}$. 
	As $E_{0\to 1}(E_P \cup \{uv\}) = \{uv\}$, we have that $E_{0\to 1}(\left(E_P \cup \{uv\}\right)\setminus E_C) = \emptyset$.
    Next, $E_{0\to 1}(E_{C'}) = E_{0\to 1}(E_C\setminus \left( E_P \cup \{uv\}\right)) \cup E_{0\to 1}(\left(E_P \cup \{uv\}\right)\setminus E_C) = E_{0\to 1}(E_C\setminus \left( E_P \cup \{uv\}\right)) = E_{0\to 1}(E_C) \setminus \{uv\}$. 
    The last equality follows from the fact that $x_e = 1$ for all $e \in E_P \setminus \tbinom U2$.
    Therefore $\lvert E_{0\to 1}(E_{C'})\rvert = \lvert E_{0\to 1}(E_C)\rvert - 1$ which contradicts minimality of $(V_C, E_C)$ with respect to $\vert E_{0\to 1}(E_C)\vert$. 
	Thus we must have $E_{0\to 1}(E_C) = \emptyset$.
	
	Since $E_{0\to 1}(E_C) = \emptyset$ and \eqref{eq:conflicted-cycle-well-definedness-join-map} holds, 
    we have that $x_e = 1$ for all $e \in E_C\setminus \tbinom U2$. Thus $(V_{C}, E_C\setminus \{f\})$ is a path connecting the endpoints of $f$ such that $x_e = 1$ for all $e\in \left(E_{C}\setminus \{f\}\right)\setminus \tbinom U2$. Therefore
	$\sum_{e\in E_C\setminus \{f\}}x_e - x_f = \lvert E_C\rvert - 1 > \lvert E_C\rvert - 2,$
	which contradicts to $x\in X_G$.
\end{proof}
\section{Partial Optimality Conditions}
\label{section:partial-optimality-criteria}
In this section, we establish partial optimality conditions for the cubic correlation clustering problem by constructing improving maps, starting from the elementary maps $\sigma_{\delta(U)}$ and $\sigma_U$ defined in Section~\ref{section:preliminaries}.
For simplicity, we introduce some notation.
For any $r \in \mathbb{R}$, let $r^\pm := \max\{0, \pm r\}$.
For any function $f\colon X \to Y$ and any $X'\subseteq X$, let $f\vert_{X'}\colon X' \to Y$ denote the restriction of $f$ to $X'$.
For any  
$U, U', U'' \subseteq V$, let
\begin{gather}
	\delta(U, U') := \left\{pq \in E \mid p \in U \land q \in U' \right\} \quad , \quad
	\delta(U) := \delta(U, V \setminus U), \\
	T_{UU'U''} := \left\{pqr \in T \mid p \in U \land q \in U' \land r \in U'' \right\}.
\end{gather}
For $I(G) := T \cup E \cup \{\emptyset\}$ and any $c \colon I(G) \to \mathbb{R}$, let 
\begin{gather}
	E^\pm := \left\{ pq \in E \mid c_{pq} \gtrless 0 \right\} \quad, \quad
	T^\pm := \left\{ pqr \in T \mid c_{pqr} \gtrless 0 \right\}, \\
	T_{E'} := \left\{ pqr \in T \mid E' \cap \tbinom{pqr}{2} \neq \emptyset\right\} \quad \forall E	' \subseteq E 
	\enspace .	
\end{gather}
For any subset $V_H \subseteq V$, let
$E_H = E \cap \tbinom{V_H}2, 
T_H = T \cap \tbinom{V_H}3.$
\subsection{Cut Conditions}
\label{section:partial-optimality-criteria-cuts}
Here, we establish partial optimality conditions that imply the existence of an optimal solution $x^*$ to \csp{G}{c} such that $x^*_{ij} = 0$ for some $ij \in E$ or $x^*\in \{x\in \cp_G \mid x_{ij}x_{ik}x_{jk} = 0\}$ for some $ijk \in T$.
The following Proposition~\ref{lemma:persistency-subset-separation} generalizes Theorem~1 of \citet{alush-2012} to cubic objective functions. 
\begin{proposition}
	\label{lemma:persistency-subset-separation}	
	Let $G = (V, E)$ a graph and $c \in \mathbb{R}^{I(G)}$. 
	If there exists $U \subseteq V$ such that
	\begin{align}
		\label{eq:edge-cut-condition-1}
		c_{pq} &\geq 0 \quad \forall pq\in \delta(U) \\
		\label{eq:edge-cut-condition-2}
		c_{pqr} &\geq 0 \quad \forall pqr\in T_{\delta(U)}
	\end{align}
	then there is an optimal solution $x^*$ to \csp{G}{c} such that $x^*_{ij} = 0$ for all $ij \in \delta(U)$.
\end{proposition}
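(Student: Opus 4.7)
The plan is to invoke the machinery of improving maps built in the previous section: I would verify that the elementary cut map $\sigma_{\delta(U)}$ is improving for $\csp{G}{c}$ under the stated sign conditions, and then apply Corollary~\ref{lemma:persistency-variable} edge-by-edge for $ij \in \delta(U)$.

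Concretely, I would fix an arbitrary $x \in X_G$ and set $x' := \sigma_{\delta(U)}(x)$. Well-definedness, i.e., $x' \in X_G$, is already established in the lemma preceding Definition~\ref{def:elementary-join-map}. So it suffices to compare the objectives $\phi_c(x)$ and $\phi_c(x')$. By Definition~\ref{def:elementary-cut-map}, the two vectors agree on every edge outside $\delta(U)$, while on every edge $pq \in \delta(U)$ we have $x'_{pq} = 0$. Therefore the difference of the linear terms is
\begin{equation}
\sum_{pq \in E} c_{pq}\, x_{pq} \;-\; \sum_{pq \in E} c_{pq}\, x'_{pq}
\;=\; \sum_{pq \in \delta(U)} c_{pq}\, x_{pq},
\end{equation}
which is nonnegative by hypothesis~\eqref{eq:edge-cut-condition-1}.

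For the cubic terms, I would partition $T$ into $T_{\delta(U)}$ and its complement. Triples $pqr \in T \setminus T_{\delta(U)}$ have all three pairs outside $\delta(U)$, so $x'$ and $x$ coincide on them and they contribute $0$ to the difference. For each $pqr \in T_{\delta(U)}$, at least one of the three pairs lies in $\delta(U)$, so the corresponding factor in $x'_{pq}\,x'_{pr}\,x'_{qr}$ is zero; thus
\begin{equation}
\sum_{pqr \in T} c_{pqr}\, x_{pq}\,x_{pr}\,x_{qr} \;-\; \sum_{pqr \in T} c_{pqr}\, x'_{pq}\,x'_{pr}\,x'_{qr}
\;=\; \sum_{pqr \in T_{\delta(U)}} c_{pqr}\, x_{pq}\,x_{pr}\,x_{qr},
\end{equation}
which is nonnegative by hypothesis~\eqref{eq:edge-cut-condition-2}. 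Adding the two differences and noting that $c_\emptyset$ cancels, we get $\phi_c(x) - \phi_c(x') \geq 0$, i.e., $\sigma_{\delta(U)}$ is improving.

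Finally, since $\sigma_{\delta(U)}(x)_{ij} = 0$ for every $x \in X_G$ and every $ij \in \delta(U)$, Corollary~\ref{lemma:persistency-variable} (applied once per such $ij$, or equivalently Proposition~\ref{lemma:persistency-predicate} with $Q := \{x \in X_G \mid x_{ij} = 0 \text{ for all } ij \in \delta(U)\}$) yields an optimal $x^*$ with $x^*_{ij} = 0$ for all $ij \in \delta(U)$. There is essentially no obstacle; the only subtlety worth stating carefully is the identification of exactly which triples are affected, namely those in $T_{\delta(U)}$, which is exactly the set whose costs are controlled by the assumption~\eqref{eq:edge-cut-condition-2}.
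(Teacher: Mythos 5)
Your proposal is correct and follows essentially the same route as the paper: show that $\sigma_{\delta(U)}$ is improving by observing that the objective difference reduces to $\sum_{pq\in\delta(U)}c_{pq}x_{pq}+\sum_{pqr\in T_{\delta(U)}}c_{pqr}x_{pq}x_{pr}x_{qr}\geq 0$ under the sign hypotheses, then invoke the persistency machinery. The only point worth being careful about is the final step: applying Corollary~\ref{lemma:persistency-variable} separately for each $ij\in\delta(U)$ would only give a (possibly different) optimal solution per edge, so the correct justification is the one in your parenthetical, namely Proposition~\ref{lemma:persistency-predicate} with $Q=\{x\in X_G\mid x_{ij}=0\ \forall ij\in\delta(U)\}$, which the single map $\sigma_{\delta(U)}$ lands in.
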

\begin{proof}
We define $\sigma \colon \cp_G \to \cp_G$ such that for all $x \in \cp_G$ we have 
\begin{equation}
\sigma(x) := \begin{cases}
x & \text{ if } x_{ij} = 0 \:\:	 \forall ij \in \delta(U) \\
\sigma_{\delta(U)}(x) & \textnormal{otherwise}
\end{cases}
\enspace .
\end{equation}
For any $x\in \cp_G$, let $x' = \sigma(x)$. 
Firstly, the map $\sigma$ is such that $x'_{ij} = 0$
for all $ij \in \delta(U)$. 
Secondly, for any $x\in \cp_G$ such that there exists $ij\in \delta(U)$ such that $x_{ij} = 1$, we have 
\begin{align}
	\phi_c(x') - \phi_c(x) & = - \smashoperator[r]{\sum_{pqr \in T_{\delta(U)}}} c_{pqr}x_{pq}x_{pr}x_{qr} - 
	\smashoperator[r]{\sum_{pq\in \delta(U)}} c_{pq}x_{pq} \leq - 
	\smashoperator[r]{\sum_{pqr\in T_{\delta(U)}\cap T^-}} c_{pqr} - 
	\smashoperator[r]{\sum_{pq\in \delta(U)\cap E^-}}c_{pq} = 0.
\end{align}
The last equality is due to the fact that those sums vanish by Assumptions~\eqref{eq:edge-cut-condition-1} and \eqref{eq:edge-cut-condition-2}.
Applying Corollary~\ref{lemma:persistency-variable} concludes the proof.
\end{proof}
\vspace{6pt}
This condition can be exploited: When satisfied for a subset $U$, \csp{G}{c} decomposes into two independent subproblems. Firstly,
\begin{equation}
	\min_{x\in \cp_G} \phi_c(x) 
	= 
	\min_{x\in \cp_{G[U]}} \phi_{c\vert_{I(G[U])}}(x) 
	+ 
	\min_{x\in \cp_{G[V \setminus U]}} \phi_{c\vert_{I(G[V \setminus U])}}(x)
	\enspace ,
\end{equation}
where we denote the subgraph of $G$ induced by $U$ by $G[U]$, for any $U \subseteq V$. 
Secondly, given solutions 
\begin{gather}
x' \in \argmin_{x \in \cp_{G[U]}} \phi_{c\vert_{I(G[U])}}(x), \quad x'' \in \argmin_{x \in \cp_{G[V \setminus U]}} \phi_{c\vert_{I(G[V \setminus U])}}(x)
\enspace ,
\end{gather}
an optimal solution to the problem $\min_{x \in \cp_G} \phi_c(x)$ is given by the vector $x\in \cp_G$ such that 
\begin{align}
x_{pq} = \begin{cases}
	x'_{pq} & \text{if\ } pq \in E \cap \tbinom U2 \\
	x''_{pq} & \text{if\ } pq \in E \cap \tbinom{V \setminus U}{2} \\
	0 & \text{if\ } pq\in \delta(U)
\end{cases}
\enspace .
\end{align}
The following Proposition~\ref{proposition:edge-cut-persistency}, together with Proposition~\ref{lemma:edge-join-persistency} further below, generalizes 
Theorem~1 of \citet{Lange-2019} to cubic objective functions. 
\begin{proposition}
	\label{proposition:edge-cut-persistency}
	Let $G = (V, E)$ a graph 
	and $c \in \mathbb{R}^{I(G)}$. 
	Moreover, let $ij \in E$. 
	If there exists $U \subseteq V$ such that $ij \in \delta(U)$ and 
	\begin{equation}
		\label{eq:assumption-edge-cut-inequality}
		c_{ij}^+ \geq \sum_{pqr\in T_{\delta(U)}} c_{pqr}^- + \sum_{pq\in \delta(U)} c_{pq}^-
		\enspace ,
	\end{equation}
	then there is an optimal solution $x^*$ to \csp{G}{c} such that $x^*_{ij} = 0$.
\end{proposition}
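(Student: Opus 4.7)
The plan is to construct an improving map $\sigma \colon X_G \to X_G$ such that $\sigma(x)_{ij} = 0$ for every $x \in X_G$, and then invoke Corollary~\ref{lemma:persistency-variable} with $s = ij$ and $\beta = 0$. The natural candidate, mirroring the construction in the proof of Proposition~\ref{lemma:persistency-subset-separation}, is
\begin{equation*}
\sigma(x) := \begin{cases}
x & \text{if } x_{ij} = 0 \\
\sigma_{\delta(U)}(x) & \text{otherwise}
\end{cases}
\enspace .
\end{equation*}
Well-definedness of $\sigma_{\delta(U)}$ was established earlier, and clearly $\sigma(x)_{ij} = 0$ for every $x$, because either $x_{ij}$ was already $0$ or $ij \in \delta(U)$ is cut by $\sigma_{\delta(U)}$.

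The remaining task is to show that $\sigma$ is improving. When $x_{ij} = 0$ there is nothing to check, so consider $x \in X_G$ with $x_{ij} = 1$ and set $x' := \sigma_{\delta(U)}(x)$. Since $x'$ differs from $x$ only on edges in $\delta(U)$ (and triples whose induced edges intersect $\delta(U)$), the difference of objective values reduces to
\begin{equation*}
\phi_c(x') - \phi_c(x) = -\sum_{pq\in\delta(U)} c_{pq}\, x_{pq} - \sum_{pqr\in T_{\delta(U)}} c_{pqr}\, x_{pq}\, x_{pr}\, x_{qr}.
\end{equation*}
The key step is then to bound each of the two sums above. For the edge sum I would isolate the contribution of $ij$, writing $-c_{ij} x_{ij} = -c_{ij} = -c_{ij}^+ + c_{ij}^-$, and bound every remaining term by the corresponding $c_{pq}^-$. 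For the triple sum, every term is bounded above by $c_{pqr}^-$ since $x_{pq}x_{pr}x_{qr}\in\{0,1\}$. Combining these yields
\begin{equation*}
\phi_c(x') - \phi_c(x) \leq -c_{ij}^+ + \sum_{pq\in\delta(U)} c_{pq}^- + \sum_{pqr\in T_{\delta(U)}} c_{pqr}^-,
\end{equation*}
which is nonpositive by the hypothesis~\eqref{eq:assumption-edge-cut-inequality}.

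There is no real obstacle here beyond the bookkeeping of separating the $c_{ij}$ term and applying the identity $-c = -c^+ + c^-$ correctly; the main content of the argument is the same improving-map construction as in Proposition~\ref{lemma:persistency-subset-separation}, but now the positive quantity $c_{ij}^+$ absorbs the negative edge and triple contributions along the cut instead of requiring all of them to be nonnegative individually. Once the inequality above is in hand, Corollary~\ref{lemma:persistency-variable} yields an optimal $x^*$ with $x^*_{ij}=0$, completing the proof.
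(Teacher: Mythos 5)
Your proposal is correct and follows essentially the same route as the paper's proof: the same case-split map built from $\sigma_{\delta(U)}$, the same expression for $\phi_c(x')-\phi_c(x)$, the same isolation of the $c_{ij}$ term via $-c_{ij}=-c_{ij}^+ + c_{ij}^-$, and the same appeal to Corollary~\ref{lemma:persistency-variable}. No gaps.
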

\begin{proof}
Let  
$\sigma \colon \cp_G \to \cp_G$ be constructed as 
\begin{equation}
\sigma(x) := \begin{cases}
x & \textnormal{if $x_{ij} = 0$} \\
\sigma_{\delta(U)}(x) & \textnormal{otherwise}
\end{cases}
\enspace .
\end{equation}
For any $x\in \cp_G$, let $x' = \sigma(x)$. 
First of all, the map $\sigma$ is such that $x'_{ij} = 0$ for all $x\in \cp_G$. 
Next, for any $x\in \cp_G$ such that $x_{ij} = 1$, we have 
\begin{align}
&\phi_c(x') - \phi_c(x) 
= -c_{ij} -\sum_{pqr\in T_{\delta(U)}} c_{pqr}x_{pq} x_{pr}x_{qr} - \sum_{\substack{pq\in \delta(U) : \\ pq \neq ij}}c_{pq}x_{pq} \\
\leq &-c_{ij} + \sum_{pqr\in T_{\delta(U)}}c_{pqr}^- + \sum_{\substack{pq \in \delta(U) : \\ pq \neq ij}}c_{pq}^- = -c_{ij}^+ + 
\smashoperator[r]{\sum_{pqr\in T_{\delta(U)}}}c_{pqr}^- + \sum_{pq \in \delta(U)}c_{pq}^- \leq 0
\enspace .
\end{align}
The last inequality follows from Assumption~\eqref{eq:assumption-edge-cut-inequality}.
We conclude the proof by applying Corollary~\ref{lemma:persistency-variable}.
\end{proof}
\vspace{6pt}
The following Proposition~\ref{lemma:persistency-triplet-cut} establishes a partial optimality condition that implies the existence of an optimal solution $x^*$ such that $x^*_{ij}x^*_{ik}x^*_{jk} = 0$ for some $ijk\in T$. 
Note that one cannot conclude which of the variables $x^*_{ij}$, $x^*_{ik}$ or $x^*_{jk}$ equal zero.
\begin{proposition}
	\label{lemma:persistency-triplet-cut}
	Let $G = (V, E)$ a graph %
	and $c\in \mathbb{R}^{I(G)}$. 
	Moreover, let $ijk\in T$ and $U \subseteq V$ such that $ij, ik \in \delta(U)$.
	If
	\begin{align}
		c_{ijk}^+ + c_{ij}^+ + c_{ik}^+
		\label{eq:triplet-cut-condition}
		\geq &\sum_{pqr\in T_{\delta(U)}} c_{pqr}^-+ \sum_{pq\in \delta(U)} c_{pq}^-
		\enspace ,
	\end{align}
	there is an optimal solution $x^*$ to \csp{G}{c} such that $x^*_{ij}x^*_{ik}x^*_{jk} = 0$.
\end{proposition}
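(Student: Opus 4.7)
The plan is to construct a conditional improving map which, in the problematic case $x_{ij} x_{ik} x_{jk} = 1$, invokes the elementary cut map $\sigma_{\delta(U)}$. Since $ij \in \delta(U)$, applying $\sigma_{\delta(U)}$ zeros out $x_{ij}$, which is enough to make the triple product vanish.

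Concretely, I would define
\begin{equation}
	\sigma(x) := \begin{cases}
		x & \textnormal{if $x_{ij} x_{ik} x_{jk} = 0$} \\
		\sigma_{\delta(U)}(x) & \textnormal{otherwise}
	\end{cases}
\end{equation}
and observe that $\sigma(x) \in X_G$ by the well-definedness of $\sigma_{\delta(U)}$, and that $\sigma(x)_{ij} \sigma(x)_{ik} \sigma(x)_{jk} = 0$ for every $x \in X_G$ (in the second case, because $\sigma_{\delta(U)}(x)_{ij} = 0$). The goal then reduces, via Proposition~\ref{lemma:persistency-predicate} with $Q = \{x \in X_G \mid x_{ij} x_{ik} x_{jk} = 0\}$, to showing that $\sigma$ is improving.

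The only nontrivial case is $x_{ij} = x_{ik} = x_{jk} = 1$. Here the only cost terms that change under $\sigma_{\delta(U)}$ are those associated with edges in $\delta(U)$ and triples in $T_{\delta(U)}$, so
\begin{equation}
	\phi_c(\sigma(x)) - \phi_c(x) = -\smashoperator[r]{\sum_{pq\in\delta(U)}} c_{pq} x_{pq} - \smashoperator[r]{\sum_{pqr\in T_{\delta(U)}}} c_{pqr} x_{pq} x_{pr} x_{qr} .
\end{equation}
I would then split off the terms corresponding to $ij, ik \in \delta(U)$ and $ijk \in T_{\delta(U)}$, which contribute exactly $-c_{ij} - c_{ik} - c_{ijk}$ since all relevant indicators equal $1$ by hypothesis; and bound each remaining contribution using $-c_e x_e \leq c_e^-$ and $-c_{pqr} x_{pq} x_{pr} x_{qr} \leq c_{pqr}^-$ for Boolean arguments. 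Recombining, and using $-c_e = c_e^- - c_e^+ \leq -c_e^+ + c_e^-$, yields
\begin{equation}
	\phi_c(\sigma(x)) - \phi_c(x) \leq -(c_{ij}^+ + c_{ik}^+ + c_{ijk}^+) + \smashoperator[r]{\sum_{pq\in\delta(U)}} c_{pq}^- + \smashoperator[r]{\sum_{pqr\in T_{\delta(U)}}} c_{pqr}^- ,
\end{equation}
which is non-positive by Assumption~\eqref{eq:triplet-cut-condition}.

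The main obstacle is bookkeeping: I must be careful that after isolating the three distinguished contributions $-c_{ij}, -c_{ik}, -c_{ijk}$, the remaining edge and triple sums are bounded by $c^-$-totals over the \emph{full} sets $\delta(U)$ and $T_{\delta(U)}$ (the extra $c_{ij}^-, c_{ik}^-, c_{ijk}^-$ pieces are harmless because they arise in combination with $-c_{ij}^+, -c_{ik}^+, -c_{ijk}^+$ from the identity $-c_e = c_e^- - c_e^+$). Once the accounting matches the right-hand side of the hypothesis, the proof closes immediately.
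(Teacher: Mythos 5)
Your proposal is correct and follows essentially the same route as the paper: the same conditional map switching to $\sigma_{\delta(U)}$ when $x_{ij}x_{ik}x_{jk}=1$, the same isolation of the terms $-c_{ij}-c_{ik}-c_{ijk}$ followed by bounding the remaining $\delta(U)$ and $T_{\delta(U)}$ contributions by their $c^-$ totals, the same conversion via $-c = c^- - c^+$, and the same conclusion through Proposition~\ref{lemma:persistency-predicate} with $Q = \{x\in X_G \mid x_{ij}x_{ik}x_{jk} = 0\}$. No gaps.
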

\begin{proof}
We define $\sigma \colon \cp_G\to \cp_G$ as 
\begin{equation}
\sigma(x) := \begin{cases}
x & \textnormal{if $x_{ij}x_{ik}x_{jk} = 0$}\\
\sigma_{\delta(U)}(x) & \textnormal{otherwise}
\end{cases}.
\end{equation}
For any $x\in \cp_G$, we denote $\sigma(x)$ by $x'$. 
Firstly, we observe that 
$x'_{ij}x'_{ik}x'_{jk} = 0$ for all $x\in \cp_G$. 
Secondly, for any $x\in \cp_G$ such that $x_{ij}x_{ik}x_{jk} = 1$, we have
\small{
\begin{align}
&\phi_c(x') - \phi_c(x) = 
- \smashoperator[r]{\sum_{pqr \in T_{\delta(U)}}} c_{pqr}x_{pq}x_{pr}x_{qr} - 
\smashoperator[r]{\sum_{pq\in \delta(U)}} c_{pq}x_{pq} \\
& \leq -c_{ijk} - c_{ij} - c_{ik} + 
\smashoperator[r]{\sum_{\substack{pqr\in T_{\delta(U)} : \\ pqr \neq ijk}}} c_{pqr}^- + 
\smashoperator[r]{\sum_{\substack{pq\in \delta(U) : \\ pq\not\in \{ij, ik\}}}} c_{pq}^- 
= -c_{ijk}^+ - c_{ij}^+ - c_{ik}^+ + 
\smashoperator[r]{\sum_{pqr\in T_{\delta(U)}}} c_{pqr}^- + 
\smashoperator[r]{\sum_{pq\in \delta(U)}} c_{pq}^- \leq 0.
\end{align}%
}%
\normalsize
The last inequality holds by
Assumption~\eqref{eq:triplet-cut-condition}.
Applying Proposition~\ref{lemma:persistency-predicate} with $Q = \{x\in \cp_G \mid x_{ij}x_{ik}x_{jk} = 0\}$ concludes the proof.
\end{proof}
\subsection{Join Conditions}
\label{section:partial-optimality-criteria-joins}
Next, we establish partial optimality conditions that imply the existence of an optimal solution $x^*$ to \csp{G}{c} such that $x^*_{ij} = 1$ for some $ij\in E$. 
This property can be applied to simplify a given instance by joining the nodes $i$ and~$j$. 
Below, Proposition~\ref{lemma:edge-join-persistency} transfers a result of \citet{Lange-2019} to the cubic correlation clustering problem. 
Lemma~\ref{lemma:auxiliary-edge-join} is an auxiliary lemma applied in the proof of Proposition~\ref{lemma:edge-join-persistency}.
\begin{lemma}
	\label{lemma:auxiliary-edge-join}
	Let $G = (V, E)$ be a graph, $c\in \mathbb{R}^{I(G)}$, $U\subseteq V$ and $ij\in \delta(U)$. Moreover, let $x\in X_G$ such that $x_{ij} = 0$ and 
	$\bar x = \left(\sigma_{\{i,j\}}\circ \sigma_{\delta(U)}\right)(x).$
	Then, $\bar x_{pq} = x_{pq}$ for all $pq\notin\delta(U)$.
\end{lemma}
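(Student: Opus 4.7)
The strategy is to unfold the composition edge by edge and reduce the claim to a statement about paths crossing the cut $\delta(U)$. Let $x' := \sigma_{\delta(U)}(x)$. By Definition~\ref{def:elementary-cut-map}, $x'_e = x_e$ for every $e \notin \delta(U)$ and $x'_e = 0$ for every $e \in \delta(U)$. Since $\bar x = \sigma_{\{i,j\}}(x')$, Definition~\ref{def:elementary-join-map} implies $\bar x_e \geq x'_e$ for all $e \in E$, so in particular $\bar x_{pq} \geq x_{pq}$ for every $pq \notin \delta(U)$. It therefore suffices to prove $\bar x_{pq} \leq x_{pq}$ for those edges, and the only nontrivial subcase is $x_{pq} = 0$.

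Fix $pq \notin \delta(U)$ with $x_{pq} = 0$. Note $pq \neq ij$ since $ij \in \delta(U)$. By Definition~\ref{def:elementary-join-map}, the only way to have $\bar x_{pq} = 1$ is that there exists a $pq$-path $(V_P, E_P)$ in $G$ such that every $e \in E_P$ satisfies $x'_e = 1$ or $e = ij$. The key observation is that every edge $e \in E_P \cap \delta(U)$ has $x'_e = 0$ by construction of $x'$, hence must equal $ij$; as a simple path, the sequence contains $ij$ at most once.

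Without loss of generality suppose $p, q \in U$ (the case $p, q \in V \setminus U$ is symmetric). Since $p$ and $q$ lie on the same side of the cut $\delta(U)$, the path must cross $\delta(U)$ an even number of times, and since $ij$ is the only admissible crossing edge, the number of crossings is $0$. Consequently $E_P \subseteq E \cap \binom{U}{2}$ and every edge in the path satisfies $x_e = x'_e = 1$. Together with $pq \in E$ and $x_{pq} = 0$, this turns $(V_P, E_P \cup \{pq\})$ into a cycle in $G$ that violates the multicut inequality \eqref{eq:def-ccp} with $e' = pq$, contradicting $x \in X_G$. Thus no such path exists, and $\bar x_{pq} = x'_{pq} = x_{pq} = 0$.

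The main obstacle is the cut-crossing analysis: we have to argue that the single edge $ij$ of $\delta(U)$ that the join map makes "transparent" is nevertheless useless for a path whose endpoints lie on the same side of the cut, because a simple path cannot use $ij$ twice. The assumption $x_{ij} = 0$ plays no role in this particular lemma but is natural in the context of Proposition~\ref{lemma:edge-join-persistency}, where the lemma is invoked.
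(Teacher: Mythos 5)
Your proof is correct, and its skeleton matches the paper's: monotonicity of the join map disposes of the $1\to 0$ direction, and the $0\to 1$ direction is ruled out by showing that a path certifying the join would have to interact badly with the cut $\delta(U)$, on which $\sigma_{\delta(U)}(x)$ vanishes. The mechanics of that last step differ slightly. The paper decomposes the hypothetical $pq$-path at the edge $ij$ into a $p$-to-$i$ subpath and a $q$-to-$j$ subpath and observes that one of them must contain an edge of $\delta(U)\setminus\{ij\}$, giving an immediate contradiction. You instead run a parity count of cut crossings: since $ij$ is the only edge of $\delta(U)$ the path may use and a simple path uses it at most once, while $p$ and $q$ lie on the same side of the cut, the path cannot cross $\delta(U)$ at all; you then invoke the cycle inequalities of $X_G$ to exclude the resulting all-ones path closed by the zero edge $pq$. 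The two arguments buy roughly the same thing, but yours explicitly covers the sub-case in which the certifying path avoids $ij$ altogether (the paper's phrasing ``a path from $p$ to $i$ and from $q$ to $j$'' silently assumes the path passes through $ij$); the price is an extra appeal to the feasibility of $x$, which the paper's subpath decomposition does not need. Your closing remark that the hypothesis $x_{ij}=0$ is never used is also accurate.
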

\begin{proof}
	Let $\hat x = \sigma_{\delta(U)}(x)$. We have that $\hat x_{pq} = 0$ for all $pq \in\delta(U)$, and $\hat x_{pq} = x_{pq}$ for all $pq \notin \delta(U)$. Let us assume that there exists some $pq\notin\delta(U)$ such that $\bar x_{pq} \neq x_{pq}$. First, consider the case where $x_{pq} = \hat x_{pq} = 0$ and $\bar x_{pq} = 1$. Then, there must be a path from $p$ to $i$ and from $q$ to $j$ such that the edge variables along this path in $\hat x$ all have value 1. 
	Analogously, we could have a path from $p$ to $j$ and from $q$ to $i$ such that the edge variables along these paths in $\hat x$ all have value 1.
	Without loss of generality, we only consider the first case.
	Since at least one of these paths must contain an edge $e\in \delta(U) \setminus \{ ij \}$, it follows that $\hat x_e$ has value equal to zero. 
	This leads to a contradiction.
	Secondly, consider the case where $x_{pq} = \hat x_{pq} = 1$ and $\bar x_{pq} = 0$. 
	We observe that this case is impossible, as the elementary join map never sets to 0 a variable that was previously equal to 1 by definition.
\end{proof}
\begin{proposition}
	\label{lemma:edge-join-persistency}
	Let $G = (V, E)$ a graph 
	and $c \in \mathbb{R}^{I(G)}$. 
	Moreover, let $ij\in E$. 
	If there exists $U \subseteq V$ such that $ij \in \delta(U)$ and 
	\begin{align}
		2c_{ij}^- + \sum_{pqr\in T_{\{ij\}}} c_{pqr}^- 
		\label{eq:edge-join-inequality}
		\geq  \sum_{pqr\in T_{\delta(U)}}\vert c_{pqr}\vert + \sum_{pq\in \delta(U)}\vert c_{pq}\vert
		\enspace ,
	\end{align}
	there is an optimal solution $x^*$ to \csp{G}{c} such that $x^*_{ij} = 1$.
\end{proposition}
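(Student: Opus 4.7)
The plan is to mirror the structure used for Propositions~\ref{lemma:persistency-subset-separation} and~\ref{proposition:edge-cut-persistency}, but using a composition of the cut map $\sigma_{\delta(U)}$ followed by the join map $\sigma_{\{i,j\}}$ instead of the cut map alone. Specifically, I would define $\sigma \colon \cp_G \to \cp_G$ by
\[
\sigma(x) := \begin{cases} x & \text{if } x_{ij} = 1 \\ (\sigma_{\{i,j\}} \circ \sigma_{\delta(U)})(x) & \text{otherwise,} \end{cases}
\]
and verify first that $\sigma(x)_{ij} = 1$ for every $x \in \cp_G$ (trivial in the first case; in the second case $x_{ij} = 0$ implies $\sigma_{\delta(U)}(x)_{ij} = 0$, and then the join map at $\{i, j\}$ sets the $ij$-coordinate to $1$). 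Then an application of Corollary~\ref{lemma:persistency-variable} finishes the proof, provided $\sigma$ is improving.

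To show that $\sigma$ is improving, I only need to consider $x \in \cp_G$ with $x_{ij} = 0$ and bound $\phi_c(\bar x) - \phi_c(x)$ from above by zero, where $\bar x = \sigma(x)$. Lemma~\ref{lemma:auxiliary-edge-join} guarantees that $\bar x$ and $x$ agree on every edge outside $\delta(U)$, so the difference $\phi_c(\bar x) - \phi_c(x)$ decomposes into a sum over edges in $\delta(U)$ and a sum over triples in $T_{\delta(U)}$ (triples entirely outside $\delta(U)$ contribute nothing since all three of their edges are fixed). The $ij$-term contributes exactly $c_{ij}$, while each other edge $pq \in \delta(U) \setminus \{ij\}$ contributes at most $|c_{pq}|$, which together give $c_{ij} + \sum_{pq \in \delta(U) \setminus \{ij\}} |c_{pq}| = -2 c_{ij}^- + \sum_{pq \in \delta(U)} |c_{pq}|$.

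The main obstacle and the key step is the bound on the triple sum, which must be sharp enough to produce the term $\sum_{pqr \in T_{\{ij\}}} c_{pqr}^-$ appearing in Assumption~\eqref{eq:edge-join-inequality}. The crucial observation is that for any $pqr \in T_{\{ij\}}$ we have $x_{pq} x_{pr} x_{qr} = 0$ because $x_{ij} = 0$, so the change in the triple product is not in $\{-1, 0, 1\}$ but in $\{0, 1\}$. This yields the sharper bound $c_{pqr} \cdot (\bar x_{pq} \bar x_{pr} \bar x_{qr}) \leq c_{pqr}^+$ rather than the generic $|c_{pqr}|$, while triples in $T_{\delta(U)} \setminus T_{\{ij\}}$ still only admit the generic bound $|c_{pqr}|$. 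Splitting the sum accordingly and using $c_{pqr}^+ - |c_{pqr}| = -c_{pqr}^-$ converts the triple contribution into $\sum_{pqr \in T_{\delta(U)}} |c_{pqr}| - \sum_{pqr \in T_{\{ij\}}} c_{pqr}^-$.

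Assembling these bounds yields
\[
\phi_c(\bar x) - \phi_c(x) \leq -2 c_{ij}^- - \sum_{pqr \in T_{\{ij\}}} c_{pqr}^- + \sum_{pq \in \delta(U)} |c_{pq}| + \sum_{pqr \in T_{\delta(U)}} |c_{pqr}|,
\]
and Assumption~\eqref{eq:edge-join-inequality} immediately gives $\phi_c(\bar x) - \phi_c(x) \leq 0$. Thus $\sigma$ is an improving map with $\sigma(x)_{ij} = 1$ everywhere, and Corollary~\ref{lemma:persistency-variable} completes the argument.
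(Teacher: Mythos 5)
Your proposal is correct and follows essentially the same route as the paper's proof: the same composed map $\sigma_{\{i,j\}}\circ\sigma_{\delta(U)}$, the same appeal to Lemma~\ref{lemma:auxiliary-edge-join} to localize the change to $\delta(U)$ and $T_{\delta(U)}$, and the same sharpened bound $c_{pqr}^+$ for triples in $T_{\{ij\}}$ exploiting $x_{ij}=0$. No gaps.
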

\begin{proof}
Let $\sigma \colon \cp_G \to \cp_G$ such that for all $x \in \cp_G$, we have 
\begin{equation}
\sigma(x) := \begin{cases}
x & \textnormal{if $x_{ij} = 1$} \\
\left(\sigma_{\{i,j\}}\circ \sigma_{\delta(U)}\right)(x) & \textnormal{otherwise}
\end{cases}
\enspace .
\end{equation}
For any $x\in \cp_G$, let $x' = \sigma(x)$. 
The map $\sigma$ is such that $x'_{ij} = 1$ for all $x\in \cp_G$. 
We show that $\sigma$ is improving. 
In particular, let $x \in \cp_G$ such that $x_{ij} = 0$. 
By Lemma~\ref{lemma:auxiliary-edge-join} we have that $x'_{pq} = x_{pq}$ for all $pq \not\in \delta(U)$. 
Therefore,
\small{
\begin{align}
& \phi_c(x') - \phi_c(x) = \sum_{pqr\in T_{\delta(U)}}c_{pqr}\left(x'_{pq}x'_{pr}x'_{qr} - x_{pq}x_{pr}x_{qr}\right) + \sum_{pq\in \delta(U)}c_{pq}\left(x'_{pq} - x_{pq}\right) \\
& = \sum_{\substack{pqr\in T_{\delta(U)}: \\ pqr\notin T_{\{ij\}}}}c_{pqr}\left(x'_{pq}x'_{pr}x'_{qr} - x_{pq}x_{pr}x_{qr}\right) +\sum_{pqr\in T_{\{ij\}}}c_{pqr}x'_{pq}x'_{pr}x'_{qr}  + c_{ij} + \sum_{\substack{pq\in \delta(U) : \\ pq \neq ij}}c_{pq}\left(x'_{pq} - x_{pq}\right)\\
& \leq \; 
\smashoperator[r]{\sum_{\substack{pqr\in T_{\delta(U)}: \\ pqr \notin T_{\{ij\}}}}}\vert c_{pqr}\vert + 
\smashoperator[r]{\sum_{pqr\in T_{\{ij\}}}}c_{pqr}^+ + c_{ij} + 
\smashoperator[r]{\sum_{\substack{pq\in \delta(U) : \\ pq \neq ij}}}\vert c_{pq}\vert  
= 
\smashoperator[r]{\sum_{pqr\in T_{\delta(U)}}}\vert c_{pqr}\vert - 
\smashoperator[r]{\sum_{pqr\in T_{\{ij\}}}}c_{pqr}^- - 2c_{ij}^-  + 
\smashoperator[r]{\sum_{pq\in \delta(U)}}\vert c_{pq}\vert \leq  0 .
\end{align}
}%
\normalsize
The last inequality is due to Assumption~\eqref{eq:edge-join-inequality}.
\end{proof}
\vspace{6pt}
While Proposition~\ref{lemma:edge-join-persistency} is about deciding whether two adjacent nodes $i, j$ should end up together, Proposition~\ref{proposition:triplet-join} asks the same question for a triple $ijk$. 
\begin{proposition}
	\label{proposition:triplet-join}
	Let $G = (V, E)$ a graph 
	and $c\in \mathbb{R}^{I(G)}$. 
	Moreover, let $ijk\in T$ and $U \subseteq V$ such that $ij, ik \in \delta(U)$. If 
	\begin{align}
		\smallmath{
			2c_{ijk}^- + 2c_{ij}^- + 2c_{ik}^- + c_{jk}^- 
		- \sum_{\mathclap{pqr\in T}}c_{pqr}^+ - 
		\sum_{\mathclap{pq\in E}}c_{pq}^+  
		+ 
		\min_{\substack{x\in \cp_{G[\{i, j, k\}]} : \\ x_{ij}x_{ik}x_{jk} = 0}}\hspace{-0.25ex}
		\smashoperator[r]{\sum_{pq\in \tbinom{ijk}{2}}}c_{pq} x_{pq} 
		\label{eq:assumption-triplet-join}
		\geq 
		\smashoperator[r]{\sum_{\mathclap{pqr \in T_{\delta(U)}}}}c_{pqr}^- +  
		\smashoperator[r]{\sum_{\mathclap{pq \in \delta(U)}}}c_{pq}^-
		\enspace ,
	}
	\end{align}%
	there is an optimal solution $x^*$ to \csp{G}{c} such that $x^*_{ij}x^*_{ik}x^*_{jk} = 1$.
\end{proposition}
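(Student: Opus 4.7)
The strategy would follow the improving-map recipe used in Propositions~\ref{lemma:persistency-subset-separation}, \ref{proposition:edge-cut-persistency}, \ref{lemma:persistency-triplet-cut}, and \ref{lemma:edge-join-persistency}: construct a map $\sigma \colon X_G \to X_G$ that forces the triangle $ijk$ to be fully joined whenever it is not, verify that $\sigma$ is improving under Assumption~\eqref{eq:assumption-triplet-join}, and then invoke Proposition~\ref{lemma:persistency-predicate} with $Q = \{x \in X_G \mid x_{ij} x_{ik} x_{jk} = 1\}$.

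The candidate map is
\begin{equation*}
\sigma(x) := \begin{cases} x & \textnormal{if } x_{ij} x_{ik} x_{jk} = 1, \\ (\sigma_{\{i,j,k\}} \circ \sigma_{\delta(U)})(x) & \textnormal{otherwise.} \end{cases}
\end{equation*}
Well-definedness is inherited from the two elementary maps. Because $ij, ik \in \delta(U)$, the cut map zeros $\hat x_{ij}$ and $\hat x_{ik}$; the subsequent join of $\{i,j,k\}$ then sets $x'_{ij} = x'_{ik} = x'_{jk} = 1$, so $\sigma(x) \in Q$ for every $x \in X_G$.

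The core step is to show $\phi_c(x') - \phi_c(x) \leq 0$ whenever $x_{ij} x_{ik} x_{jk} = 0$. Since $\sigma_{\delta(U)}$ can only lower variables, and only on $\delta(U)$, while $\sigma_{\{i,j,k\}}$ can only raise variables, each edge change $x'_{pq} - x_{pq}$ lies in $\{-1, 0, 1\}$, with $-1$ possible only for $pq \in \delta(U) \setminus \{ij, ik\}$. I would split the objective change into three parts. Part (a), the triangle edges, has combined change $c_{ij} + c_{ik} + c_{jk} - \sum_{pq \in \tbinom{ijk}{2}} c_{pq} x_{pq}$; since $x \in X_G$ satisfies the three triangle cycle inequalities, $x\vert_{\tbinom{ijk}{2}}$ is a feasible point of $\cp_{G[\{i,j,k\}]}$ with vanishing product, so the subtracted sum is at least the minimum in Assumption~\eqref{eq:assumption-triplet-join}. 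Part (b), the remaining edges, is bounded by $\sum_{pq \in E \setminus \tbinom{ijk}{2}} c_{pq}^+ + \sum_{pq \in \delta(U) \setminus \{ij, ik\}} c_{pq}^-$, since each increase contributes at most $c_{pq}^+$ and each decrease (only possible on $\delta(U) \setminus \{ij, ik\}$) contributes at most $c_{pq}^-$. Part (c), the triple terms, is bounded analogously by $c_{ijk} + \sum_{pqr \in T \setminus \{ijk\}} c_{pqr}^+ + \sum_{pqr \in T_{\delta(U)} \setminus \{ijk\}} c_{pqr}^-$. Writing $c = c^+ - c^-$ throughout and absorbing the triangle-specific terms $c_{ij}^\pm, c_{ik}^\pm, c_{jk}^\pm, c_{ijk}^\pm$ into the global sums (using $ijk \in T_{\delta(U)}$ but $jk \notin \delta(U)$), the combined estimate rearranges term by term into exactly Assumption~\eqref{eq:assumption-triplet-join}, establishing $\phi_c(x') - \phi_c(x) \leq 0$.

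The main obstacle is that no analogue of Lemma~\ref{lemma:auxiliary-edge-join} confining changes to $\delta(U)$ is available here: the join of $\{i,j,k\}$ can propagate via paths that use $jk$ as a bridge together with preexisting $x = 1$ edges, potentially forcing $x'_{pq} = 1$ for edges and triples arbitrarily far from $\delta(U)$. I would therefore deliberately accept a lossy bound by using the coarse estimate ``every increase costs at most $c^+$'' over all of $E$ and $T$, rather than trying to track precisely which variables change. This loss is exactly what the global terms $-\sum_{pq \in E} c_{pq}^+$ and $-\sum_{pqr \in T} c_{pqr}^+$ on the left-hand side of the hypothesis are designed to absorb. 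With this estimate in place, Proposition~\ref{lemma:persistency-predicate} applied to $Q$ yields the desired partial optimality.
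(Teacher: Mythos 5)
Your proposal is correct and follows essentially the same route as the paper: the identical map $\sigma_{\{i,j,k\}}\circ\sigma_{\delta(U)}$, the same three-way split of the objective change (triangle edges bounded via the restricted minimization over $\cp_{G[\{i,j,k\}]}$, increases elsewhere bounded by $c^+$, sign changes on $\delta(U)$ bounded by $|c|$), and the same rearrangement into Assumption~\eqref{eq:assumption-triplet-join} before invoking Proposition~\ref{lemma:persistency-predicate}. Your remark that the global $\sum_{T}c^+$ and $\sum_{E}c^+$ terms exist precisely to absorb joins propagating beyond $\delta(U)$ matches the paper's reasoning exactly.
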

\begin{proof}
We construct 
$\sigma \colon \cp_G \to \cp_G$ as follows. 
\begin{equation}
\sigma(x) := \begin{cases}
x & \textnormal{if $x_{ij}x_{ik}x_{jk} = 1$} \\
\left(\sigma_{\{i, j, k\}} \circ \sigma_{\delta(U)}\right)(x) & \textnormal{otherwise}
\end{cases}
\end{equation}
For any $x \in \cp_G$, let $x' = \sigma(x)$. 
The map $\sigma$ is such that $x'_{ij}x'_{ik}x'_{jk} = 1$ for all $x\in \cp_G$. 
Note that for any $x\in \cp_G$ such that $x_{ij}x_{ik}x_{jk} = 0$, we have 
$x'_{pq} \geq x_{pq}$ for all $pq \not\in \delta(U)$. 

In order to see this, let $pq\notin\delta(U)$. We observe that $\sigma_{\delta(U)}(x)_{pq} = x_{pq}$ and that $\sigma_{\{i, j, k\}}(x)_{pq} \geq x_{pq}$ by definitions of $\sigma_{\delta(U)}$ and $\sigma_{\{i, j, k\}}$.
Moreover, we make the following observations: 
For every $pqr\in T_{\delta(U)}\setminus \{ijk\}$ we have that $c_{pqr}\left(x'_{pq}x'_{pr}x'_{qr} - x_{pq}x_{pr}x_{qr}\right) \leq \vert c_{pqr}\vert$ since $x'_{pq}x'_{pr}x'_{qr} - x_{pq}x_{pr}x_{qr} \in \{-1, 0, +1\}$. 
For every $pq\in \delta(U)\setminus \{ij, ik\}$ we have that $c_{pq}\left(x'_{pq} - x'_{pq}\right)\leq \vert c_{pq}\vert$ since $x'_{pq} - x_{pq}\in \{-1, 0, 1\}$.
For every $pqr\notin T_{\delta(U)}$ we have that $c_{pqr}\left(x'_{pq}x'_{pr}x'_{qr} - x_{pq}x_{pr} x_{qr}\right) \leq \max(0, c_{pqr}) = c_{pqr}^+$ since $x'_{pq}x'_{pr}x'_{qr} - x_{pq}x_{pr}x_{qr}\in \{0, 1\}$. For every $pq\notin\delta(U)\cup \{jk\}$ we have that $c_{pq}\left(x'_{pq} - x_{pq}\right)\leq \max(0, c_{pq}) = c_{pq}^+$ since $x'_{pq} - x_{pq}\in \{0, 1\}$.
It follows that 
\small{\begin{align}
& \phi_c(x') - \phi_c(x) =
 \smashoperator[r]{\sum_{\substack{pqr\in T_{\delta(U)} : \\ pqr \neq ijk}}}c_{pqr}(x'_{pq}x'_{pr}x'_{qr} - x_{pq}x_{pr}x_{qr}) 
 + 
 c_{ijk} 
 + 
 \smashoperator[r]{\sum_{pqr\not\in T_{\delta(U)}}}c_{pqr}(x'_{pq}x'_{pr}x'_{qr} - x_{pq}x_{pr}x_{qr}) \\					
&\quad + \sum_{pq\in \{ij, ik, jk\}}c_{pq}(1-x_{pq}) 
+ 
\sum_{\substack{pq\in \delta(U) : \\ pq \not\in \{ij, ik\}}}c_{pq}(x'_{pq} - x_{pq}) + \sum_{pq\not\in \delta(U)\cup \{jk\}}c_{pq}(x'_{pq} - x_{pq})\\	
&\leq \; c_{ijk} 
+  
\max_{\substack{x\in \cp_{G[\{i, j, k\}]} : \\ x_{ij}x_{ik}x_{jk} = 0}}
\smashoperator[r]{\sum_{pq\in \tbinom{ijk}{2}}}c_{pq} (1 - x_{pq}) + 
\smashoperator[r]{\sum_{\substack{pqr\in T_{\delta(U)} : \\ pqr \neq ijk}}}\vert c_{pqr}\vert  + 
\smashoperator[r]{\sum_{\substack{pqr\in T^+ : \\ pqr \notin T_{\delta(U)}}}} c_{pqr}  + 
\smashoperator[r]{\sum_{\substack{pq\in \delta(U) : \\ pq \notin \{ij, ik\}}}} \vert c_{pq} \vert + 
\smashoperator[r]{\sum_{\substack{pq \in E^+ : \\ pq \notin \left(\delta(U)\cup \{jk\}\right)}}}c_{pq} \\	
&= \; c_{ijk} +  \max_{\substack{x\in \cp_{G[\{i, j, k\}]} : \\ x_{ij}x_{ik}x_{jk} = 0}}\sum_{pq\in \tbinom{ijk}{2}}c_{pq} (1 - x_{pq}) + \sum_{\substack{pqr\in T^+ \cup T_{\delta(U)} : \\ pqr \neq ijk}}\vert c_{pqr}\vert + \sum_{\substack{pq\in E^+ \cup \delta(U) : \\ pq \not\in \{ij, ik, jk\}}}\vert c_{pq} \vert \\
&= \; -2c_{ijk}^- -  2c_{ij}^- - 2c_{ik}^- - c_{jk}^- - 
\hspace{-2ex}\min_{\substack{x\in \cp_{G[\{i, j, k\}]} : \\ x_{ij}x_{ik}x_{jk} = 0}}
\smashoperator[r]{\sum_{pq\in \tbinom{ijk}{2}}}c_{pq} x_{pq} + 
\sum_{\mathclap{pqr\in T_{\delta(U)}}}c_{pqr}^- +
\sum_{\mathclap{pqr\in T}}c_{pqr}^+ + 
\sum_{\mathclap{pq\in E}}c_{pq}^+ + 
\sum_{\mathclap{pq\in \delta(U)}}c_{pq}^- \leq  0.
\end{align}}%
\normalsize
Assumption \eqref{eq:assumption-triplet-join} provides the last inequality.
We arrive at the thesis by applying Proposition~\ref{lemma:persistency-predicate} with $Q = \{x\in \cp_G \mid x_{ij}x_{ik}x_{jk} = 1\}$. 
\end{proof}
\vspace{6pt}

The following Proposition~\ref{lemma:persistency-triangle-edge-join} expands Corollary~1 of \citet{Lange-2019} to the cubic correlation clustering problem.
\begin{proposition}
	\label{lemma:persistency-triangle-edge-join}
	Let $G = (V, E)$ a graph 
	and $c\in \mathbb{R}^{I(G)}$. 
	Moreover, let $ijk \in T$ and $U, U' \subseteq V$ such that $ij, ik \in \delta(U)$ and $jk, ik\in \delta(U')$. 
	If all of the following conditions hold, there exists an optimal solution $x^*$ to \csp{G}{c} such that $x^*_{ik} = 1$.
	\begin{align}
		c_{ijk}^- + 2c_{ij}^- + 2c_{ik}^- + \sum_{pqr\in T_{\{ij, ik\}}}c_{pqr}^-
		& \geq \sum_{pqr\in T_{\delta(U)}}\vert c_{pqr}\vert + \sum_{pq\in \delta(U)}\vert c_{pq} \vert  \label{eq:triangle-edge-join-1}		
		\\
		c_{ijk}^- + 2c_{jk}^- + 2c_{ik}^- + \sum_{pqr\in T_{\{jk, ik\}}}c_{pqr}^-  		
		& \geq \sum_{pqr\in T_{\delta(U')}}\vert c_{pqr}\vert + \sum_{pq\in \delta(U')}\vert c_{pq} \vert  \label{eq:triangle-edge-join-2}		
		\\
		c_{ijk} + c_{ij} + c_{ik} + c_{jk} 	
		& \leq - \sum_{\substack{pqr\in T_{\delta(ijk)}\cap T^- : \\ pqr\not\in T_{\{ij, ik, jk\}}}}\vert c_{pqr}\vert
		- \sum_{pq\in \delta(ijk)\cap E^-}\vert c_{pq} \vert \label{eq:triangle-edge-join-3}
	\end{align}%
\end{proposition}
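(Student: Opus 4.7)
My plan is to follow the pattern of the previous propositions: construct an improving map $\sigma\colon X_G \to X_G$ such that $\sigma(x)_{ik} = 1$ for every $x \in X_G$, and invoke Corollary~\ref{lemma:persistency-variable}. I distinguish cases by the triple $(x_{ij}, x_{ik}, x_{jk})$. Since $ij, ik, jk \in E$ form a $3$-cycle, feasibility forces $x_{ij} + x_{jk} \leq 1$ whenever $x_{ik} = 0$, leaving three admissible sub-cases besides $x_{ik} = 1$. Concretely, I set $\sigma(x) = x$ if $x_{ik} = 1$; $\sigma(x) = (\sigma_{\{i,k\}} \circ \sigma_{\delta(U)})(x)$ if $x_{ik} = 0, x_{ij} = 0, x_{jk} = 1$; $\sigma(x) = (\sigma_{\{i,k\}} \circ \sigma_{\delta(U')})(x)$ if $x_{ik} = 0, x_{ij} = 1, x_{jk} = 0$; and $\sigma(x) = (\sigma_{\{i,j,k\}} \circ \sigma_{\delta(ijk)})(x)$ if $x_{ik} = x_{ij} = x_{jk} = 0$.

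For the two single-edge-switched sub-cases, I would follow the template of Proposition~\ref{lemma:edge-join-persistency}. An obvious analog of Lemma~\ref{lemma:auxiliary-edge-join} with $ij$ replaced by $ik$ gives $\sigma(x)_{pq} = x_{pq}$ for all $pq \notin \delta(U)$ (resp.\ $\delta(U')$). A key observation in the sub-case $x_{ij} = 0, x_{jk} = 1$: the path $j \to k \to i$, along the preserved edge $jk$ with $x_{jk} = 1$ and the new join-edge $ik$, forces $\sigma(x)_{ij} = 1$ as well. Consequently the entire triangle becomes active and contributes $c_{ijk}$ exactly, not merely $\leq c_{ijk}^+$. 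Collecting the tight contributions $c_{ik}, c_{ij}, c_{ijk}$ together with the $c_{pqr}^+$ bounds on the other triples in $T_{\{ij,ik\}}$, whose original products $x_{pq}x_{pr}x_{qr}$ vanish because they contain $ij$ or $ik$, and the trivial $|c_{pq}|, |c_{pqr}|$ bounds elsewhere in $\delta(U)$ and $T_{\delta(U)}$, the accumulated slack assembles precisely to the left-hand side of~\eqref{eq:triangle-edge-join-1}. The sub-case $x_{ij} = 1, x_{jk} = 0$ is symmetric, uses $U'$, and matches~\eqref{eq:triangle-edge-join-2}.

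For the third sub-case I would prove a fresh auxiliary lemma for the composition $\sigma_{\{i,j,k\}} \circ \sigma_{\delta(ijk)}$: it agrees with $x$ outside $\{ij, ik, jk\} \cup \delta(ijk)$, sets every edge in $\delta(ijk)$ to $0$, and sets the three triangle edges to $1$. The proof mirrors that of Lemma~\ref{lemma:auxiliary-edge-join}: every join-admissible path that would leave $\{i, j, k\}$ must traverse an edge in $\delta(ijk)$, which has been cut, so no such path exists. The cost change then reduces to $c_{ij} + c_{ik} + c_{jk} + c_{ijk}$ plus the boundary-edge terms $-c_{pq}x_{pq} \leq c_{pq}^-$ over $pq \in \delta(ijk)$ and boundary-triple contributions bounded by $c_{pqr}^-$ over $pqr \in T_{\delta(ijk)} \setminus T_{\{ij, ik, jk\}}$; every triple in $T_{\{ij, ik, jk\}} \setminus \{ijk\}$ contributes zero, because the triangle edge it contains has $x_{pq} = 0$ and its two other edges lie in $\delta(ijk)$ with $\sigma(x)_{pq} = 0$. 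The required non-positivity is exactly hypothesis~\eqref{eq:triangle-edge-join-3}.

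I expect the main technical obstacle to be the triple bookkeeping in the third sub-case: one has to partition the triples in $T_{\delta(ijk)}$ according to their intersection pattern with $\{i,j,k\}$, showing that only those with exactly one vertex in $\{i,j,k\}$, which coincide with $T_{\delta(ijk)} \setminus T_{\{ij, ik, jk\}}$, contribute non-trivially, so that the sum of contributions lines up with the restricted range appearing on the right-hand side of~\eqref{eq:triangle-edge-join-3}.
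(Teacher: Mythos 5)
Your proposal matches the paper's proof essentially step for step: the same four-case map ($\sigma_{\{i,k\}}\circ\sigma_{\delta(U)}$, $\sigma_{\{i,k\}}\circ\sigma_{\delta(U')}$, $\sigma_{\{i,j,k\}}\circ\sigma_{\delta(ijk)}$, identity), the same observation that joining $i,k$ while $x_{jk}=1$ forces the whole triangle to become active so that $c_{ijk}, c_{ij}, c_{ik}$ enter exactly, the same $c^+$ bounds on $T_{\{ij,ik\}}$ and $|c|$ bounds on $\delta(U)$ and $T_{\delta(U)}$ assembling to conditions \eqref{eq:triangle-edge-join-1}--\eqref{eq:triangle-edge-join-2}, and the same accounting in the all-zero case where triples in $T_{\{ij,ik,jk\}}\setminus\{ijk\}$ contribute nothing and the remainder is controlled by \eqref{eq:triangle-edge-join-3}. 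The only cosmetic difference is that you would isolate the behaviour of $\sigma_{\{i,j,k\}}\circ\sigma_{\delta(ijk)}$ in a separate auxiliary lemma, which the paper states inline.
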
%
\begin{proof}
Let $\sigma \colon \cp_G \to \cp_G$ be defined as 
\begin{equation}
\sigma(x) := \begin{cases}
x & \textnormal{if $x_{ik} = 1$}\\
(\sigma_{\{i, k\}} \circ \sigma_{\delta(U)})(x) & \textnormal{if $x_{ik} = x_{ij} = 0 ,x_{jk} = 1$} \\
(\sigma_{\{i, k\}}\circ \sigma_{\delta(U')})(x) & \textnormal{if $x_{ik} = x_{jk} = 0 , x_{ij} = 1$} \\
(\sigma_{\{i, j, k\}}\circ \sigma_{\delta(ijk)})(x) & \textnormal{if $x_{ik} = x_{ij} = x_{jk} = 0$}
\end{cases}
\enspace .
\end{equation}
We use the notation $x' = \sigma(x)$ for all $x \in \cp_G$.
Firstly, the map $\sigma$ is such that $x'_{ik} = 1$ for all $x\in \cp_G$. 
Secondly, we consider $x \in \cp_G$ such that $x_{ik} = x_{ij} = 0$ and $x_{jk} = 1$. The map $\sigma_{\{i, k\}} \circ \sigma_{\delta(U)}$ is such that $x'_{ij} = x'_{ik} = x'_{jk} = 1$ and $x'_{pq} = x_{pq}$ for any $pq \notin \delta(U)$. 
The difference of the objective values for such $x\in X_G$ is given by 
\begin{align}
&\phi_c(x') - \phi_c(x) = \; c_{ijk} + c_{ij} + c_{ik} + \sum_{\substack{pqr \in T_{\{ij, ik\}} : \\ pqr\neq ijk}} c_{pqr}x'_{pq}x'_{pr}x'_{qr} \\
&\quad + \sum_{\substack{pqr\in T_{\delta(U)} : \\ pqr \notin T_{\{ij, ik\}}}} c_{pqr}\left(x'_{pq}x'_{pr}x'_{qr} - x_{pq}x_{pr}x_{qr}\right)+ 
\sum_{\substack{pq\in \delta(U) :\\ pq \not\in \{ij, ik\}}} c_{pq}(x'_{pq} - x_{pq}) 
\\
&\leq \; c_{ijk} + c_{ij} + c_{ik} 
+
\sum_{\substack{pqr\in T_{\{ij, ik\}}\cap T^+ : \\ pqr\neq ijk}}c_{pqr} 
+ 
\sum_{\substack{pqr\in T_{\delta(U)} : \\ pqr \notin T_{\{ij, ik\}}}}\vert c_{pqr}\vert 
+ 
\sum_{\substack{pq\in \delta(U) : \\ pq \not\in \{ij, ik\}}}  \vert c_{pq}\vert \\
&= -c_{ijk}^- - 2c_{ij}^- - 2c_{ik}^- 
+
\sum_{pqr\in T_{\{ij, ik\}}}c_{pqr}^+ 
+ 
\sum_{pqr\in T_{\delta(U)}}\vert c_{pqr}\vert 
- 
\sum_{pqr\in T_{\{ij, ik\}}}\vert c_{pqr}\vert
+ 
\sum_{pq\in \delta(U)}  \vert c_{pq}\vert 
\\
&= -c_{ijk}^- - 2c_{ij}^- - 2c_{ik}^- -\sum_{pqr\in T_{\{ij, ik\}}}c_{pqr}^- + \sum_{pqr\in T_{\delta(U)}}\vert c_{pqr}\vert + \sum_{pq\in \delta(U)}  \vert c_{pq}\vert \leq \;0
\enspace .
\end{align}	
The last inequality follows from 
Assumption~\eqref{eq:triangle-edge-join-1}. 
Thirdly, we consider $x \in \cp_G$ such that $x_{ik} = x_{jk} = 0$ and $x_{ij} = 1$. The map $\sigma_{\{i, k\}} \circ \sigma_{\delta(U')}$ is improving by analogous arguments and Assumption~\eqref{eq:triangle-edge-join-2}.
Finally, we consider $x\in \cp_G$ such that $x_{ik} = x_{jk} = x_{ij} = 0$. The map $\sigma_{\{i, j, k\}	}\circ \sigma_{\delta(ijk)}$ is such that 
\begin{equation}
(\sigma_{\{i, j, k\}}\circ \sigma_{\delta(ijk)})_{pq} = \begin{cases}
0 & \textnormal{if $pq \in \delta(ijk)$} \\
1 & \textnormal{if $pq \in \{ij, ik, jk\}$} \\
x_{pq} & \textnormal{otherwise}
\end{cases}
\enspace .
\end{equation}
Therefore, 
\begin{align}
\phi_c(x') - \phi_c(x) & = c_{ijk} + c_{ij} + c_{ik} + c_{jk} - \sum_{pqr\in T_{\delta(ijk)}\setminus T_{\{ij, ik, jk\}}} c_{pqr}x_{pq}x_{pr}x_{qr} - \sum_{pq\in \delta(ijk)} c_{pq} x_{pq} 
\\
& \leq c_{ijk} + c_{ij} + c_{ik} + c_{jk}  +\sum_{\substack{pqr\in T_{\delta(ijk)} \cap T^- : \\ pqr\not\in T_{\{ij, ik, jk\}}}} \vert c_{pqr} \vert + \sum_{pq\in \delta(ijk)\cap E^-} \vert c_{pq}\vert 
\overset{\eqref{eq:triangle-edge-join-3}}{\leq} 0
\enspace .
\end{align}%
Applying Corollary~\ref{lemma:persistency-variable} concludes the proof.
\end{proof}%
\vspace{6pt}
Next, we discuss a generalization of Theorem~2 of \citet{Lange-2019}.
To this end, let $V_H \subseteq V$, $E_H = E\cap \tbinom{V_H}2$ and $T_H = T\cap \tbinom{V_H}3$. 
We define $c'\in \mathbb{R}^{I(G[V_H])}$ by the equations written below.
\begin{align}
	c'_\emptyset 
	& = \frac 12 \sum_{pqr \in T_H}c_{pqr} + \sum_{pq \in E_H}c_{pq}	
	\label{eq: defcprime1} 
	\\
	\forall pq \in E_H\colon \quad 
	c'_{pq} 
	& = - c_{pq} + \frac 12 \sum_{\substack{r\in V_H \setminus \{p,q\} : \\ pqr \in T_H}} c_{pqr} 
	\label{eq: defcprime2} 
	\\
	\forall pqr \in T_H \colon \quad 
	c'_{pqr} 
	& = -2c_{pqr}  
	\label{eq: defcprime3}
\end{align}
Proposition~\ref{lemma:general-subgraph-edge-join} studies subsets $V_H \subseteq V$ such that $\mathbbm{1}_{E_H}$ is a trivial solution to the problem $\max_{x\in \cp_{G[V_H]}}\phi_{c'}(x)$, i.e.~$\max_{x\in \cp_{G[V_H]}}\phi_{c'}(x) = 0$, since 
\begin{align}
	&\phi_{c'}\left(\mathbbm{1}_{E_H}\right) = \sum_{pqr\in T_H}c'_{pqr} + \sum_{pq\in E_H}c'_{pq} + c'_\emptyset \\
	&= 
	-2\sum_{pqr\in T_H}c_{pqr} 
	+
	\sum_{pq\in E_H}
	\Bigl(-c_{pq} + \frac{1}{2}\sum_{\substack{r\in V_H : \\ pqr\in T_H}}c_{pqr}\Bigr) 
	+ 
	\frac{1}{2}\sum_{pqr\in T_H}c_{pqr} 
	+
	\sum_{pq\in E_H}c_{pq} \\
	&= 
	-\frac{3}{2}\sum_{pqr\in T_H}c_{pqr}  + \frac{1}{2}\sum_{pq\in E_H}\sum_{\substack{r\in V_H : \\ pqr\in T_H}}c_{pqr} 
	= -\frac{3}{2}\sum_{pqr\in T_H}c_{pqr}  + \frac{3}{2}\sum_{pqr\in T_H}c_{pqr} = 0 \enspace.
\end{align}
Lemma~\ref{lemma:subgraph-helping-lemma-1} is an auxiliary lemma applied in the proof of Proposition~\ref{lemma:general-subgraph-edge-join}.
\begin{lemma}
	\label{lemma:subgraph-helping-lemma-1}
	Let $G = (V, E)$ a graph and $c \in \mathbb{R}^{I(G)}$. 
	Define $c'\in \mathbb{R}^{I(G)}$ as in \eqref{eq: defcprime1}, \eqref{eq: defcprime2}, \eqref{eq: defcprime3} for $V_H = V$, $E_H = E$ and $T_H = T$.
	Then for any graph partition $\Pi \in P_G$:
	\begin{align}
		\smallmath{\phi_{c'}(x^\Pi) }
		&\smallmath{= \;\frac 12 \sum_{pqr\in T} c_{pqr}\prod_{\mathclap{uv\in \tbinom{pqr}{2}}}(1-x^\Pi_{uv}) + 
		\sum_{pqr\in T}c_{pqr}\sum_{\mathclap{uv \in \tbinom{pqr}{2}}}x^\Pi_{uv}
		\smashoperator[r]{\prod_{\substack{u'v'\in \tbinom{pqr}{2} \\ u'v' \neq uv}}}(1- x^\Pi_{u'v'}) + \sum_{pq\in E}c_{pq}(1-x^\Pi_{pq})	
	}\label{eq:partition-subgraph-helper-1} 
		\\ 
		& \smallmath{= \;\frac 12 \hspace{-1.5ex}\sum_{UU'U''\in \tbinom{\Pi}{3}}
		\smashoperator[r]{\sum_{pqr\in T_{UU'U''}}}c_{pqr} +
		\smashoperator[l]{\sum_{UU'\in \tbinom{\Pi}{2}}}
		\smashoperator[r]{\sum_{pq\in \delta(U, U')}}c_{pq} + 
		\sum_{UU'\in \tbinom{\Pi}{2}}
		\Bigl(
		\smashoperator[r]{\sum_{pqr\in T_{UUU'}}}c_{pqr} 
		+ 
		\hspace{-1.5ex}\smashoperator[r]{\sum_{pqr\in T_{UU'U'}}}c_{pqr}
		\Bigr) 
		\enspace ,}
		\label{eq:partition-subgraph-helper-2}
	\end{align}
	where $x^\Pi$ denotes the feasible vector corresponding to the graph partition $\Pi$.
\end{lemma}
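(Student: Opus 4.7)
My plan is to prove the two equalities \eqref{eq:partition-subgraph-helper-1} and \eqref{eq:partition-subgraph-helper-2} in sequence. For \eqref{eq:partition-subgraph-helper-1}, I would first expand $\phi_{c'}(x^\Pi)$ by substituting the definitions \eqref{eq: defcprime1}--\eqref{eq: defcprime3}, and then apply the interchange of summation
\[
\sum_{pq\in E}\ \sum_{r:\,pqr\in T} c_{pqr}\,x^\Pi_{pq}\;=\;\sum_{pqr\in T} c_{pqr}\bigl(x^\Pi_{pq}+x^\Pi_{pr}+x^\Pi_{qr}\bigr).
\]
After collecting terms, the coefficient of each $c_{pq}$ becomes $1-x^\Pi_{pq}$ and, writing $(a,b,c):=(x^\Pi_{pq},x^\Pi_{pr},x^\Pi_{qr})$, the coefficient of each $c_{pqr}$ becomes the cubic polynomial $\tfrac{1}{2}+\tfrac{1}{2}(a+b+c)-2abc$.

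Next I would verify, by case analysis on feasible inputs, that this cubic polynomial agrees with the corresponding coefficient on the right-hand side of \eqref{eq:partition-subgraph-helper-1}, namely $\tfrac{1}{2}(1-a)(1-b)(1-c)+a(1-b)(1-c)+b(1-a)(1-c)+c(1-a)(1-b)$. The transitivity encoded in $X_G$ is essential here: the only feasible values of $(a,b,c)$ are $(0,0,0)$, the three permutations of $(1,0,0)$, and $(1,1,1)$, on which both expressions evaluate to $\tfrac{1}{2}$, $1$, and $0$, respectively. A direct algebraic expansion shows that the identity actually fails on the infeasible configurations such as $(1,1,0)$, so feasibility is used in an essential way rather than merely for convenience. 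This establishes \eqref{eq:partition-subgraph-helper-1}.

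The second equality \eqref{eq:partition-subgraph-helper-2} is then a purely combinatorial rearrangement. The factor $1-x^\Pi_{pq}$ is $1$ precisely when the endpoints of $pq$ lie in different clusters of $\Pi$, so the edge sum becomes $\sum_{UU'\in\tbinom{\Pi}{2}}\sum_{pq\in\delta(U,U')}c_{pq}$. For a triple $pqr\in T$, the case analysis above shows that the coefficient of $c_{pqr}$ is $\tfrac{1}{2}$, $1$, or $0$ depending on whether the three nodes of $pqr$ lie in three distinct clusters of $\Pi$, in exactly two clusters of $\Pi$, or in a single cluster of $\Pi$; partitioning $T$ by these three cases, and splitting the two-cluster contribution according to which cluster contains two nodes versus one, produces the three triple sums on the right-hand side of \eqref{eq:partition-subgraph-helper-2}. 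The main obstacle is bookkeeping rather than any conceptual difficulty: one must ensure that the summation interchange counts each triple with the correct multiplicity, and that the index sets $T_{UU'U''}$, $T_{UUU'}$, and $T_{UU'U'}$ are interpreted consistently with the factor $\tfrac{1}{2}$ in front of the three-cluster sum (so that each cross-triple is counted exactly once through $\binom{\Pi}{3}$).
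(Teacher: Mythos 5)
Your proposal is correct and follows essentially the same route as the paper: expand $\phi_{c'}(x^\Pi)$, interchange the edge--triple summation to obtain the coefficient $\tfrac12+\tfrac12(x^\Pi_{pq}+x^\Pi_{pr}+x^\Pi_{qr})-2x^\Pi_{pq}x^\Pi_{pr}x^\Pi_{qr}$ of each $c_{pqr}$, use transitivity of feasible vectors to match it with \eqref{eq:partition-subgraph-helper-1}, and then read off \eqref{eq:partition-subgraph-helper-2} by classifying edges and triples according to how $\Pi$ separates their nodes. The only cosmetic difference is that the paper verifies the triangle identity symbolically via $x^\Pi_{pq}x^\Pi_{pr}=x^\Pi_{pq}x^\Pi_{qr}=x^\Pi_{pr}x^\Pi_{qr}=x^\Pi_{pq}x^\Pi_{pr}x^\Pi_{qr}$, whereas you verify it by enumerating the five feasible configurations of a triangle.
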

\begin{proof}
	We use the fact that for any graph partition $\Pi\in P_G$ and any $pqr\in T$ we have that
	$x_{pq}^\Pi x_{qr}^\Pi=x_{pq}^\Pi x_{pr}^\Pi  = x_{pr}^\Pi x_{qr}^\Pi  = x_{pq}^\Pi x_{pr}^\Pi x_{qr}^\Pi $ thanks to the transitivity property, i.e. given a 3-clique either the three nodes are all apart, or only two are together, or they are all together. 
	Expanding the inner products and the inner sums leads to
	$\prod_{uv\in \tbinom{pqr}{2}}\left(1-x^\Pi _{uv}\right) = 1 - x_{pq}^\Pi  - x_{pr}^\Pi -x_{qr}^\Pi  + 2x_{pq}^\Pi x_{pr}^\Pi x_{qr}^\Pi$, and $\sum_{uv\in \tbinom{pqr}{2}}x_{uv}^\Pi \prod_{\substack{u'v'\in \tbinom{pqr}{2} : \\ u'v' \notin \{uv\}}}(1-x_{u'v'}^\Pi ) = x^\Pi _{pq} + x^\Pi _{pr} + x^\Pi _{qr} -3 x^\Pi _{pq}x^\Pi _{qr}x^\Pi _{pr}.$
	By substituting and collecting terms, we conclude the proof for Equation~\eqref{eq:partition-subgraph-helper-1}. 
	Equation~\eqref{eq:partition-subgraph-helper-2} follows instead from the following observations: 
	\begin{align}
		\prod_{ab\in \tbinom{pqr}{3}}\left(1-x^\Pi _{ab}\right) = 1 
		& \quad \Leftrightarrow \quad \exists \;UU'U''\in \tbinom{\Pi }{3}\colon pqr\in T_{UU'U''}
		\\
		\sum_{ab\in \tbinom{pqr}{2}}x_{ab}^\Pi \prod_{a'b'\in \tbinom{pqr}{2}\setminus \{ab\}} \hspace{-2ex} (1-x_{a'b'}^\Pi ) = 1 
		& \quad \Leftrightarrow \quad \exists \;UU'\in \tbinom{\Pi }{2}\colon 
		\left(pqr\in T_{UUU'} \lor pqr\in T_{UU'U'}\right)
		\\
		1 - x^\Pi _{pq} = 1 
		& \quad \Leftrightarrow \quad \exists \; UU'\in \tbinom{\Pi }{2}\colon pq\in \delta(U, U') 
		\enspace .
	\end{align}
	This concludes the proof.
\end{proof}
\begin{proposition}
	\label{lemma:general-subgraph-edge-join}
	Let $G = (V, E)$ a graph and $c\in \mathbb{R}^{I(G)}$. 
	Moreover, let $V_H \subseteq V$, $E_H = E\cap \tbinom{V_H}{2}$, $T_H = T\cap \tbinom{V_H}3$ and $ij \in E_H$.
	Define $c'\in \mathbb{R}^{I(G[V_H])}$ as in \eqref{eq: defcprime1}, \eqref{eq: defcprime2}, \eqref{eq: defcprime3}.
	If $\max_{x\in \cp_{G[V_H]}}\phi_{c'}(x) = 0$
	and for all $U \subseteq V_H$ with $i \in U$ and $j \in V_H \setminus U$ we have
	\begin{equation}\label{eq:assumption-subgraph-criterion-uv-cuts}
		\sum_{pq\in \delta(V_H)\cap E^-}c_{pq} + \sum_{pqr\in T_{\delta(V_H)}\cap T^-} c_{pqr} \geq \sum_{pq\in \delta(U, V_H \setminus U)}c_{pq} + \sum_{pqr\in T_{\delta(U, V_H \setminus U)}\cap T_H} c_{pqr} \enspace ,		
	\end{equation}%
	then there exists an optimal solution $x^*$ to \csp{G}{c} such that $x^*_{ij} = 1$.
\end{proposition}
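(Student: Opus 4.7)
The plan is to construct an improving map $\sigma \colon \cp_G \to \cp_G$ with $\sigma(x)_{ij} = 1$ for every $x \in \cp_G$ and then to apply Corollary~\ref{lemma:persistency-variable}. Define $\sigma(x) := x$ whenever $x_{ij} = 1$, and $\sigma(x) := (\sigma_{V_H} \circ \sigma_{\delta(V_H)})(x)$ otherwise. After the elementary cut of $\delta(V_H)$, no $1$-edge links $V_H$ to its complement, so the subsequent join sets $x'_e = 1$ for every $e \in E_H$ (in particular for $ij$) and leaves every edge with both endpoints in $V \setminus V_H$ unchanged. Hence $\sigma(x)_{ij} = 1$.

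The substantive step is to show that $\sigma$ is improving on $\{x \in \cp_G : x_{ij} = 0\}$. For such $x$, set $x' := \sigma(x)$, let $\Pi_H$ denote the partition of $V_H$ induced by $x\vert_{E_H}$, and let $U_i \in \Pi_H$ be the cluster containing $i$ (so $j \in V_H \setminus U_i$). I would decompose
\begin{equation*}
\phi_c(x') - \phi_c(x) = \Delta_{\mathrm{in}} + \Delta_{\mathrm{bd}},
\end{equation*}
where $\Delta_{\mathrm{in}} := \sum_{e \in E_H}(1 - x_e)\, c_e + \sum_{t \in T_H} \bigl(1 - x_{pq}x_{pr}x_{qr}\bigr)\, c_t$ and $\Delta_{\mathrm{bd}} := - \sum_{e \in \delta(V_H)} c_e\, x_e - \sum_{t \in T_{\delta(V_H)}} c_t\, x_{pq}x_{pr}x_{qr}$. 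For $\Delta_{\mathrm{in}}$, Lemma~\ref{lemma:subgraph-helping-lemma-1} applied to $G[V_H]$ yields the identity $\Delta_{\mathrm{in}} = \phi_{c'}(x^{\Pi_H}) + \tfrac{1}{2}\sum_{t} c_t$, where the latter sum ranges over triples of $T_H$ that are split across three distinct clusters of $\Pi_H$; the hypothesis $\max_{y \in \cp_{G[V_H]}} \phi_{c'}(y) = 0$ then gives $\phi_{c'}(x^{\Pi_H}) \leq 0$. For $\Delta_{\mathrm{bd}}$, the pointwise bounds $-c_e x_e \leq c_e^-$ and $-c_t x_{pq}x_{pr}x_{qr} \leq c_t^-$, combined with hypothesis~\eqref{eq:assumption-subgraph-criterion-uv-cuts} applied at $U = U_i$, imply $\Delta_{\mathrm{bd}} \leq -\sum_{e \in \delta(U_i, V_H \setminus U_i)} c_e - \sum_{t \in T_{\delta(U_i, V_H \setminus U_i)} \cap T_H} c_t$, which by Lemma~\ref{lemma:subgraph-helping-lemma-1} equals $-\phi_{c'}(x^{\bar\Pi})$ for the $2$-cluster coarsening $\bar\Pi := \{U_i, V_H \setminus U_i\}$ of $V_H$ (no $\tfrac{1}{2}$-correction arises because $\bar\Pi$ has only two blocks).

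The main obstacle is the resulting algebraic reduction
\begin{equation*}
\phi_{c'}(x^{\Pi_H}) \;-\; \phi_{c'}(x^{\bar\Pi}) \;+\; \tfrac{1}{2}\sum_{t \in T_H,\, t\,\text{3-split in }\Pi_H} c_t \;\leq\; 0.
\end{equation*}
I plan to establish it by a case analysis of each triple in $T_H$ according to the distribution of its three vertices between $U_i$ and $V_H \setminus U_i$ and, inside $V_H \setminus U_i$, across the finer clusters of $\Pi_H$. For every combinatorial type, the contributions to $\phi_{c'}(x^{\Pi_H})$, to $\phi_{c'}(x^{\bar\Pi})$ and to the $\tfrac{1}{2}$-correction can be tabulated by the recipe of Lemma~\ref{lemma:subgraph-helping-lemma-1}; after cancellation the residue collapses to a ``sub-joining'' cost supported entirely inside $V_H \setminus U_i$, namely a sum over edges of $E_H$ cut within $V_H \setminus U_i$ by $\Pi_H$ and triples of $T_H$ split within $V_H \setminus U_i$ by $\Pi_H$. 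Its non-positivity is then secured by applying $\max \phi_{c'} = 0$ one more time, to the partition of $V_H$ that keeps $U_i$ whole and coincides with $\Pi_H$ on $V_H \setminus U_i$ (and, if necessary, to further refinements of $V_H \setminus U_i$). Once $\Delta_{\mathrm{in}} + \Delta_{\mathrm{bd}} \leq 0$ is in hand, Corollary~\ref{lemma:persistency-variable} yields an optimal $x^*$ with $x^*_{ij} = 1$.
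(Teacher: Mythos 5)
You follow the paper's route up to the decisive step: the improving map $\sigma_{V_H}\circ\sigma_{\delta(V_H)}$, the split into $\Delta_{\mathrm{in}}+\Delta_{\mathrm{bd}}$, the pointwise bound $\Delta_{\mathrm{bd}}\le -\sum_{pq\in\delta(V_H)\cap E^-}c_{pq}-\sum_{pqr\in T_{\delta(V_H)}\cap T^-}c_{pqr}$, and the identity $\Delta_{\mathrm{in}}=\phi_{c'}(x^{\Pi_H})+\tfrac12\sum_{t\ \text{3-split}}c_t$ via Lemma~\ref{lemma:subgraph-helping-lemma-1} are all correct and match the paper. The gap is your choice of the witness $U$ in \eqref{eq:assumption-subgraph-criterion-uv-cuts}: you fix $U=U_i$, the block of $\Pi_H$ containing $i$, and are then left with showing that the residue
\begin{equation}
\phi_{c'}(x^{\Pi_H})-\phi_{c'}(x^{\bar\Pi})+\tfrac12\!\!\sum_{t\ \text{3-split in }\Pi_H}\!\!c_t
\;=\;\sum_{\substack{e\in E_H\cap\binom{V_H\setminus U_i}{2}\\ e\ \text{cut by }\Pi_H}}c_e\;+\;\sum_{\substack{t\in T_H\cap\binom{V_H\setminus U_i}{3}\\ t\ \text{split by }\Pi_H}}c_t
\end{equation}
is non-positive. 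It need not be, and $\max\phi_{c'}=0$ does not control it, because it is an interior cut cost of the sub-instance on $V_H\setminus U_i$ while $\phi_{c'}(x^{\Pi})$ always carries the crossing terms involving $U_i$. Concretely, take $V=V_H=\{i,j,k\}$, $E$ complete, $T=\{ijk\}$, $c_{ijk}=0$, $c_{ij}=c_{ik}=-10$, $c_{jk}=5$. Then $\phi_{c'}(x^{\Pi})\in\{-20,-15,-5,-5,0\}$ over the five partitions, so $\max\phi_{c'}=0$, and \eqref{eq:assumption-subgraph-criterion-uv-cuts} holds (its left-hand side is $0$ and both admissible cuts cost $-20$ and $-5$); yet for $x$ the all-singleton partition your residue equals $c_{jk}=5>0$, and your chain of bounds gives $\Delta_{\mathrm{in}}+\Delta_{\mathrm{bd}}\le-15+20=5$, which fails to certify improvingness (the true difference is $-15$). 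Your proposed repair does not help: the partition ``that keeps $U_i$ whole and coincides with $\Pi_H$ on $V_H\setminus U_i$'' is $\Pi_H$ itself, so invoking $\max\phi_{c'}=0$ again only returns $\phi_{c'}(x^{\Pi_H})\le0$.

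What is missing is an existence argument for a \emph{good} $U$: one must allow $U$ to range over all unions $R_{\Pi'}$ of blocks of $\Pi_H$ containing $U_1$ and excluding $U_2$ (in the example $U=\{i,k\}$ works, $U=\{i\}$ does not). The paper proves that at least one such $U$ satisfies \eqref{eq:subgraph-simplification-1} by contradiction and averaging: if all $2^{k-2}$ admissible $\Pi'$ had positive defect $W_{\Pi'}$, then a counting argument on how often each $e_{UU'}$, $t_{UU'}$ and $t_{UU'U''}$ appears shows $\sum_{\Pi'}W_{\Pi'}=2^{k-3}\phi_{c'}(x^{\Pi''})$, where $\Pi''$ merges $U_1$ and $U_2$, and this is $\le0$ by hypothesis. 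This averaging step is the core of the proof and is absent from your plan; without it (or an equivalent selection mechanism for $U$) the argument does not go through.
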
%
\begin{proof}
We define $\sigma \colon \cp_G \to \cp_G$ as
\begin{equation}
\sigma(x) := \begin{cases}
x & \textnormal{if $x_{ij} = 1$}\\
(\sigma_{V_H} \circ \sigma_{\delta(V_H)})(x) & \textnormal{otherwise}
\end{cases}
\enspace .
\end{equation}
Let $x' = \sigma(x)$ for any $x \in \cp_G$. 
It is easy to see that $x'_{ij} = 1$ for all $x\in \cp_G$.
Similarly to before, 
we show that $\sigma$ is an improving map. 
For any $x \in \cp_G$ such that $x_{ij} = 1$ we have $\phi_c(x') - \phi_c(x) = 0$, by definition of $x'$. 
Now, we consider $x\in \cp_G$ such that $x_{ij} = 0$.  
We let $x \vert_{E_H}$ denote the restriction of $x$ containing only components corresponding to edges in $E_H$. 
Let $\Pi$ be the graph partition of $V$ such that $x = x^\Pi$, and let $\Pi_H$ be the induced graph partition of $V_H$ such that $x \vert_{E_H} = x^{\Pi_H}$. 
Since $x_{ij} = 0$, there exist $U_1, U_2\in \Pi_H$ such that $i \in U_1$, $j \in U_2$. 
We have for all $pq\in E$:
\begin{equation}
x'_{pq} = \begin{cases}
1 & \textnormal{if $pq \in E_H$}\\
0 & \textnormal{if $pq \in \delta(V_H)$} \\
x_{pq} & \textnormal{otherwise}
\end{cases}
\enspace .
\end{equation}
Therefore, it follows that
\begin{equation}
	\smallmath{%
\phi_c(x') - \phi_c(x) = \sum_{\mathclap{pq\in E_H}} c_{pq}(1- x_{pq}) + \sum_{\mathclap{pqr\in T_H}}c_{pqr}(1-x_{pq}x_{pr}x_{qr}) - \sum_{\mathclap{pq\in \delta(V_H)}}c_{pq}x_{pq} - \sum_{\mathclap{pqr\in T_{\delta(V_H)}}}c_{pqr}x_{pq}x_{pr}x_{qr}.
}
\label{eq:subgraph-plugin-map-1} 
\end{equation}%
In order to find an upper bound for the sums over $E_H$ and $T_H$, we show that there exists a subset $U \subset V_H$ with $i \in U$ and $j \in V_H \setminus U$ such that 
\begin{equation} \label{eq:subgraph-simplification-1}
	\smallmath{\sum_{pq \in E_H}c_{pq}\left(1-x_{pq}\right) + \sum_{pqr\in T_H}c_{pqr}(1-x_{pq}x_{pr}x_{qr}) \leq \sum_{pq \in \delta(U, V_H \setminus U)}c_{pq} + \sum_{pqr\in T_{\delta(U, V_H \setminus U)}\cap T_H} c_{pqr}.}
\end{equation}
For the sake of contradiction, we assume that there is no such $U \subset V_H$.
For any $\Pi'\subset \Pi_H$, let $R_{\Pi'} = \bigcup_{P'\in \Pi'}P'$. 
Furthermore, we define $t \colon\tbinom{\Pi_H}{2}\cup \tbinom{\Pi_H}{3} \to \mathbb{R}$ and $e \colon \tbinom{\Pi_H}{2} \to \mathbb{R}$ such that
\begin{gather}
t_{UU'U''} = \sum_{pqr \in T_{UU'U''}}c_{pqr} \: \forall UU'U'' \in \tbinom{\Pi_H}{3}, \qquad t_{UU'} = \sum_{pqr \in T_{UUU'} \cup T_{UU'U'}}c_{pqr} \: \forall UU'\in \tbinom{\Pi_H}{2} 
\\
e_{UU'} = \sum_{pq \in \delta(U, U')}c_{pq} \: \forall UU'\in \tbinom{\Pi_H}{2}.
\end{gather}
Therefore, let $\Pi'\subset \Pi_H$ with $U_1\in \Pi'$ and $U_2\not\in \Pi'$. 
We observe that this implies that $i \in R_{\Pi'}$ and $j \notin R_{\Pi'}$, since $\Pi_H$ is a partition of $V_H$. 
By the assumption \eqref{eq:subgraph-simplification-1} of the proof by contradiction, we have that
\begin{equation}\label{eq:subgraph-simplification-2}
\sum_{pq \in E_H}c_{pq}\left(1-x_{pq}\right) + \sum_{pqr\in T_H}c_{pqr}(1-x_{pq}x_{pr}x_{qr}) 
> \hspace{-4ex}
\sum_{pq \in \delta(R_{\Pi'}, V_H\setminus R_{\Pi'})}\hspace{-4ex}c_{pq} 
+ 
\sum_{pqr\in T_{\delta(R_{\Pi'}, V_H\setminus R_{\Pi'})}\cap T_H} \hspace{-6ex}c_{pqr}
\enspace .
\end{equation}
We evaluate the terms in \eqref{eq:subgraph-simplification-2} one-by-one, and express them as sums over elements in $\Pi'$ and $\Pi_H \setminus \Pi'$.
Firstly, we observe that for any $pq \in E_H$ we have $x_{pq} = 0$ if and only if there exist $UU'\in \tbinom{\Pi_H}{2}$ such that $pq\in \delta(U, U')$.
Therefore,
$\sum_{pq \in E_H}c_{pq}\left(1-x_{pq}\right) = \sum_{UU'\in \tbinom{\Pi_H}{2}} e_{UU'}$
whereas
$\sum_{pq \in \delta(R_{\Pi'}, V_H \setminus R_{\Pi'})}c_{pq} = \sum_{U \in \Pi'} \sum_{U'\in \Pi_H \setminus \Pi'} e_{UU'}.$
For the first sum, we use the decomposition 
$\tbinom{\Pi_H}{2} = \tbinom{\Pi'}{2}\cup \left\{UU' \mid U \in \Pi'\land U'\in \Pi_H\setminus \Pi'\right\} \cup \tbinom{\Pi_H\setminus \Pi'}{2}$, 
where the subsets are mutually disjoint.
Consequently:
\begin{equation}\label{eq:subgraph-edge-difference}
\sum_{pq \in E_H}c_{pq}\left(1-x_{pq}\right) - \sum_{pq\in \delta(R_{\Pi'}, V_H\setminus R_{\Pi'})}c_{pq} = \sum_{UU'\in \tbinom{\Pi'}{2}} e_{UU'} + \sum_{UU'\in \tbinom{\Pi_H\setminus \Pi'}{2}} e_{UU'}
\enspace .
\end{equation}
Secondly, for any $pqr \in T_H$, we have $x_{pq}x_{pr}x_{qr} = 0$ if and only if there exist $UU'\in \tbinom{\Pi_H}{2}$ such that $pqr\in T_{UUU'}\cup T_{UU'U'}$ or there exist $UU'U''\in \tbinom{\Pi_H}{3}$ such that $pqr\in T_{UU'U''}$.
Therefore,
\begin{equation}\label{eq:subgraph-triangle-difference-1}
\sum_{pqr\in T_H}c_{pqr} \left(1 - x_{pq}x_{pr}x_{qr}\right) = \sum_{UU'U''\in \tbinom{\Pi_H}{3}}t_{UU'U''} + \sum_{UU'\in \tbinom{\Pi_H}{2}}t_{UU'} 
\end{equation}
whereas
\begin{equation}\label{eq:subgraph-triangle-difference-2}
	\smallmath{
		\sum_{\substack{pqr\in T_{\delta(R_{\Pi'}, V_H \setminus R_{\Pi'})} \\ pqr\in T_H}} \hspace{-2ex} c_{pqr} = 
		\sum_{UU'\in \tbinom{\Pi'}{2}}
		\smashoperator[r]{\sum_{\substack{U''\in \Pi_H \\ U'' \notin \Pi'}}}t_{UU'U''} + \sum_{U \in \Pi'}\sum_{U'U''\in \tbinom{\Pi_H \setminus \Pi'}{2}}t_{UU'U''} + \sum_{U \in \Pi'}\sum_{\substack{U'\in \Pi_H \\ U' \notin \Pi'}}t_{UU'}.
}
\end{equation}
For the first sum, we use the decomposition
	\begin{align} \label{eq:decomposition2}
		\scriptsizemath{
	\tbinom{\Pi_H}{3} = \ \tbinom{\Pi'}{3}\cup \left\{UU'U''\mid UU'\in \tbinom{\Pi'}{2}\land U''\in \Pi_H \setminus \Pi'\right\} \cup \left\{UU'U''\mid U \in \Pi'\land U'U''\in \tbinom{\Pi_H\setminus \Pi'}{2}\right\} \cup \tbinom{\Pi_H \setminus \Pi'}{3}
	}
	\end{align}%
where again the subsets are mutually disjoint.	
By \eqref{eq:subgraph-triangle-difference-1} together with the decomposition \eqref{eq:decomposition2}, and \eqref{eq:subgraph-triangle-difference-2}
it follows that
\begin{align}
& \sum_{pqr\in T_H}c_{pqr} \left(1 - x_{pq}x_{pr}x_{qr}\right) - \sum_{pqr\in T_{\delta(R_{\Pi'}, V_H \setminus R_{\Pi'})}\cap T_H}c_{pqr} 
\\ 
= & \sum_{UU'U''\in \tbinom{\Pi'}{3}}t_{UU'U''} + \sum_{UU'U''\in \tbinom{\Pi_H\setminus \Pi'}{3}}t_{UU'U''}
+ \sum_{UU'\in \tbinom{\Pi'}{2}}t_{UU'} + \sum_{UU'\in \tbinom{\Pi_H\setminus \Pi'}{2}}t_{UU'}
\enspace . 
\label{eq:subgraph-triplet-difference}
\end{align}
\noindent
Combining \eqref{eq:subgraph-simplification-2}, \eqref{eq:subgraph-edge-difference} and \eqref{eq:subgraph-triplet-difference} 
yields
\begin{align}
0 < 
& \sum_{pq\in E_H} c_{pq}(1-x_{pq}) 
+ \sum_{pqr\in T_H} c_{pqr}(1-x_{pq}x_{pr}x_{qr}) 
- \hspace{-2ex}\sum_{pq\in \delta(R_{\Pi'}, V_H \setminus R_{\Pi'})} \hspace{-3ex} c_{pq} 
- \hspace{-2ex}\sum_{pqr\in T_{\delta(R_{\Pi'}, V_H \setminus R_{\Pi'})}\cap T_H} \hspace{-4ex} c_{pqr}
\\
& = \sum_{UU'\in \tbinom{\Pi'}{2}} e_{UU'} + \sum_{UU'\in \tbinom{\Pi_H\setminus \Pi'}{2}} e_{UU'} + \sum_{UU'U''\in \tbinom{\Pi'}{3}}t_{UU'U''} + \sum_{UU'U''\in \tbinom{\Pi_H\setminus \Pi'}{3}}t_{UU'U''}
\\
\label{eq:individual-sum-subset-partition}
& + \sum_{UU'\in \tbinom{\Pi'}{2}}t_{UU'} + \sum_{UU'\in \tbinom{\Pi_H\setminus \Pi'}{2}}t_{UU'} =: W_{\Pi'}
\enspace .
\end{align}
Let $k = \vert \Pi_H\vert$ be the number of components in the partition $\Pi_H$, and $W_{\Pi'}$ the right-hand side of the last inequality. 
Recall that $U_1, U_2 \in \Pi_H$, $U_1 \in \Pi'$, and $U_2 \notin \Pi'$. 
As $W_{\Pi'} > 0$, it follows that at least one of the sums in its definition must not be vacuous. 
Moreover, since its sums are indexed by pairs or triples of subsets all belonging either to $\Pi'$ or to $\Pi_H \setminus \Pi'$, we observe that there must exist at least another subset 
of nodes in $\Pi_H$ different from $U_1$ and $U_2$. 
In fact, each sum is indexed over at least two subsets of $\Pi_H$ that are either both of them in $\Pi'$ or in $\Pi_H \setminus \Pi’$. 
Hence, at least one set between $\Pi’$ and $\Pi_H \setminus \Pi’$ has two elements. 
Given also that neither $\Pi’$ nor $\Pi \setminus \Pi’$ can be empty, it follows that $k \geq 3$. 

We calculate 
\begin{equation}
	\label{eq:sum-subsets-partition}
W = \sum_{\substack{\Pi'\subseteq \Pi_H\colon \\U_1\in \Pi', U_2\not\in \Pi'}}W_{\Pi'}.
\end{equation}
We need this in order to contradict $\max_{x\in \cp_{G[V_H]}}\phi_{c'}(x) = 0$. 

	First, we consider the special case $k = 3$. Without loss of generality, let $\Pi_H = \{U_1, U_2, U_3\}$. We observe that there are only two subsets of $\Pi_H$ that can serve as $\Pi'$ in the sum \eqref{eq:sum-subsets-partition} are $\Pi'_1 = \{U_1\}$ and $\Pi'_2 = \{U_1, U_3\}$.
	We have that
		$W_{\Pi'_1} = e_{U_2U_3} + t_{U_2U_3}, 
		W_{\Pi'_2} = e_{U_1U_3} + t_{U_1U_3}.$
	Therefore, we arrive at
		$0 < W = e_{U_1U_3} + e_{U_2U_3} + t_{U_1U_3} + t_{U_2U_3} = \phi_{c'}(x^{\Pi''})$, 
	where $\Pi'' = \left(\Pi_H \setminus \{U_1, U_2\}\right)\cup \{U_1 \cup U_2\}$ is the partition obtained by merging $U_1$ and $U_2$. The last equality follows from Lemma~\ref{lemma:subgraph-helping-lemma-1}. That contradicts $\max_{x\in X_{G\left[V_H\right]}}\phi_{c'}(x) = 0$.
	
	Next, we consider the case $k \geq 4$. We evaluate the values of all the sums in \eqref{eq:individual-sum-subset-partition}. We start by looking at the possible values $e_{UU'}$ and $t_{UU'}$. We observe that there are four different scenarios for their subscripts. 
	\begin{enumerate}
		\item For any $UU'\in \tbinom{\Pi_H}{2}$ and $\{U, U'\}\cap \{U_1, U_2\} = \emptyset$, there are exactly $2^{k-3}$ 
		subsets $\Pi'\subseteq \Pi_H$ such that $e_{UU'}$ or $t_{UU'}$ occurs in $W_{\Pi'}$ and $U_1\in \Pi', U_2 \not\in \Pi'$, namely $2^{k-4}$ subsets $\Pi'$ for which we have additionally $U, U'\in \Pi'$ plus $2^{k-4}$ subsets $\Pi'$ for which we have additionally $U, U'\notin \Pi'$.		
		\item For any $U\in \Pi_H \setminus \{U_1, U_2\}$ there are exactly $2^{k-3}$ subsets $\Pi'\subseteq \Pi_H$ such that $e_{U_1U}$ or $t_{U_1U}$ occurs in $W_{\Pi'}$ and $U_1\in \Pi', U_2\notin \Pi'$, namely the $2^{k-3}$ subsets $\Pi'$ for which we additionally have $U\in \Pi'$.		
		\item For any $U\in \Pi_H \setminus \{U_1, U_2\}$ there are exactly $2^{k-3}$ subsets $\Pi'\subseteq \Pi_H$ such that $e_{U_2U}$ or $t_{U_2U}$ occurs in $W_{\Pi'}$ and $U_1\in \Pi', U_2\notin \Pi'$, namely the $2^{k-3}$ subsets $\Pi'$ for which we additionally have $U\notin \Pi'$.		
		\item There is no $\Pi'\subseteq \Pi_H$ such that $e_{U_1U_2}$ or $t_{U_1U_2}$ occurs in $W_{\Pi'}$ with $U_1\in \Pi', U_2\not\in \Pi'$. 
	\end{enumerate}	
	We now turn to the remaining sums in \eqref{eq:individual-sum-subset-partition}, namely the ones that have three subsets as indices. We observe that for these there are again four possibilities.
	\begin{enumerate}
		\item For any $UU'U''\in \tbinom{\Pi_H}{3}$ 
		and $\{U, U', U''\}\cap \{U_1, U_2\} = \emptyset$, there are exactly $2^{k-4}$ subsets $\Pi'\subseteq \Pi_H$ such that $t_{UU'U''}$ occurs in $W_{\Pi'}$ and $U_1\in \Pi', U_2\not\in \Pi'$, namely $2^{k-5}$ subsets $\Pi'$ for which we have additionally $U, U', U''\in \Pi'$ plus $2^{k-5}$ subsets $\Pi'$ for which we have additionally $U, U', U''\notin \Pi'$.
        \item For any $UU'\in \tbinom {\Pi_H}2$ and $\{U, U'\}\cap \{U_1, U_2\} = \emptyset$, there are exactly $2^{k-4}$ subsets $\Pi'\subseteq \Pi_H$ such that $t_{U_1UU'}$ occurs in $W_{\Pi'}$ and $U_1 \in \Pi', U_2\notin\Pi'$, namely the $2^{k-4}$ subsets $\Pi'$ for which we have additionally $U, U'\in \Pi'$.
        \item For any $UU'\in \tbinom {\Pi_H}2$ and $\{U, U'\}\cap \{U_1, U_2\} = \emptyset$, there are exactly $2^{k-4}$ subsets $\Pi'\subseteq \Pi_H$ such that $t_{U_2UU'}$ occurs in $W_{\Pi'}$ and $U_1 \in \Pi', U_2\notin\Pi'$, namely the $2^{k-4}$ subsets $\Pi'$ for which we have additionally $U, U'\notin \Pi'$.
		\item There is no $\Pi'\subseteq \Pi_H$ such that $t_{U_1U_2U}$ occurs in $W_{\Pi'}$ for any $U \in \Pi_H\setminus \{U_1, U_2\}$ for which $U_1\in \Pi', U_2\not\in \Pi'$.
	\end{enumerate}
At this point, we put all of this together in \eqref{eq:sum-subsets-partition}:
\begin{align}
0 < W &= \sum_{\substack{\Pi'\subseteq \Pi_H\colon \\U_1\in \Pi', U_2\not\in \Pi'}}W_{\Pi'} 
= 2^{k-3}\sum_{UU'\in \tbinom{\Pi_H}{2}} e_{UU'} - 2^{k-3}e_{U_1U_2} +  2^{k-4} \sum_{UU'U''\in \tbinom{\Pi_H}{3}} t_{UU'U''} \\
&- 2^{k-4}  \sum_{\substack{U \in \Pi_H : \\ U \not\in \{U_1, U_2\}}}t_{U_1U_2U}
+ 
2^{k-3}\sum_{UU'\in \tbinom{\Pi_H}{2}} t_{UU'} - 2^{k-3} t_{U_1U_2} 
\\
& = 2^{k-3}\sum_{UU'\in \tbinom{\Pi''}{2}} e_{UU'}+  2^{k-4} \sum_{\mathclap{UU'U''\in \tbinom{\Pi''}{3}}}t_{UU'U''} + 2^{k-3}\sum_{UU'\in \tbinom{\Pi''}{2}}t_{UU'} = 2^{k-3}\phi_{c'}(x^{\Pi''})
\enspace ,
\end{align}
where $\Pi'' = \left(\Pi_H\setminus \{U_1, U_2\}\right)\cup \{U_1 \cup  U_2\}$ is the partition obtained by merging $U_1$ and $U_2$. 
The last equality follows from Lemma~\ref{lemma:subgraph-helping-lemma-1}. 
This contradicts $\max_{x\in \cp_{G[V_H]}}\phi_{c'}(x) = 0$. 
Therefore, this implies that there exists a subset $U \subset V_H$ with $i \in U$ and $j \in V_H \setminus U$ such that inequality \eqref{eq:subgraph-simplification-1} is fulfilled.

Let $U \subset V_H$ be a subset for which \eqref{eq:subgraph-simplification-1} holds. Therefore, we have that
\small{\begin{align}
&\phi_c(x') - \phi_c(x) \overset{\eqref{eq:subgraph-plugin-map-1}}{=} 
\sum_{pq\in E_H} c_{pq}(1- x_{pq}) + \sum_{pqr\in T_H}c_{pqr}(1-x_{pq}x_{pr}x_{qr}) - 
\sum_{\mathclap{pq\in \delta(V_H)}}c_{pq}x_{pq} -
\sum_{\mathclap{pqr\in T_{\delta(V_H)}}}c_{pqr}x_{pq}x_{pr}x_{qr} 
\\
& \overset{\eqref{eq:subgraph-simplification-1}}{\leq} \sum_{pq\in \delta(U, V_H \setminus U)}c_{pq} + \sum_{pqr\in T_{\delta(U, V_H \setminus U)}\cap T_H} c_{pqr} - \sum_{pq \in \delta(V_H)\cap E^-}c_{pq}- \sum_{pqr \in T_{\delta(V_H)}\cap T^-}c_{pqr} 
\overset{\eqref{eq:assumption-subgraph-criterion-uv-cuts}}{\leq} \;0.
\end{align}}%
\normalsize
Consequently, the map $\sigma$ is improving. 
By applying Corollary~\ref{lemma:persistency-variable}, we conclude the proof.
\end{proof}
\vspace{6pt}
We are unaware of an efficient method for finding subsets $V_H \subseteq V$ and $U \subseteq V$ for which the conditions~\eqref{eq:assumption-subgraph-criterion-uv-cuts} are satisfied. For subsets $V_H$ with $\vert V_H \vert \in \{2,3\}$, two corollaries of  Proposition~\ref{lemma:general-subgraph-edge-join} provide efficiently-verifiable partial optimality conditions:
\begin{corollary}
	\label{corollary:edge-subgraph-edge-join}
	Let $G = (V, E)$ a graph, $c\in \mathbb{R}^{I(G)}$ and $ij \in E$. 
	If  
	$c_{ij} \leq \sum_{pq\in \delta(ij)\cap E^-}c_{pq} + \sum_{pqr\in T_{\delta(ij)}\cap T^-}c_{pqr}$, 
	then there exists an optimal solution $x^*$ to \csp{G}{c} such that $x^*_{ij} = 1$.
\end{corollary}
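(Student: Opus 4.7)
The plan is to derive this corollary as the specialization of Proposition~\ref{lemma:general-subgraph-edge-join} to the minimal choice $V_H = \{i,j\}$. Under this choice, $E_H = \{ij\}$ and $T_H = \emptyset$ (since $\binom{V_H}{3}$ is empty), and the only subset $U \subseteq V_H$ with $i \in U$ and $j \in V_H \setminus U$ is $U = \{i\}$. So both hypotheses of the proposition reduce to very concrete conditions, and the corollary follows by checking that they are implied by $c_{ij} \leq \sum_{pq\in \delta(ij)\cap E^-}c_{pq} + \sum_{pqr\in T_{\delta(ij)}\cap T^-}c_{pqr}$.

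First I would compute $c'$ from \eqref{eq: defcprime1}–\eqref{eq: defcprime3}. Since $T_H = \emptyset$, the triple costs vanish, the constant becomes $c'_\emptyset = c_{ij}$, and the single edge cost becomes $c'_{ij} = -c_{ij}$. The feasible set $\cp_{G[V_H]}$ just contains the two vectors $x_{ij} = 0$ and $x_{ij} = 1$, giving objective values $c_{ij}$ and $0$ respectively. Therefore $\max_{x\in \cp_{G[V_H]}}\phi_{c'}(x) = \max\{c_{ij}, 0\}$, and this equals $0$ exactly when $c_{ij} \leq 0$.

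Next I would verify that the hypothesis of the corollary already forces $c_{ij} \leq 0$: the right-hand side $\sum_{pq\in \delta(ij)\cap E^-}c_{pq} + \sum_{pqr\in T_{\delta(ij)}\cap T^-}c_{pqr}$ is a sum of strictly negative numbers (by the definitions of $E^-$ and $T^-$), hence is $\leq 0$, so $c_{ij} \leq 0$ as required. This discharges the first hypothesis of Proposition~\ref{lemma:general-subgraph-edge-join}.

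Finally I would reduce condition \eqref{eq:assumption-subgraph-criterion-uv-cuts} for the single relevant choice $U = \{i\}$, $V_H \setminus U = \{j\}$. Here $\delta(U, V_H \setminus U) = \{ij\}$, $T_{\delta(U, V_H \setminus U)} \cap T_H = \emptyset$, and $\delta(V_H) = \delta(ij)$ (in the paper's convention), so \eqref{eq:assumption-subgraph-criterion-uv-cuts} collapses to exactly the inequality assumed in the corollary. Invoking Proposition~\ref{lemma:general-subgraph-edge-join} then yields the existence of an optimal $x^*$ with $x^*_{ij} = 1$. There is no real obstacle here; the only subtlety worth writing down carefully is matching the notation $\delta(ij)$ with $\delta(V_H)$ when $V_H = \{i,j\}$ and confirming that $T_H = \emptyset$ trivializes every term indexed by $T_H$.
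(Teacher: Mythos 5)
Your proposal is correct and follows exactly the route the paper intends: the paper states Corollary~\ref{corollary:edge-subgraph-edge-join} as the specialization of Proposition~\ref{lemma:general-subgraph-edge-join} to $V_H=\{i,j\}$, and your computation of $c'$, the verification that $\max_{x\in \cp_{G[V_H]}}\phi_{c'}(x)=\max\{c_{ij},0\}=0$ (using that the hypothesis forces $c_{ij}\leq 0$), and the collapse of \eqref{eq:assumption-subgraph-criterion-uv-cuts} for the unique $U=\{i\}$ to the stated inequality are all accurate.
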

\begin{corollary}
	\label{corollary:triplet-subgraph-edge-join}
	Let $G = (V, E)$ a graph, $c\in \mathbb{R}^{I(G)}$ and $ijk \in T$. 
	If 
\small{	\begin{gather}
		c_{ij} + c_{ik} \leq 0, \quad
		c_{ij} + c_{jk} \leq 0, \quad
		c_{ik} + c_{jk} \leq 0, \quad
		c_{ij} + c_{ik} + c_{jk} \leq 0, \quad
		c_{ij} + c_{ik} + c_{jk} + \frac 12 c_{ijk} \leq 0,
		\\
		c_{ij} + c_{ik} + c_{ijk} \leq \sum_{\substack{pq \in \delta(ijk) \\ pq \in E^-}} c_{pq} + \sum_{\substack{pqr \in T_{\delta(ijk)} \\ pqr \in T^-}} c_{pqr}, \quad
		c_{jk} + c_{ik} + c_{ijk} \leq \sum_{\substack{pq \in \delta(ijk) \\ pq \in E^-}} c_{pq}	+ \sum_{\substack{pqr \in T_{\delta(ijk)} \\ pqr \in T^-}} c_{pqr}
	\end{gather}}%
	\normalsize
	then there exists an optimal solution $x^*$ to \csp{G}{c} such that $x^*_{ik} = 1$.
\end{corollary}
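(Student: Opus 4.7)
The plan is to apply Proposition~\ref{lemma:general-subgraph-edge-join} with $V_H = \{i, j, k\}$, so that $G[V_H]$ is a triangle, $E_H = \{ij, ik, jk\}$, $T_H = \{ijk\}$, and the feasible set $\cp_{G[V_H]}$ contains exactly five indicator vectors (the three Hamming-weight-two vectors are excluded by the triangle cycle inequality). The role of the distinguished edge ``$ij$'' in the proposition will be played by the edge $ik$.

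First, I would enumerate the subsets $U \subseteq V_H$ satisfying $i \in U$ and $k \in V_H \setminus U$: these are exactly $\{i\}$ and $\{i, j\}$. In each case $\delta(U, V_H \setminus U)$ is a pair of triangle edges that cuts the unique triple $ijk \in T_H$, so the cut condition \eqref{eq:assumption-subgraph-criterion-uv-cuts} specializes, respectively, to the two inequalities at the end of the corollary's hypothesis (the ones with $c_{ijk}$ on the left and the $\delta(ijk)$-sums on the right).

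Second, I would specialize $c'$ via \eqref{eq: defcprime1}--\eqref{eq: defcprime3}, obtaining $c'_\emptyset = \tfrac12 c_{ijk} + c_{ij} + c_{ik} + c_{jk}$, $c'_{pq} = -c_{pq} + \tfrac12 c_{ijk}$ for each $pq \in E_H$, and $c'_{ijk} = -2 c_{ijk}$. Since $\phi_{c'}(1,1,1) = 0$ by the identity stated just before Lemma~\ref{lemma:subgraph-helping-lemma-1}, the hypothesis $\max_{x \in \cp_{G[V_H]}} \phi_{c'}(x) = 0$ amounts to nonpositivity at the four remaining feasible vectors: the ``all apart'' vector yields exactly condition (C5), while the three ``one pair together'' vectors yield inequalities to be recovered from (C1)--(C4).

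The main obstacle is this last step: the inequality $\phi_{c'} \leq 0$ at each ``one pair together'' vector has the form $c_{ijk}$ plus two edge costs $\leq 0$, and the pair-sum conditions (C1)--(C3) do not literally coincide with these; the verification requires combining (C1)--(C3) with (C5) via suitable linear combinations, and along the way one sees that (C4) is in fact a redundant consequence of averaging (C1)--(C3). Once the four $\phi_{c'}$-inequalities are secured and the two cut conditions are confirmed as above, all hypotheses of Proposition~\ref{lemma:general-subgraph-edge-join} hold and its conclusion gives the existence of an optimal $x^*$ with $x^*_{ik} = 1$.
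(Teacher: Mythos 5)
Your proposal is correct and matches the paper's intended derivation: the corollary is presented, without further proof, precisely as the specialization of Proposition~\ref{lemma:general-subgraph-edge-join} to $V_H=\{i,j,k\}$ with the distinguished edge played by $ik$, and your identification of the two cuts $U=\{i\}$, $U=\{i,j\}$ with the last two hypotheses and of the five feasible vectors of the triangle with the $\phi_{c'}\le 0$ requirements is exactly right. The one step you leave implicit does go through: for the vector with only $x_{ij}=1$ one needs $c_{ijk}+c_{ik}+c_{jk}\le 0$, which follows by writing $c_{ijk}+c_{ik}+c_{jk}\le -2(c_{ij}+c_{ik}+c_{jk})+c_{ik}+c_{jk}=-(c_{ij}+c_{ik})-(c_{ij}+c_{jk})\le 0$ using the fifth and then the first two pair conditions, with the other two cases symmetric (and, as you note, the condition $c_{ij}+c_{ik}+c_{jk}\le 0$ is indeed the average of the three pair conditions and hence redundant).
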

We now present the last main partial optimality condition of this article.
\begin{proposition}
	\label{proposition:subset-join-proposition}
	Let $G = (V, E)$ a graph and $c\in \mathbb{R}^{I(G)}$. 
	Moreover, let $V_H \subseteq V$, $E_H = E\cap \tbinom{V_H}{2}$ and $T_H = T\cap \tbinom{V_H}{3}$. 
	If for every $ij \in E_H$ we have 
	\begin{align}\label{eq:subset-join-inequality}
		\footnotemath{%
			\max_{\substack{x\in \cp_G : \\ x_{ij} = 0}} \Bigl\{ 
		\hspace{-1ex}
		\smashoperator[r]{\sum_{pqr \in T_H}} c_{pqr}(1-x_{pq}x_{pr}x_{qr}) + 
		\sum_{\mathclap{pq\in E_H}}c_{pq}(1-x_{pq}) \Bigr\}  
		\leq \min_{\substack{x\in \cp_G : \\x_{ij} = 0}} 
		\Bigl\{\hspace{-1ex} 
		\smashoperator[r]{\sum_{pqr\in T_{\delta(V_H)}}}c_{pqr}x_{pq}x_{pr}x_{qr} + 
		\sum_{\mathclap{pq\in \delta(V_H)}}c_{pq}x_{pq} \Bigr\}
	}
	\end{align}%
	then there is an optimal solution $x^*$ to \csp{G}{c} such that $x^*_{ij} = 1$, $\forall ij \in E_H$.
\end{proposition}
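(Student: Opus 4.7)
The plan is to construct an improving map that simultaneously cuts $V_H$ from its complement and joins every edge inside $V_H$. I would define $\sigma\colon \cp_G\to \cp_G$ by $\sigma(x) = x$ if $x_{ij}=1$ for every $ij\in E_H$, and $\sigma(x) = (\sigma_{V_H}\circ \sigma_{\delta(V_H)})(x)$ otherwise. This is well-defined as the composition of two well-defined elementary maps, and $\sigma(x)\in Q$ for the set $Q := \{y\in \cp_G \mid y_{ij}=1 \ \forall ij\in E_H\}$ by construction.

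The key preliminary step is the componentwise description
\[
\sigma(x)_{pq} = \begin{cases} 1 & \textnormal{if } pq\in E_H,\\ 0 & \textnormal{if } pq\in \delta(V_H),\\ x_{pq} & \textnormal{if } \{p,q\}\subseteq V\setminus V_H.\end{cases}
\]
The first two cases follow directly from Definitions~\ref{def:elementary-cut-map} and~\ref{def:elementary-join-map}: after the cut no edge in $\delta(V_H)$ has value~$1$, and no such edge lies in $E_H$, so the subsequent join cannot fire on it. The third case mirrors Lemma~\ref{lemma:auxiliary-edge-join}: any $pq$-path that would witness a spurious join must either cross $\delta(V_H)$ (impossible for the same reason) or remain inside $V\setminus V_H$ along value-$1$ edges of $x$, in which case the cycle inequalities defining $\cp_G$ already force $x_{pq}=1$.

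Given this description, triples in $T\setminus(T_H\cup T_{\delta(V_H)})$ have all three endpoints in $V\setminus V_H$ and therefore contribute zero to the objective gap, which collapses to
\[
\phi_c(\sigma(x)) - \phi_c(x) = \Bigl[\sum_{pq\in E_H} c_{pq}(1-x_{pq}) + \sum_{pqr\in T_H} c_{pqr}(1-x_{pq}x_{pr}x_{qr})\Bigr] - \Bigl[\sum_{pq\in \delta(V_H)} c_{pq}\, x_{pq} + \sum_{pqr\in T_{\delta(V_H)}} c_{pqr}\, x_{pq}x_{pr}x_{qr}\Bigr].
\]
For any $x\notin Q$ there exists $ij\in E_H$ with $x_{ij}=0$. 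Applying the hypothesis~\eqref{eq:subset-join-inequality} for that particular $ij$, the left bracket is dominated by the maximum on the left-hand side of~\eqref{eq:subset-join-inequality}, which by assumption does not exceed the minimum on its right-hand side, which in turn bounds the right bracket from below. Hence $\phi_c(\sigma(x))\leq \phi_c(x)$, so $\sigma$ is improving, and Proposition~\ref{lemma:persistency-predicate} applied with $Q$ yields the claim.

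The main obstacle I anticipate is the third case of the componentwise description of $\sigma$: it is the only point in the argument where feasibility of $x$ (via the cycle inequalities encoding transitivity) is used nontrivially, and without it the composite map could silently alter edges fully inside $V\setminus V_H$ and introduce stray terms that would obstruct the clean separation of the objective gap into the two brackets above.
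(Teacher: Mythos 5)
Your proposal is correct and follows essentially the same route as the paper: the same composite map $\sigma_{V_H}\circ\sigma_{\delta(V_H)}$, the same componentwise description of its action, the same collapse of the objective gap into the two brackets, and the same application of the hypothesis for the particular $ij\in E_H$ with $x_{ij}=0$ followed by Proposition~\ref{lemma:persistency-predicate}. The only difference is that you spell out the justification of the third case of the componentwise description (via the transitivity/cycle inequalities), which the paper states without elaboration; this is a welcome but not divergent addition.
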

\begin{proof}
We define $\sigma \colon \cp_G \to \cp_G$ such that 
\begin{equation}
\sigma(x) := \begin{cases}
x & \textnormal{if $x_{ij} = 1 \: \forall ij \in E_H$}\\
\left(\sigma_{V_H} \circ \sigma_{\delta(V_H)}\right)(x) & \textnormal{otherwise}
\end{cases}.
\end{equation}
Let $x' = \sigma(x)$ for every $x \in \cp_G$. 
Firstly, we have $x'_{ij} = 1$ for every $ij \in E_H$. 
Secondly, we show that $\sigma$ is an improving map. 
Let $x \in \cp_G$ such that $x_{ij} = 1$ for all $ij \in E_H$.
In this case, we have $\phi_c(x') = \phi_c(x)$ by definition of $x'$. 
Now, let us consider the complementary case, i.e.~let $x \in \cp_G$ such that there exists $ij \in E_H$ for which $x_{ij} = 0$. 
Then,
\begin{equation}
x'_{pq} = \begin{cases}
1 & \textnormal{if $pq \in E_H$}\\
0 & \textnormal{if $pq \in \delta(V_H)$}\\
x_{pq} & \textnormal{otherwise}
\end{cases}
\enspace .
\end{equation} 
Therefore, it follows that
\small{\begin{align}
&\phi_c(x') - \phi_c(x) =
\sum_{\mathclap{pq\in E_H}} c_{pq}(1-x_{pq}) - 
\smashoperator[r]{\sum_{pq\in \delta(V_H)}} c_{pq}x_{pq} + 
\smashoperator[r]{\sum_{pqr \in T_H}} c_{pqr}(1-x_{pq}x_{pr}x_{qr}) - 
\smashoperator[r]{\sum_{pqr\in T_{\delta(V_H)}}} c_{pqr}x_{pq}x_{pr}x_{qr}   
\\
&\leq 
\max_{\substack{x\in \cp_G : \\ x_{ij} = 0}} 
\Bigl\{ \smashoperator[r]{\sum_{pqr \in T_H}} c_{pqr}(1-x_{pq}x_{pr}x_{qr}) + 
\sum_{\mathclap{pq\in E_H}} c_{pq}(1-x_{pq}) \Bigr\} - 
\min_{\substack{x\in \cp_G : \\ x_{ij} = 0}}\Bigl\{ \smashoperator[r]{\sum_{pqr\in T_{\delta(V_H)}}} c_{pqr}x_{pq}x_{pr}x_{qr} + 
\sum_{\mathclap{pq\in \delta(V_H)}}c_{pq}x_{pq} \Bigr\} \\
&\overset{\eqref{eq:subset-join-inequality}}{\leq} 0.\qedhere
\end{align}}\normalsize%
\end{proof}%
We are unaware of an efficient method for deciding~\eqref{eq:subset-join-inequality} for arbitrary subsets $V_H \subseteq V$ and costs $c \in \mathbb{R}^{I(G)}$. 
Yet, Corollary~\ref{corollary:subset-join-all-pairs-at-once} below describes one setting in which a suitable subset can be searched for heuristically, in polynomial time. 
\begin{corollary}
	\label{corollary:subset-join-all-pairs-at-once}
	Let $G = (V, E)$ a graph and $c\in \mathbb{R}^{I(G)}$. 
	Moreover, let $V_H \subseteq V$, $E_H = E\cap \tbinom{V_H}{2}$ and $T_H = T\cap \tbinom{V_H}{3}$. 
	If 
	\begin{gather}
		\smallmath{c_{pq} \leq 0 \qquad \forall pq \in E_H,}
		\label{eq:cond-submod-1}
		\\
		\smallmath{c_{pqr} \leq 0 \qquad \forall pqr\in T_H,}
		\label{eq:cond-submod-2} \\
		\smallmath{\max_{\substack{U\subset V_H : \\ U \neq \emptyset}} \Bigl\{ \sum_{pqr\in T_{\delta(U)}\cap T_H}c_{pqr} + \sum_{pq\in \delta(U, V_H \setminus U)}c_{pq} \Bigr\} 
		\leq \quad \sum_{pqr\in T_{\delta(V_H)}\cap T^-}c_{pqr} + \sum_{pq\in \delta(V_H)\cap E^-}c_{pq} \enspace ,}
		\label{eq:subset-join-corollary-all-pairs-condition}
	\end{gather}%
	then there is an optimal solution $x^*$ to \csp{G}{c} such that $x^*_{ij} = 1$, $\forall ij\in E_H$.
\end{corollary}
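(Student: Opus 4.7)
The plan is to verify the inequality~\eqref{eq:subset-join-inequality} of Proposition~\ref{proposition:subset-join-proposition} for every $ij \in E_H$, by upper-bounding its left-hand side and lower-bounding its right-hand side, then chaining these bounds via the hypothesis~\eqref{eq:subset-join-corollary-all-pairs-condition}. Once \eqref{eq:subset-join-inequality} is established, Proposition~\ref{proposition:subset-join-proposition} immediately yields the claim.

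For the LHS of~\eqref{eq:subset-join-inequality}, fix $ij \in E_H$ and $x \in \cp_G$ with $x_{ij} = 0$. Let $\Pi$ be the partition of $V$ induced by $x$ and let $\Pi_H := \{U \cap V_H : U \in \Pi,\, U \cap V_H \neq \emptyset\}$ be the partition of $V_H$ it induces. Since $x_{ij} = 0$, the vertices $i,j$ lie in distinct blocks of $\Pi_H$; let $U^\star \in \Pi_H$ be the block containing $i$, so $j \in V_H \setminus U^\star$. Every pair $pq \in E_H$ and every triple $pqr \in T_H$ that is separated by the bipartition $\{U^\star, V_H \setminus U^\star\}$ is also separated by $\Pi_H$, so the set of ``cut'' edges and triples under the bipartition is contained in the corresponding set under $\Pi_H$. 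Because $c_{pq} \leq 0$ and $c_{pqr} \leq 0$ for all $pq \in E_H$ and $pqr \in T_H$ by \eqref{eq:cond-submod-1}--\eqref{eq:cond-submod-2}, summing over a smaller set of negative contributions can only yield a larger value; this gives
\begin{align*}
\smashoperator[r]{\sum_{pqr \in T_H}} c_{pqr}(1-x_{pq}x_{pr}x_{qr}) + \sum_{\mathclap{pq\in E_H}}c_{pq}(1-x_{pq}) \leq \smashoperator[r]{\sum_{pqr\in T_{\delta(U^\star)}\cap T_H}}c_{pqr} + \smashoperator[r]{\sum_{pq\in \delta(U^\star, V_H \setminus U^\star)}}c_{pq}.
\end{align*}
Since $U^\star$ is a nonempty proper subset of $V_H$, the LHS of~\eqref{eq:subset-join-inequality} is bounded above by the maximum appearing in~\eqref{eq:subset-join-corollary-all-pairs-condition}.

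For the RHS of~\eqref{eq:subset-join-inequality}, I use the trivial relaxation that dropping the constraints $x \in \cp_G$ and $x_{ij} = 0$ can only decrease the minimum. Since $c_{pq}\, x_{pq} \geq -c_{pq}^-$ for every binary $x_{pq}$ and $c_{pqr}\, x_{pq}x_{pr}x_{qr} \geq -c_{pqr}^-$ for every binary product, termwise summation gives
\begin{align*}
\min_{\substack{x\in \cp_G : \\x_{ij} = 0}} \Bigl\{ \smashoperator[r]{\sum_{pqr\in T_{\delta(V_H)}}}c_{pqr}x_{pq}x_{pr}x_{qr} + \sum_{\mathclap{pq\in \delta(V_H)}}c_{pq}x_{pq} \Bigr\} \geq \smashoperator[r]{\sum_{pqr\in T_{\delta(V_H)}\cap T^-}} c_{pqr} + \smashoperator[r]{\sum_{pq\in \delta(V_H)\cap E^-}} c_{pq},
\end{align*}
which is exactly the RHS of~\eqref{eq:subset-join-corollary-all-pairs-condition}. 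Chaining the two bounds through~\eqref{eq:subset-join-corollary-all-pairs-condition} establishes~\eqref{eq:subset-join-inequality} for every $ij \in E_H$, and Proposition~\ref{proposition:subset-join-proposition} concludes the proof.

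The main obstacle I expect is the bipartition-coarsening step for the LHS: one must clearly explain the passage from an arbitrary feasible $x \in \cp_G$ to a bipartition of $V_H$ separating $i$ and $j$, using signs of the costs to argue that coarsening only increases the objective, and verifying that the cut set $T_{\delta(U^\star)} \cap T_H$ written in the corollary's notation (where $\delta(U^\star) = \delta(U^\star, V \setminus U^\star)$ in $G$) correctly captures triples in $V_H$ with an edge crossing the bipartition of $V_H$. The RHS lower bound and the final chaining are then routine.
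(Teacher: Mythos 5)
Your proof is correct and takes essentially the same route as the paper's: both reduce \eqref{eq:subset-join-inequality} to the hypothesis by coarsening the partition induced by any $x$ with $x_{ij}=0$ to the bipartition $\{U^\star, V_H\setminus U^\star\}$ separating $i$ from $j$ (valid because all costs in $E_H$ and $T_H$ are non-positive), and by using the termwise lower bound on the right-hand side. The only cosmetic difference is that the paper states the left-hand side as an exact maximum over such bipartitions, whereas you prove only the needed upper bound.
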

\begin{proof}
	The claim follows from the following three observations. 
	Let \eqref{eq:cond-submod-1} and \eqref{eq:cond-submod-2} be satisfied. 
	Firstly, for any $ij \in E_H$, we have that the left-hand side of \eqref{eq:subset-join-inequality} is equal to 
	$\max_{\substack{U\subsetneq V_H : \\ i \in U, \, j \not\in U}}\sum_{pqr \in T_{\delta(U)}\cap T_H}c_{pqr} + \sum_{pq \in \delta(U, V_H \setminus U)} c_{pq}.$
	This means that the maximizer is given by a feasible $x\in \cp_G$ whose restriction to $E_H$ corresponds to a partition $\Pi$ of 
	$V_H$ into two subsets. 
	To see this, note that for any $ij\in E_H$, instead of maximizing the left-hand side of \eqref{eq:subset-join-inequality} over $x\in X_G$ such that $x_{ij} = 0$ we can equivalently maximize over all $x'\in X_{G[V_H]}$ such that $x'_{ij} = 0$. 
	Now, let us assume that there exists $ij \in E_H$ such that the maximizer is given by a feasible $x' \in \cp_{G[V_H]}$ corresponding to a partition $\Pi'$ of 
	$V_H$ into more than two subsets.
	Without loss of generality, let $U_1, U_2 \in \Pi'$ such that $i\in U_1$, $j\in U_2$. 
	Then, the vector $x'' \in \cp_{G[V_H]}$ corresponding to the partition $\Pi'' = \left\{ U_1, \bigcup_{U \in \Pi' \setminus \{U_1\}} U \right\}$ has objective value at least the objective value of $x'$.
	This follows from the facts that all the costs are non-positive and that $\Pi'$ is a refinement of $\Pi''$.
	Secondly, by using the trivial lower bound of the right-hand side of \eqref{eq:subset-join-inequality} we have that
	it is at least 
	$\sum_{pqr\in T_{\delta(V_H)}\cap T^-}c_{pqr} + \sum_{pqr\in \delta(V_H)\cap E^-}c_{pq}$.
	Thirdly, for any $ij \in E_H$ we have
	\small{\begin{align}
		\max_{\substack{U\subset V_H : \\ U \neq \emptyset}} \Bigl\{ \sum_{pqr\in T_{\delta(U)}\cap T_H}c_{pqr} + \sum_{pq\in \delta(U, V_H \setminus U)}c_{pq} \Bigr\} \geq \max_{\substack{U\subsetneq V_H : \\ i \in U, \, j \not\in U}}\sum_{pqr \in T_{\delta(U)}\cap T_H}c_{pqr} + \sum_{pq \in \delta(U, V_H \setminus U)} c_{pq}\enspace.
	\end{align}}\normalsize
	These three observations, together with satisfaction of \eqref{eq:cond-submod-1}--\eqref{eq:subset-join-corollary-all-pairs-condition}, imply that~\eqref{eq:subset-join-inequality} holds for every $ij\in E_H$. Hence, Corollary~\ref{corollary:subset-join-all-pairs-at-once} specializes Proposition~\ref{proposition:subset-join-proposition} and the claim follows.
\end{proof}
\section{Deciding Partial Optimality Efficiently}\label{sec: testing}
Next, we describe algorithms for deciding the partial optimality conditions introduced in the previous section.
This includes exact algorithms and heuristics, and we discuss their time complexity.
We denote by 
$\degree{G} = \max_{p\in V} \lvert\{q\in V\setminus \{p\}\mid pq\in E\}\rvert$, and 
$\degreethree{G} = \max_{pq\in E} \lvert \{r\in V\setminus \{p, q\}\mid pqr\in T\}\rvert$,
the maximum number of adjacent vertices of a vertex or the maximum number of triples an edge is a part of.

We start by examining Proposition~\ref{lemma:persistency-subset-separation}. 
Observe that it suffices to find any subset $U \subseteq V$ that fulfills Proposition~\ref{lemma:persistency-subset-separation}. 
Because of that, we focus our attention on minimal subsets, with respect to set inclusion.
\begin{proposition}
	Let $G = (V, E)$ a graph and $c\in \mathbb{R}^{I(G)}$. The minimal subsets $U \subseteq V$ with respect to set inclusion that satisfy \eqref{eq:edge-cut-condition-1}--\eqref{eq:edge-cut-condition-2} are exactly the vertex sets of the connected components of the graph $G' = (V, E')$ 
    with 
	$E' = \left\{pq\in E | c_{pq} < 0 \lor \exists r \in V 
	\text{ s.t. } 
	pqr\in T \colon c_{pqr} < 0\right\}$.
\end{proposition}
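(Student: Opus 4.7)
The plan is to reformulate the two cut conditions \eqref{eq:edge-cut-condition-1}--\eqref{eq:edge-cut-condition-2} as a single statement about the edge set $E'$, and then appeal to the elementary characterization of unions of connected components.

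First I would show that for any $U \subseteq V$, conditions \eqref{eq:edge-cut-condition-1}--\eqref{eq:edge-cut-condition-2} hold if and only if $\delta(U) \cap E' = \emptyset$. For the forward direction, take any $pq \in \delta(U)$; by \eqref{eq:edge-cut-condition-1} we have $c_{pq} \geq 0$, and for any $r \in V$ with $pqr \in T$ we have $pqr \in T_{\delta(U)}$ (because $pq \in \delta(U) \cap \binom{pqr}{2}$), so \eqref{eq:edge-cut-condition-2} gives $c_{pqr} \geq 0$; hence $pq \notin E'$. For the converse, suppose $\delta(U) \cap E' = \emptyset$. Then \eqref{eq:edge-cut-condition-1} is immediate from the definition of $E'$. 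For \eqref{eq:edge-cut-condition-2}, note that any $pqr \in T_{\delta(U)}$ has, by definition of $T_{\delta(U)}$, at least one of its three pairs in $\delta(U)$; call it $e$. Since $e \notin E'$ and $pqr$ is a triple through $e$, the definition of $E'$ forces $c_{pqr} \geq 0$.

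Next I would translate $\delta(U) \cap E' = \emptyset$ into the structural statement that $U$ is a union of connected components of $G' = (V, E')$. This is the standard fact that a subset of vertices has empty edge-boundary in a graph exactly when it is a union of connected components; it follows because any edge of $E'$ with one endpoint in $U$ and one outside would lie in $\delta(U) \cap E'$, and conversely a union of components cannot have such an edge.

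Finally, I would observe that among the non-empty subsets of $V$ that are unions of connected components of $G'$, the minimal ones with respect to set inclusion are exactly the individual connected components: any union of two or more components strictly contains each of them, and each single connected component is itself a valid (non-empty) union. Combining the three steps yields the claim. The main technical point is the careful handling of the triple condition in the first step, namely recognizing that membership in $T_{\delta(U)}$ only requires \emph{one} pair to be cut, which is what couples the triple-cost constraint to the edge-level definition of $E'$; beyond that, the argument is purely combinatorial.
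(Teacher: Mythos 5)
Your proof is correct and takes essentially the same route as the paper's: both arguments reduce conditions \eqref{eq:edge-cut-condition-1}--\eqref{eq:edge-cut-condition-2} to the statement $\delta(U)\cap E'=\emptyset$ and then exploit the connectivity structure of $G'$. Your version is a bit more modular---isolating that equivalence and then invoking the standard fact that sets with empty edge boundary are exactly unions of connected components, whose minimal non-empty instances are the single components---whereas the paper establishes the two inclusions by separate contradiction arguments, but the underlying ideas coincide.
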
%
\begin{proof}
	First we show that every $U \subseteq V$ inducing a connected component in $G'$ is such that \eqref{eq:edge-cut-condition-1}--\eqref{eq:edge-cut-condition-2} are satisfied. 
	Let us assume that there is some $U \subseteq V$  inducing a connected component in $G'$, but \eqref{eq:edge-cut-condition-1}--\eqref{eq:edge-cut-condition-2} do not hold. 
	In this case, there is some $pq\in \delta(U)$ such that $c_{pq} < 0$ or $pqr\in T_{\delta(U)}$ such that $c_{pqr} < 0$. Without loss of generality, let us assume that $p\notin U$ and $q\in U$ in both cases. Then we have that $pq\in E'$ by the definition of $E'$. Since $U \cup \{p\}$ also induces a connected subgraph of $G'$, this contradicts the assumption that $U$ induces a (maximal) connected component of $G'$.
		
	Second we show that every minimal $U \subseteq V$ with respect to set inclusion that satisfies \eqref{eq:edge-cut-condition-1}--\eqref{eq:edge-cut-condition-2} induces a connected component of $G'$.
	For the sake of contradiction, let us assume that $U$ does not induce a connected component of $G'$. 
	By definition of $E'$, we have that  
	$\delta(U) \cap E' = \emptyset$, since $U$ satisfies \eqref{eq:edge-cut-condition-1}--\eqref{eq:edge-cut-condition-2}. 
	Therefore every connected component of the subgraph $G'[U]$ is also a connected component of the underlying graph $G'$.
	If $G'[U]$ has exactly one connected component, then $U$ induces a connected component of $G'$, which contradicts our assumption.
	We now consider the case that the subgraph $G'[U]$ has more than one connected components. 
		
	Without loss of generality, let  $U'\subset U$ be a vertex set inducing a connected component in $G'[U]$ and thus also in $G'$. For all edges $pq\in \delta(U')$ we have that $pq\notin E'$ because otherwise $U'$ would not induce a connected component of $G'$. Therefore for all $pq\in \delta(U')$ we have that $c_{pq} > 0$ and $c_{pqr} > 0$ for all $pqr\in T$. Since $T_{\delta(U')}$ is the union of all sets $\{pqr\in T\mid pq\in \delta(U')\}$, we have that $U'$ fulfills~\eqref{eq:edge-cut-condition-1}--\eqref{eq:edge-cut-condition-2}, 
	which contradicts the minimality of $U$ with respect to set inclusion.
\end{proof}
\vspace{6pt}

Hence, we can use any algorithm that computes connected components in a graph. 
In this work we use a disjoint set data structure with path compression and union-by-size, starting with at most $|V|$ singleton sets, proceeding with at most $|E|$ called of union, followed by precisely $|V|$ calls of find.
This has time complexity $\mathcal{O}((|V| + |E|) \alpha(|V|))$ \citep{tarjan-1984} where $\alpha$ denotes the inverse Ackermann function.

Partial optimality according to Propositions~\ref{proposition:edge-cut-persistency}--\ref{lemma:persistency-triangle-edge-join} is conditional to the existence of an edge $ij \in E$ or a triple $ijk \in T$, together with a subset $U \subseteq V$ (in case of Proposition~\ref{lemma:persistency-triangle-edge-join} even a second subset $U' \subseteq V$ independent of $U$) 
for which 
specific conditions 
are satisfied, namely \eqref{eq:assumption-edge-cut-inequality}--\eqref{eq:triangle-edge-join-3}.
For every triple, we decide~\eqref{eq:triangle-edge-join-3} explicitly by enumerating all edges incident to one of the triple vertices and then enumerating all triples this edge is a part of, in time 
$\mathcal{O}(\degree{G} \cdot \degreethree{G})$.
For every pair or triple, we reduce the search of 
subsets $U$ or $U'$ satisfying \eqref{eq:assumption-edge-cut-inequality}--\eqref{eq:triangle-edge-join-2} \emph{with maximum margin} 
to the min $st$-cut problem (in Appendix~\ref{sec:tech-min-cut}). 
In order to decide partial optimality efficiently, we solve the dual max $st$-flow problems by the push-relabel algorithm with FIFO vertex selection rule \citep{goldberg-1988}. 
Note that any subsets $U$, or $U'$, that satisfy \eqref{eq:assumption-edge-cut-inequality}--\eqref{eq:triangle-edge-join-2} would suffice. 
We choose to use an exact method that returns subsets $U$ or $U'$ that fulfill \eqref{eq:assumption-edge-cut-inequality}--\eqref{eq:triangle-edge-join-2} with maximum margin.

As mentioned already in Section~\ref{section:partial-optimality-criteria-joins},
we are unaware of an efficient method for finding subsets that satisfy the conditions of Proposition~\ref{lemma:general-subgraph-edge-join} or \ref{proposition:subset-join-proposition}.
Regarding Proposition~\ref{lemma:general-subgraph-edge-join}, we resort to the special case of Corollary~\ref{corollary:edge-subgraph-edge-join} that we test for each pair in time $\mathcal{O}(\degree{G}\cdot \degreethree{G})$, 
and to the special case of Corollary~\ref{corollary:triplet-subgraph-edge-join} that we test for each triple in time $\mathcal{O}(\degree{G}\cdot \degreethree{G})$.
Regarding Proposition~\ref{proposition:subset-join-proposition}, we employ 
the special case of Corollary~\ref{corollary:subset-join-all-pairs-at-once} and search heuristically for a witness $U$ of \eqref{eq:subset-join-corollary-all-pairs-condition}, as follows:
In an outer loop, we iterate over all pairs 
$ij\in E$ with $c_{ij} \leq 0$, and initialize $U = \{i, j\}$.
For each of these initializations of $U$, we add elements to $U$ for which the costs of all pairs and triples inside $U$ is non-positive.
Upon termination of this inner loop, we take $U$ to be a candidate.
By construction of $U$, all coefficients on the left-hand side of \eqref{eq:subset-join-corollary-all-pairs-condition} are non-positive. 
By applying Proposition~\ref{prop: reduction costs} to the left-hand side of \eqref{eq:subset-join-corollary-all-pairs-condition}, this problem takes the form of a min cut problem with non-negative capacities that we solve exactly using the algorithm by \citet{stoer-1997}. 
As soon as we find a candidate $U$ that fulfills~\eqref{eq:subset-join-corollary-all-pairs-condition}, we merge the nodes in $U$ as described in Section~\ref{sec: join-conditions} and obtain a smaller instance of~$\ccp$. Then, we proceed as described in Section~\ref{section:mixing-conditions}.
\section{Combining Partial Optimality Conditions}\label{sec: combining}
Next, we discuss how we apply partial optimality conditions iteratively and why this requires special attention. 
Let $G = (V, E)$ a graph and $c\in \mathbb{R}^{I(G)}$. 
Furthermore, let $Q_1, Q_2 \subseteq \cp_{G}$. 
If there is an optimal solution $x^*_1\in \cp_G$ to \csp{G}{c} such that $x^*_1\in Q_1$, and an optimal solution $x^*_{2}\in X_G$ to \csp{G}{c} such that $x^*_2\in Q_2$, then there is not necessarily an optimal solution $x^*\in \cp_G$ to \csp{G}{c} such that $x^*\in Q_1 \cap Q_2$. Thus, we cannot na\"ively apply the partial optimality conditions simultaneously (see Example~\ref{example:simultaneous-po}).
\begin{example}
	\label{example:simultaneous-po}
	Let $V = \{1, 2, 3\}$, $E = \{ 12, 13, 23 \}$, $T = \{ 123 \}$ and $c \in \mathbb{R}^{I(G)}$ such that $c_{123} = 5$, $c_{12}=c_{13}=c_{23} = -2$, $c_\emptyset = 0$.
	Then, $\min_{x\in \cp_G}\phi_c(x) = -2$.  
	Furthermore, let $Q_1 = \{x\in \cp_G \mid x_{12} = 1\}$ and $Q_2 = \{x\in \cp_G \mid x_{13} = 1\}$. 
	It follows that $Q_1 \cap Q_2 = \{ x \in \cp_G \mid x_{12} = 1, x_{13} = 1 \} = \{ (1, 1, 1) \}$.
	The set of optimal solutions is the set of all $x\in \cp_G$ for which there is exactly one $pq\in E$ with $x_{pq} = 1$. 
	Thus, the feasible $x'\in \cp_G$ such that $x'_{12} = x'_{13} = x'_{23} = 1$ is not optimal.
\end{example}%
\subsection{Cut Conditions}
\label{section:cut-conditions-parallel}
	Let us assume that we have found partially optimal assignments $E_0 \subseteq E$ by applying Proposition~\ref{proposition:edge-cut-persistency} and partially optimal constraints $T_0 \subseteq T$ by applying Proposition~\ref{lemma:persistency-triplet-cut}, i.e., we know that there exists a solution $x^*$ to $\ccp$ such that $x^*_{pq} = 0$ for all $pq\in E_0$ and $x^*_{pq}x^*_{qr}x^*_{pr} = 0$ for all $pqr\in T_0$. Moreover, we denote by 
	$X_G^{E_0, T_0} = \{x\in X_G\mid \forall pq\in E_0\colon x_{pq} = 0 \land \forall pqr\in T_0\colon x_{pq}x_{pr}x_{qr} = 0\}$
	the set of feasible vectors $x\in X_G$ that fulfill the partially optimal assignments and partially optimal constraints.
	Thus we have 
	$\min_{x\in X_G}\phi_c(x) = \min_{x\in X_G^{E_0, T_0}} \phi_c(x).$
	In order not to lose already established partial optimality, we may from now on only apply improving maps $\sigma'\colon X_G^{E_0, T_0} \to X_G^{E_0, T_0}$.
 	This is the case for the elementary cut map restricted to $X_G^{E_0, T_0}$, i.e., $\sigma_{\delta(U)}(X_G^{E_0, T_0}) \subseteq X_G^{E_0, T_0}$ for every $U\subseteq V$. Since the cut conditions Proposition~\ref{proposition:edge-cut-persistency} and Proposition~\ref{lemma:persistency-triplet-cut} use only the elementary cut map or the identity map to improve solutions, we can apply these cut conditions simultaneously.
	
 	On the contrary, it can happen that the image of $X_G^{E_0, T_0}$ under $\sigma_U$ is not a subset of $X_G^{E_0, T_0}$ for some $U\subseteq V$, i.e., $\sigma_U(X_G^{E_0, T_0}) \nsubseteq X_G^{E_0, T_0}$. 
 	In particular, $\sigma_V(X_G^{E_0, T_0}) = \{\mathbbm{1}_V\}\nsubseteq X_G^{E_0, T_0}$ for non-empty $E_0 \neq \emptyset$ and $T_0\neq \emptyset$.
 	Therefore, we cannot apply join conditions if we want to maintain already-established partial optimality due to Propositions~\ref{proposition:edge-cut-persistency} and \ref{lemma:persistency-triplet-cut}.
\subsection{Join Conditions}
\label{sec: join-conditions}
Let us assume the existence of an optimal solution $x^*$ to \csp{G}{c} such that $x^*_{ij} = 1$ for some $ij\in E$. 
We define $\cp_G \vert_{x_{ij} = 1} = \{x\in \cp_G \mid x_{ij} = 1\}$.
Then, we have 
\begin{equation}
	\label{eq:one-persistency-minimization-problem}
	\min_{x\in \cp_G} \phi_c(x) = \min_{x \in \cp_G \vert_{x_{ij} = 1}}\phi_c(x)
	\enspace .
\end{equation}
Let $V' = V \setminus \{j\}, E' = \left(E\cap \tbinom{V'\setminus \{i\}}{2}\right)\cup \{pi\in \tbinom{V'}{2}\mid pi\in E\lor pj\in E\}$ and $G' = (V', E')$, that is, we contract $i$ and $j$. 
Now, we relate feasible vectors of $\cp_G \vert _{x_{ij} = 1}$ to feasible vectors of $\cp_{G'}$. 
We observe that for any $x \in \cp_G \vert_{x_{ij} = 1}$ we have $\forall p\in V \setminus \{i, j\} \colon x_{pi} = x_{pj}$ whenever $pi, pj \in E$ by transitivity. 
We define $\varphi_{ij}\colon \cp_G \vert_{x_{ij} = 1} \to \cp_{G'}$ as 
\begin{align}
	\varphi_{ij}(x)_{pi} &= \begin{cases}
			x_{pi} & \textnormal{if $pi\in E \land pj\notin E$}\\
			x_{pj} & \textnormal{if $pi\notin E \land pj\in E$}\\
			x_{pi} = x_{pj} & \textnormal{if $pi\in E\land pj\in E$}
		\end{cases}
		& \forall p\in V'\setminus \{i\} \colon pi \in E'
	\\
	\varphi_{ij}(x)_{pq} &= x_{pq} &  \forall pq \in E \cap \tbinom{V'\setminus \{i\}}{2} \enspace .
\end{align}
It is easy to see that $\varphi_{ij}$ is bijective.
Proposition~\ref{lemma:contraction-cost-adjustments} below shows that solving the right-hand side of \eqref{eq:one-persistency-minimization-problem} is equivalent to solving a smaller instance of the original problem. 
\begin{proposition}
	\label{lemma:contraction-cost-adjustments}
	Let $G = (V, E)$ a graph and $c\in \mathbb{R}^{I(G)}$. 
	Moreover, let $ij \in E$ and $V' = V \setminus \{j\}$, $E' = \left(E\cap \tbinom{V'\setminus \{i\}}{2}\right)\cup \{pi\in \tbinom{V'}{2}\mid pi\in E\lor pj\in E\}$ and $G' = (V', E')$. Define $c'\in \mathbb{R}^{I(G')}$ such that
	\begin{align}
	c'_{pqr} &= c_{pqr} & \forall pqr \in T \cap \tbinom{V'\setminus \{i\}}{3} \\
	c'_{pqi} &= \sum_{\substack{r\in \{i, j\} \colon pqr\in T}}c_{pqr} & \forall pq \in E \cap \tbinom{V'\setminus \{i\}}{2} \colon pqi \in T \lor pqj \in T\\
	c'_{pq} &= c_{pq} & \forall pq \in E \cap \tbinom{V'\setminus \{i\}}{2} \\
	c'_{pi} &= \sum_{\substack{q\in \{i, j\}\colon pq \in E}}c_{pq} & \forall p \in V'\setminus \{i\} \colon \left(pi \in E \lor pj\in E\right) \land pij\notin T\\
	c'_{pi} &= c_{pi} + c_{pj}+ c_{pij} & \forall p \in V'\setminus \{i\} \colon pij\in T \\
	c'_\emptyset & = c_\emptyset + c_{ij}
	\enspace .
	\end{align}
	Furthermore, let $\varphi_{ij}\colon \cp_G\vert_{x_{ij} = 1}\to \cp_{G'}$ the map that relates feasible vectors of $\cp_G\vert_{x_{ij} = 1}$ to feasible vectors of $\cp_{G'}$.
	Then
	$\min_{x\in \cp_G\vert_{x_{ij} = 1}} \phi_c(x) = \min_{x\in \cp_{G'}}\phi_{c'}(x).$
	Moreover, if $x^* \in \argmin_{x\in \cp_G\vert_{x_{ij} = 1}}\phi_c(x)$, then $\varphi_{ij}(x^*)\in \argmin_{x\in \cp_{G[V']}}\phi_{c'}(x)$.
\end{proposition}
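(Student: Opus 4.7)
The plan is to exhibit $\varphi_{ij}$ as a bijection from $\cp_G\vert_{x_{ij}=1}$ onto $\cp_{G'}$ that preserves the objective, i.e., $\phi_c(x) = \phi_{c'}(\varphi_{ij}(x))$ for every $x\in \cp_G\vert_{x_{ij}=1}$. From this, the equality of minima is immediate, and the transport of an optimizer $x^*$ to its image $\varphi_{ij}(x^*)$ follows from the preservation of the objective together with bijectivity.

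For the well-definedness and bijectivity of $\varphi_{ij}$, I would first observe that the cycle inequalities~\eqref{eq:def-ccp} applied to the triangle on $\{p,i,j\}$, combined with $x_{ij}=1$, force $x_{pi}=x_{pj}$ whenever $p\in V\setminus\{i,j\}$ satisfies $pi,pj\in E$; thus the three cases in the definition of $\varphi_{ij}(x)_{pi}$ are mutually consistent. That $\varphi_{ij}(x)\in \cp_{G'}$ can be verified by lifting each cycle of $G'$ to a closed walk in $G$ in which every edge of $E'$ incident to $i$ is realized by its original preimage (possibly together with $ij$), and then invoking the corresponding cycle inequality in $G$ using $x_{ij}=1$. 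The inverse of $\varphi_{ij}$ sends $y\in \cp_{G'}$ to the unique $x\in \cp_G$ with $x_{ij}=1$, $x_{pi}=x_{pj}=y_{pi}$ whenever $pi\in E$ or $pj\in E$, and $x_{pq}=y_{pq}$ otherwise; its well-definedness is analogous.

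The crux of the argument is the term-by-term comparison of $\phi_c(x)$ and $\phi_{c'}(\varphi_{ij}(x))$. I would partition the summation indices of $\phi_c(x)$ according to how many of them lie in $\{i,j\}$. Edges and triples fully contained in $V'\setminus\{i\}$ contribute identically on both sides, since $c'$ agrees with $c$ there. The contribution $c_{ij}x_{ij}=c_{ij}$ of the edge $ij$ is absorbed into $c'_\emptyset = c_\emptyset + c_{ij}$. For each $p\in V'\setminus\{i\}$ with $pi\in E$ or $pj\in E$, the edge contributions $c_{pi}x_{pi}+c_{pj}x_{pj}$ (with any vacuous summand dropped) collapse to $\bigl(\sum_{q\in\{i,j\}\colon pq\in E}c_{pq}\bigr)\varphi_{ij}(x)_{pi}$, using $x_{pi}=x_{pj}$ when both are defined; if additionally $pij\in T$, the cubic term $c_{pij}x_{pi}x_{pj}x_{ij}=c_{pij}\varphi_{ij}(x)_{pi}$ (by $x_{ij}=1$ and $x_{pi}^2=x_{pi}$) combines with these, matching $c'_{pi}\varphi_{ij}(x)_{pi}$ in either case. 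Finally, for each $pq\in E\cap\binom{V'\setminus\{i\}}{2}$ with $pqi\in T$ or $pqj\in T$, the cubic terms $c_{pqi}x_{pq}x_{pi}x_{qi}+c_{pqj}x_{pq}x_{pj}x_{qj}$ collapse to $\bigl(\sum_{r\in\{i,j\}\colon pqr\in T}c_{pqr}\bigr)\varphi_{ij}(x)_{pq}\varphi_{ij}(x)_{pi}\varphi_{ij}(x)_{qi}$, which equals $c'_{pqi}\varphi_{ij}(x)_{pq}\varphi_{ij}(x)_{pi}\varphi_{ij}(x)_{qi}$.

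The principal obstacle is the bookkeeping implicit in the definition of $c'$: for each pair $p,q\in V'\setminus\{i\}$ one must track which of the potential edges $pi,pj,qi,qj$ actually lie in $E$ and which of the triples $pij,pqi,pqj$ lie in $T$, so as to assign every summand of $\phi_c(x)$ to exactly one summand of $\phi_{c'}(\varphi_{ij}(x))$ with the correct coefficient. Once this case analysis is organized, the identity $\phi_c(x)=\phi_{c'}(\varphi_{ij}(x))$ holds term by term, and both conclusions of the proposition follow at once.
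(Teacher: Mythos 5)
Your proposal is correct and follows essentially the same route as the paper: both establish the identity $\phi_c(x)=\phi_{c'}(\varphi_{ij}(x))$ by a term-by-term regrouping of the objective using $x_{ij}=1$ and the transitivity-forced equality $x_{pi}=x_{pj}$, and then invoke the bijectivity of $\varphi_{ij}$ to transfer minima and minimizers. Your additional sketch of why $\varphi_{ij}$ is well-defined and bijective only elaborates a step the paper asserts without proof.
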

\begin{proof}
Let $x\in \cp_G \vert_{x_{ij} = 1}$ and $x' = \varphi_{ij}(x)$.
We show that $\phi_c(x) = \phi_{c'}(x')$. 
Note that we have
\begin{align}
	T' &= \{pqr\in \tbinom{V'}{3}\mid pq\in E'\land pr\in E'\land qr\in E'\} \\ 
	&= \left(T\cap \tbinom{V'\setminus \{i\}}{3}\right)\cup \left\{pqi\in \tbinom {V'}3\mid pq\in \left(E\cap \tbinom {V'\setminus \{i\}}2\right) \land \left(pqi\in T \lor pqj\in T\right)\right\}
\end{align}
We use the fact that $x_{pi} = x_{pj}$ for all $p\in V \setminus \{i, j\}$ such that $pi, pj \in E$, and $x_{ij} = 1$. 
It follows 
\small{
\begin{align}
&\phi_{c'}(x') = \sum_{pqr\in T'}c'_{pqr}x'_{pq}x'_{pr}x'_{qr} + \sum_{pq\in E'} c'_{pq}x'_{pq} + c'_{\emptyset}\\
= &
\sum_{pqr\in T\cap \tbinom{V'\setminus \{i\}}{3}}c'_{pqr}x'_{pq}x'_{pr}x'_{qr} + 
\sum_{\substack{pq\in E\cap \tbinom{V'\setminus \{i\}}{2} : \\ pqi \in T\lor pqj\in T}}c'_{pqi}x'_{pq}x'_{pi}x'_{pj} + 
\smashoperator[r]{\sum_{pq\in E\cap \tbinom{V'\setminus \{i\}}{2}}} c'_{pq}x'_{pq} + 
\smashoperator[r]{\sum_{\substack{p\in V'\setminus \{i\} : \\ pi \in E\lor pj\in E}}}c'_{pi}x'_{pi} + c'_{\emptyset}\\
= &
\sum_{\substack{pq \in E \cap \tbinom{V \setminus\{i, j\}}{2}  :  \\ pqi \in T \lor pqj\in T}} c'_{pqi}x'_{pi}x'_{qi}x'_{pq} 
+ 
\sum_{pqr\in T \cap \tbinom{V \setminus\{i, j\}}{3}}c'_{pqr}x'_{pq}x'_{pr}x'_{qr} 
+
 \sum_{\mathclap{\substack{p\in V \setminus\{i, j\} : \\ pi \in E \lor pj\in E}}}c'_{pi}x'_{pi} + 
 \smashoperator[r]{\sum_{pq \in E \cap \tbinom{V\setminus\{i, j\}}{2}}}c'_{pq}x'_{pq} + c'_{\emptyset}\\
= &\sum_{\substack{pq \in E \cap \tbinom{V \setminus\{i, j\}}{2} : \\ pqi \in T}}c_{pqi}x_{pi}x_{qi}x_{pq} 
+
\sum_{\substack{pq \in E \cap \tbinom{V \setminus\{i, j\}}{2} : \\ pqj \in T}}c_{pqj}x_{pj}x_{qj}x_{pq}
+
\sum_{pqr\in T \cap \tbinom{V \setminus\{i, j\}}{3}}c_{pqr}x_{pq}x_{pr}x_{qr} \\
& + 
\sum_{pq \in E \cap \tbinom{V \setminus\{i, j\}}{2}}c_{pq}x_{pq} 
+ 
\sum_{\substack{p\in V \setminus\{i, j\} : \\ pij \in T}}c_{pij}x_{pi}
+
\sum_{\substack{p\in V \setminus\{i, j\} : \\ pi\in E}}c_{pi} x_{pi}
+ 
\sum_{\substack{p\in V \setminus\{i, j\} : \\ pj\in E}}c_{pj} x_{pj}  + c_{\emptyset} + c_{ij}x_{ij}\\
= &\sum_{pqr\in T}c_{pqr}x_{pq}x_{pr}x_{qr} + \sum_{pq \in E}c_{pq}x_{pq} + c_{\emptyset} = \phi_c(x).
\end{align}}%
\normalsize
Therefore, we have
$\min_{x\in \cp_G \vert_{x_{ij} = 1}} \phi_c(x) = \min_{x\in \cp_G \vert_{x_{ij} = 1}} \phi_{c'}(\varphi_{ij}(x)) = \min_{x\in \cp_{G[V']}}\phi_{c'}(x)$.
\end{proof}
\subsection{Mixing Cut and Join Conditions}
\label{section:mixing-conditions}
Here, we describe how we apply the partial optimality properties recursively. As soon as a condition leads to a smaller instance, we start the procedure again on the smaller set (in case of a join) or sets (in case of Proposition~\ref{lemma:persistency-subset-separation}). 
Firstly, we apply Proposition~\ref{lemma:persistency-subset-separation}, which leaves us with independent sub-problems. 
Secondly, we apply our join conditions until we find an edge $ij \in E$ or triple $ijk \in  T$ to join, starting from Corollary~\ref{corollary:subset-join-all-pairs-at-once} and then moving on to Propositions~\ref{lemma:edge-join-persistency} and~\ref{lemma:persistency-triangle-edge-join}, Corollaries~\ref{corollary:edge-subgraph-edge-join} and~\ref{corollary:triplet-subgraph-edge-join} and Proposition~\ref{proposition:triplet-join}, in this order.
Thirdly, we apply the remaining cut conditions, which can be applied jointly as we have seen in Section~\ref{section:cut-conditions-parallel}.
\section{Experiments}\label{sec: experiments}
We examine the effect of the algorithms empirically on two data sets.
For both of them, we report the percentage of fixed variables and triples, as well as the runtime. 
More specifically, we report the median as well as lower and upper quartile over 20 instances.
We apply all partial optimality conditions jointly, as described in Section~\ref{section:mixing-conditions},
and we also evaluate the effect of each condition separately. 
All algorithms are implemented in C++ and run on one core of an Intel Core i9-12900KF equipped with 64 GB of RAM.
Note that compared to~\cite{stein-2023} the hardware has changed. In the interest of objectivity, we run the experiments from~\cite{stein-2023} on the new hardware and report the results in Appendix~\ref{sec:reproduction-experiments-conference-article}.
\subsection{Partition Data Set}\label{section:experiments-partition}
\input{figure-sparse-partition-joint}
\input{figure-sparse-partition-independent-subset-join-edge-cut-alphas}
\tikzmath{\plotheight=4.0;}
\tikzmath{\plotwidth=4.6;}

\providecommand{\addnodeplot}{}
\renewcommand{\addnodeplot}[1]{\addplot+[
	only marks,
	mark=*,
	mark options={draw=none}
	] table[
	col sep=comma,
	scale only axis,
	x expr=\thisrow{alpha}, 
	y expr=100*\thisrow{medianEliminatedNodes},
	y error minus expr=100*\thisrow{medianEliminatedNodes} - 100*\thisrow{q25EliminatedNodes},
	y error plus expr=100*\thisrow{q75EliminatedNodes} - 100*\thisrow{medianEliminatedNodes} 
	] {#1};}

\providecommand{\addvariableplot}{}	
\renewcommand{\addvariableplot}[1]{\addplot+[
	only marks,
	mark=*,
	mark options={draw=none}
	] table[
	col sep=comma,
	x expr=\thisrow{alpha}, 
	y expr=\thisrow{medianEliminatedVariables}*100,
	y error minus expr=100*(\thisrow{medianEliminatedVariables} - \thisrow{q25EliminatedVariables}),
	y error plus expr=100*(\thisrow{q75EliminatedVariables} - \thisrow{medianEliminatedVariables})
	] {#1};}

\providecommand{\addtriangleplot}{}	
\renewcommand{\addtriangleplot}[1]{\addplot+[
	only marks,
	mark=*,
	mark options={draw=none}
	] table[
	col sep=comma,
	x expr=\thisrow{alpha}, 
	y expr=100*\thisrow{medianEliminatedTriangles},
	y error minus expr=100*\thisrow{medianEliminatedTriangles} - 100*\thisrow{q25EliminatedTriangles},
	y error plus expr=100*\thisrow{q75EliminatedTriangles} - 100*\thisrow{medianEliminatedTriangles} 
	] {#1};}

\providecommand{\addruntimeplot}{}	
\renewcommand{\addruntimeplot}[1]{\addplot+[
	only marks,
	mark=*,
	mark options={draw=none}
	] table[
	col sep=comma,
	x expr=\thisrow{alpha}, 
	y expr=\thisrow{medianDuration} / 1e9,
	y error minus expr=(\thisrow{medianDuration} - \thisrow{q25Duration}) / 1e9,
	y error plus expr=(\thisrow{q75Duration} - \thisrow{medianDuration})  / 1e9
	] {#1};}

\begin{figure}[!t]
	\centering
	
	\begin{tikzpicture}[baseline]
		\begin{groupplot}[group style={group size= 4 by 2, horizontal sep=0.5cm, vertical sep=0.5cm}]
			
			\nextgroupplot[
			title={$(0.25, 184, 4224)$},
			title style={align=center},
			ylabel={Triples [\%]},
			xmin=0.25,
			xmax=0.7,
			ymin=0,
			ymax=100,
			width=\plotwidth cm,
			height=\plotheight cm,
			legend style={font=\tiny},
			legend pos=north east,
			xticklabels=\empty
			]
			\addtriangleplot{./data/sparse/partition-triangle-cut/alphas/n184_beta0.00_edgePercentage0.25_stats.csv}
			\addlegendentry{$\beta = 0.00$};
			\addtriangleplot{./data/sparse/partition-triangle-cut/alphas/n184_beta0.01_edgePercentage0.25_stats.csv}
			\addlegendentry{$\beta = 0.01$};
			\addtriangleplot{./data/sparse/partition-triangle-cut/alphas/n184_beta0.50_edgePercentage0.25_stats.csv}
			\addlegendentry{$\beta = 0.50$};
			\addtriangleplot{./data/sparse/partition-triangle-cut/alphas/n184_beta1.00_edgePercentage0.25_stats.csv}
			\addlegendentry{$\beta = 1.00$};
			
			\nextgroupplot[
			title={$(0.50, 104, 2675.5)$},
			title style={align=center},
			xmin=0.25,
			xmax=0.7,
			ymin=0,
			ymax=100,
			yticklabel=\empty,
			width=\plotwidth cm,
			height=\plotheight cm,
			xticklabels=\empty
			]
			\addtriangleplot{./data/sparse/partition-triangle-cut/alphas/n104_beta0.00_edgePercentage0.50_stats.csv}
			\addtriangleplot{./data/sparse/partition-triangle-cut/alphas/n104_beta0.01_edgePercentage0.50_stats.csv}
			\addtriangleplot{./data/sparse/partition-triangle-cut/alphas/n104_beta0.50_edgePercentage0.50_stats.csv}
			\addtriangleplot{./data/sparse/partition-triangle-cut/alphas/n104_beta1.00_edgePercentage0.50_stats.csv}
			
			\nextgroupplot[
			title={$(0.75, 72, 1918)$},
			title style={align=center},
			xmin=0.25,
			xmax=0.7,
			ymin=0,
			ymax=100,
			yticklabel=\empty,
			width=\plotwidth cm,
			height=\plotheight cm,
			xticklabels=\empty
			]
			\addtriangleplot{./data/sparse/partition-triangle-cut/alphas/n72_beta0.00_edgePercentage0.75_stats.csv}	
			\addtriangleplot{./data/sparse/partition-triangle-cut/alphas/n72_beta0.01_edgePercentage0.75_stats.csv}
			\addtriangleplot{./data/sparse/partition-triangle-cut/alphas/n72_beta0.50_edgePercentage0.75_stats.csv}
			\addtriangleplot{./data/sparse/partition-triangle-cut/alphas/n72_beta1.00_edgePercentage0.75_stats.csv}
			
			\nextgroupplot[
			title={$(1.00, 56, 1540)$},
			title style={align=center},
			xmin=0.25,
			xmax=0.7,
			ymin=0,
			ymax=100,
			yticklabel=\empty,
			width=\plotwidth cm,
			height=\plotheight cm,
			xticklabels=\empty
			]
			\addtriangleplot{./data/sparse/partition-triangle-cut/alphas/n56_beta0.00_edgePercentage1.00_stats.csv}	
			\addtriangleplot{./data/sparse/partition-triangle-cut/alphas/n56_beta0.01_edgePercentage1.00_stats.csv}
			\addtriangleplot{./data/sparse/partition-triangle-cut/alphas/n56_beta0.50_edgePercentage1.00_stats.csv}
			\addtriangleplot{./data/sparse/partition-triangle-cut/alphas/n56_beta1.00_edgePercentage1.00_stats.csv}
			
			
			\nextgroupplot[
			ylabel={Runtime [s]},
			xlabel=$\alpha$,
			xmin=0.25,
			xmax=0.7,
			ymin=0,
			ymax=60,
			width=\plotwidth cm,
			height=\plotheight cm
			]
			\addruntimeplot{./data/sparse/partition-triangle-cut/alphas/n184_beta0.00_edgePercentage0.25_stats.csv}
			\addruntimeplot{./data/sparse/partition-triangle-cut/alphas/n184_beta0.01_edgePercentage0.25_stats.csv}
			\addruntimeplot{./data/sparse/partition-triangle-cut/alphas/n184_beta0.50_edgePercentage0.25_stats.csv}
			\addruntimeplot{./data/sparse/partition-triangle-cut/alphas/n184_beta1.00_edgePercentage0.25_stats.csv}
			
			\nextgroupplot[
			xmin=0.25,
			xmax=0.7,
			ymin=0,
			ymax=60,
			yticklabel=\empty,
			width=\plotwidth cm,
			height=\plotheight cm,
			xlabel=$\alpha$
			]
			\addruntimeplot{./data/sparse/partition-triangle-cut/alphas/n104_beta0.00_edgePercentage0.50_stats.csv}
			\addruntimeplot{./data/sparse/partition-triangle-cut/alphas/n104_beta0.01_edgePercentage0.50_stats.csv}
			\addruntimeplot{./data/sparse/partition-triangle-cut/alphas/n104_beta0.50_edgePercentage0.50_stats.csv}
			\addruntimeplot{./data/sparse/partition-triangle-cut/alphas/n104_beta1.00_edgePercentage0.50_stats.csv}
			
			\nextgroupplot[
			xmin=0.25,
			xmax=0.7,
			ymin=0,
			ymax=60,
			yticklabel=\empty,
			width=\plotwidth cm,
			height=\plotheight cm,
			xlabel=$\alpha$
			]
			\addruntimeplot{./data/sparse/partition-triangle-cut/alphas/n72_beta0.00_edgePercentage0.75_stats.csv}	
			\addruntimeplot{./data/sparse/partition-triangle-cut/alphas/n72_beta0.01_edgePercentage0.75_stats.csv}
			\addruntimeplot{./data/sparse/partition-triangle-cut/alphas/n72_beta0.50_edgePercentage0.75_stats.csv}
			\addruntimeplot{./data/sparse/partition-triangle-cut/alphas/n72_beta1.00_edgePercentage0.75_stats.csv}
			
			\nextgroupplot[
			xmin=0.25,
			xmax=0.7,
			ymin=0,
			ymax=60,
			yticklabel=\empty,
			width=\plotwidth cm,
			height=\plotheight cm,
			xlabel=$\alpha$
			]
			\addruntimeplot{./data/sparse/partition-triangle-cut/alphas/n56_beta0.00_edgePercentage1.00_stats.csv}	
			\addruntimeplot{./data/sparse/partition-triangle-cut/alphas/n56_beta0.01_edgePercentage1.00_stats.csv}
			\addruntimeplot{./data/sparse/partition-triangle-cut/alphas/n56_beta0.50_edgePercentage1.00_stats.csv}
			\addruntimeplot{./data/sparse/partition-triangle-cut/alphas/n56_beta1.00_edgePercentage1.00_stats.csv}		
		\end{groupplot}
	\end{tikzpicture}
	\\[-2ex]
	\caption{%
		We report above for the partition data set the percentage of fixed triples and runtime after applying Proposition~\ref{lemma:persistency-triplet-cut}. From left to right, the sparsity of the graph decreases as written in the column title $(p_E, \vert V\vert, \bar{E})$.
	}
	\label{figure:partition-triangle-cut-sparse-joint-alphas}
\end{figure}

Instances of the partition data set are defined with respect to a 
graph $G = (V, E)$ with $\vert V \vert = 8n$ nodes, where $n \in \mathbb{N}$, and with respect to a graph partition $\Pi_G = \{U_1, U_2, U_3, U_4\}$
into four sets with $\vert U_1\vert = n$, $\vert U_2\vert = \vert U_3\vert = 2n$ and $\vert U_4 \vert = 3n$ elements, as depicted in Figure~\ref{figure:experiments}a, and with respect to three design parameters, $p_E \in \left[0, 1\right]$, $\alpha\in \left[0, 1\right]$ and $\beta\in \left[0, 1\right]$. 
We remark that we choose the same probability for all the edges in $E$, namely $p_E$. 

The graph $G$ is constructed randomly in the following way: Firstly, we insert every pair $pq\in \tbinom V2$ as edge to $G$ at random with probability $p_E$. Secondly, for every induced subgraph $G\left[U_i\right]$, we iterate over all pairs $pq\in \tbinom{U_i}{2}$ and check if $p$ and $q$ are path-connected in $G\left[U_i\right]$. If this is not the case, then we add the edge $pq$ as an edge to $G$.

We define costs for all edges $E$ and for all triples $T = \{pqr\in \tbinom V3 \mid \{pq, qr, pr\}\subseteq E\}$ that form a 3-clique in the graph.
Both the costs of edges and triples are drawn from two Gaussian distributions with means $-1 + \alpha$ and $1-\alpha$, depending on whether their elements belong to the same set or distinct sets in the partition $\Pi_G$, and with standard deviation $\sigma = \sigma_0 + \alpha(\sigma_1 - \sigma_0)$ with $\sigma_0 = 0.1$ and $\sigma_1 = 0.4$.
Costs of edges are multiplied by $1 - \beta$, and costs of triples by $\beta$. 
For small $\alpha$, the costs of pairs and triples whose elements belong to the same set of the partition $\Pi_G$ are well-separated from the costs of those pairs and triples whose elements belong to different sets of the partition $\Pi_G$. For $\alpha$ close to 1 this is not the case anymore. 
Therefore, the higher $\alpha$ is, the harder the problem becomes.
The higher $\beta$ is, the more important the costs of triples become.

The percentage of edges and triples fixed by applying all conditions jointly, as described in Section~\ref{section:mixing-conditions}, and with respect to $\alpha$ is shown in 
Figure~\ref{figure:partition-sparse-joint-ns-combined} (Rows~1--2). 
It can be seen from this figure that the percentage of fixed variables decreases with increasing $\alpha$. 
As $\alpha$ rises, the runtime increases but remains below one minute for all the instances
(Row~2). 
Varying $\beta$ does not affect the overall trend. 
However, the percentage of fixed variables decreases as soon as triple costs are introduced.
The percentage of variables fixed by applying Proposition~\ref{lemma:persistency-subset-separation}, Proposition~\ref{proposition:edge-cut-persistency} and Corollary~\ref{corollary:subset-join-all-pairs-at-once} separately is shown in 
Figures~\ref{figure:partition-independent-subsets-subset-join-edge-cut-sparse-joint-alphas}
and
the percentage of triples fixed by applying Proposition~\ref{lemma:persistency-triplet-cut} separately is shown in 
Figure~\ref{figure:partition-triangle-cut-sparse-joint-alphas}.
The other partial optimality conditions do not fix any variables of these instances.
While all cut conditions settle the value of some variables, this is not the case for the join statements.
In fact, only one join condition provides partial optimality in this case: Corollary~\ref{corollary:subset-join-all-pairs-at-once}.
Interestingly, this is the one statement that fixes the most variables for almost all instances of this data set.

For $\beta = 0.5$ and with respect to the instance size, the runtime and percentage of variables fixed by applying all conditions jointly are shown in 
Figure~\ref{figure:partition-sparse-joint-ns-combined} (Rows~3--4). 
It can be seen that, as the instance size increases, the number of fixed variables declines while the runtime increases. 
The runtime for complete graphs, $p_E = 1.0$, and $\alpha\in \{0.5, 0.65\}$ converges to a $\mathcal{O}(\vert V\vert^{5.92})$
trend for increasing instance size. As we sparsify the graphs, i.e., for decreasing $p_E$, the percentage of fixed variables increases and the runtime converges to a $\mathcal{O}(\vert V\vert^{4.77})$ trend for $p_E = 0.25$ and $\alpha\in \{0.5, 0.65\}$.
For complete graphs this is close to the expected worst-case complexity of $\mathcal{O}(\vert V\vert^6)$ as we need to solve cubically many max-flow problems if no variables can be fixed by our algorithm.
Therefore, we observe that we can handle larger sparse graphs, induced by larger instances. This suggests that this approach might be able to handle instances of size interesting for practical applications.
\subsection{Geometric Data Set}\label{section:experiments-geometric}
\input{figure-sparse-geometric-joint}
\tikzmath{\plotheight=4.0;}
\tikzmath{\plotwidth=4.6;}

\providecommand{\addvariableplot}{}	
\renewcommand{\addvariableplot}[1]{\addplot[
	only marks,
	mark=*,
	black,
	mark options = {draw=none}
	] table[
	col sep=comma,
	x expr=\thisrow{sigma}, 
	y expr=\thisrow{medianEliminatedVariables}*100,
	y error minus expr=100*(\thisrow{medianEliminatedVariables} - \thisrow{q25EliminatedVariables}),
	y error plus expr=100*(\thisrow{q75EliminatedVariables} - \thisrow{medianEliminatedVariables})
	] {#1};}

\providecommand{\addruntimeplotmilliseconds}{}	
\renewcommand{\addruntimeplotmilliseconds}[1]{\addplot[
	only marks,
	mark=*,
	black,
	mark options = {draw=none}
	] table[
	col sep=comma,
	x expr=\thisrow{sigma}, 
	y expr=\thisrow{medianDuration} / 1e6,
	y error minus expr=(\thisrow{medianDuration} - \thisrow{q25Duration}) / 1e6,
	y error plus expr=(\thisrow{q75Duration} - \thisrow{medianDuration})  / 1e6
	] {#1};}

\providecommand{\addruntimeplot}{}	
\renewcommand{\addruntimeplot}[1]{\addplot[
	only marks,
	mark=*,
	black,
	mark options = {draw=none}
	] table[
	col sep=comma,
	x expr=\thisrow{sigma}, 
	y expr=\thisrow{medianDuration} / 1e9,
	y error minus expr=(\thisrow{medianDuration} - \thisrow{q25Duration}) / 1e9,
	y error plus expr=(\thisrow{q75Duration} - \thisrow{medianDuration})  / 1e9
	] {#1};}

\begin{figure}
	\centering
	
	\begin{tikzpicture}[baseline]
		\begin{groupplot}[group style={group size= 4 by 6, horizontal sep=0.44cm, vertical sep=0.50cm}]
			
			
			\nextgroupplot[
			title={$(5, 99, 2079)$},
			title style={align=center},
			ylabel={Variables [\%]},
			xmin=0.0,
			xmax=0.3,
			ymin=0,
			ymax=100,
			width=\plotwidth cm,
			height=\plotheight cm,
			xticklabels=\empty
			]
			\addvariableplot{./data/sparse/geometric-independent-subsets/sigmas/n99_kn5_kf5_stats.csv}
			
			\nextgroupplot[
			title={$(10, 90, 2210)$},
			title style={align=center},
			xmin=0.0,
			xmax=0.3,
			ymin=0,
			ymax=100,
			yticklabel=\empty,
			width=\plotwidth cm,
			height=\plotheight cm,
			xticklabels=\empty
			]
			\addvariableplot{./data/sparse/geometric-independent-subsets/sigmas/n90_kn10_kf10_stats.csv}
			
			\nextgroupplot[
			title={$(15, 81, 2344)$},
			title style={align=center},
			xmin=0.0,
			xmax=0.3,
			ymin=0,
			ymax=100,
			yticklabel=\empty,
			width=\plotwidth cm,
			height=\plotheight cm,
			xticklabels=\empty
			]
			\addvariableplot{./data/sparse/geometric-independent-subsets/sigmas/n81_kn15_kf15_stats.csv}
			
			\nextgroupplot[
			title={$(\infty, 63, 1953)$},
			title style={align=center},
			xmin=0.0,
			xmax=0.3,
			ymin=0,
			ymax=100,
			yticklabel=\empty,
			width=\plotwidth cm,
			height=\plotheight cm,
			xticklabels=\empty
			]
			\addvariableplot{./data/sparse/geometric-independent-subsets/sigmas/n63_kn500_kf500_stats.csv}
			
			
			\nextgroupplot[
			ylabel={Runtime [ms]},
			xmin=0.0,
			xmax=0.3,
			ymin=0,
			ymax=20,
			width=\plotwidth cm,
			height=\plotheight cm,
			xticklabels=\empty
			]
			\addruntimeplotmilliseconds{./data/sparse/geometric-independent-subsets/sigmas/n99_kn5_kf5_stats.csv}	
			
			\nextgroupplot[
			xmin=0.0,
			xmax=0.3,
			ymin=0,
			ymax=20,
			yticklabel=\empty,
			width=\plotwidth cm,
			height=\plotheight cm,
			xticklabels=\empty
			]
			\addruntimeplotmilliseconds{./data/sparse/geometric-independent-subsets/sigmas/n90_kn10_kf10_stats.csv}	
			
			\nextgroupplot[
			xmin=0.0,
			xmax=0.3,
			ymin=0,
			ymax=20,
			yticklabel=\empty,
			width=\plotwidth cm,
			height=\plotheight cm,
			xticklabels=\empty
			]
			\addruntimeplotmilliseconds{./data/sparse/geometric-independent-subsets/sigmas/n81_kn15_kf15_stats.csv}	
			
			\nextgroupplot[
			xmin=0.0,
			xmax=0.3,
			ymin=0,
			ymax=20,
			yticklabel=\empty,
			width=\plotwidth cm,
			height=\plotheight cm,
			xticklabels=\empty
			]
			\addruntimeplotmilliseconds{./data/sparse/geometric-independent-subsets/sigmas/n63_kn500_kf500_stats.csv}	
			
			
			\nextgroupplot[
			title style={align=center},
			ylabel={Variables [\%]},
			xmin=0.0,
			xmax=0.3,
			ymin=0,
			ymax=100,
			width=\plotwidth cm,
			height=\plotheight cm,
			xticklabels=\empty,
			yshift=-0.25cm
			]
			\addvariableplot{./data/sparse/geometric-edge-cut/sigmas/n99_kn5_kf5_stats.csv}
			
			\nextgroupplot[
			title style={align=center},
			xmin=0.0,
			xmax=0.3,
			ymin=0,
			ymax=100,
			yticklabel=\empty,
			width=\plotwidth cm,
			height=\plotheight cm,
			xticklabels=\empty,
			yshift=-0.25cm
			]
			\addvariableplot{./data/sparse/geometric-edge-cut/sigmas/n90_kn10_kf10_stats.csv}
			
			\nextgroupplot[
			title style={align=center},
			xmin=0.0,
			xmax=0.3,
			ymin=0,
			ymax=100,
			yticklabel=\empty,
			width=\plotwidth cm,
			height=\plotheight cm,
			xticklabels=\empty,
			yshift=-0.25cm
			]
			\addvariableplot{./data/sparse/geometric-edge-cut/sigmas/n81_kn15_kf15_stats.csv}
			
			\nextgroupplot[
			title style={align=center},
			xmin=0.0,
			xmax=0.3,
			ymin=0,
			ymax=100,
			yticklabel=\empty,
			width=\plotwidth cm,
			height=\plotheight cm,
			xticklabels=\empty,
			yshift=-0.25cm
			]
			\addvariableplot{./data/sparse/geometric-edge-cut/sigmas/n63_kn500_kf500_stats.csv}
			
			\nextgroupplot[
			ylabel={Runtime [s]},
			xmin=0.0,
			xmax=0.3,
			ymin=0,
			ymax=1,
			width=\plotwidth cm,
			height=\plotheight cm,
			xticklabel=\empty
			]
			\addruntimeplot{./data/sparse/geometric-edge-cut/sigmas/n99_kn5_kf5_stats.csv}	
			
			\nextgroupplot[
			xmin=0.0,
			xmax=0.3,
			ymin=0,
			ymax=1,
			yticklabel=\empty,
			width=\plotwidth cm,
			height=\plotheight cm,
			xticklabel=\empty
			]
			\addruntimeplot{./data/sparse/geometric-edge-cut/sigmas/n90_kn10_kf10_stats.csv}	
			
			\nextgroupplot[
			xmin=0.0,
			xmax=0.3,
			ymin=0,
			ymax=1,
			yticklabel=\empty,
			width=\plotwidth cm,
			height=\plotheight cm,
			xticklabel=\empty
			]
			\addruntimeplot{./data/sparse/geometric-edge-cut/sigmas/n81_kn15_kf15_stats.csv}	
			
			\nextgroupplot[
			xmin=0.0,
			xmax=0.3,
			ymin=0,
			ymax=1,
			yticklabel=\empty,
			width=\plotwidth cm,
			height=\plotheight cm,
			xticklabel=\empty
			]
			\addruntimeplot{./data/sparse/geometric-edge-cut/sigmas/n63_kn500_kf500_stats.csv}

			\nextgroupplot[
			title style={align=center},
			ylabel={Variables [\%]},
			xmin=0.0,
			xmax=0.3,
			ymin=0,
			ymax=100,
			width=\plotwidth cm,
			height=\plotheight cm,
			xticklabels=\empty,
			yshift=-0.25cm
			]
			\addvariableplot{./data/sparse/geometric-subset-join/sigmas/n99_kn5_kf5_stats.csv}
			
			\nextgroupplot[
			title style={align=center},
			xmin=0.0,
			xmax=0.3,
			ymin=0,
			ymax=100,
			yticklabel=\empty,
			width=\plotwidth cm,
			height=\plotheight cm,
			xticklabels=\empty,
			yshift=-0.25cm
			]
			\addvariableplot{./data/sparse/geometric-subset-join/sigmas/n90_kn10_kf10_stats.csv}
			
			\nextgroupplot[
			title style={align=center},
			xmin=0.0,
			xmax=0.3,
			ymin=0,
			ymax=100,
			yticklabel=\empty,
			width=\plotwidth cm,
			height=\plotheight cm,
			xticklabels=\empty,
			yshift=-0.25cm
			]
			\addvariableplot{./data/sparse/geometric-subset-join/sigmas/n81_kn15_kf15_stats.csv}
			
			\nextgroupplot[
			title style={align=center},
			xmin=0.0,
			xmax=0.3,
			ymin=0,
			ymax=100,
			yticklabel=\empty,
			width=\plotwidth cm,
			height=\plotheight cm,
			xticklabels=\empty,
			yshift=-0.25cm
			]
			\addvariableplot{./data/sparse/geometric-subset-join/sigmas/n63_kn500_kf500_stats.csv}
			
			\nextgroupplot[
			ylabel={Runtime [s]},
			xlabel=$\sigma$,
			xmin=0.0,
			xmax=0.3,
			ymin=0,
			ymax=10,
			width=\plotwidth cm,
			height=\plotheight cm,
			]
			\addruntimeplot{./data/sparse/geometric-subset-join/sigmas/n99_kn5_kf5_stats.csv}	
			
			\nextgroupplot[
			xmin=0.0,
			xmax=0.3,
			ymin=0,
			ymax=10,
			yticklabel=\empty,
			width=\plotwidth cm,
			height=\plotheight cm,
			xlabel=$\sigma$,
			]
			\addruntimeplot{./data/sparse/geometric-subset-join/sigmas/n90_kn10_kf10_stats.csv}	
			
			\nextgroupplot[
			xmin=0.0,
			xmax=0.3,
			ymin=0,
			ymax=10,
			yticklabel=\empty,
			width=\plotwidth cm,
			height=\plotheight cm,
			xlabel=$\sigma$,
			]
			\addruntimeplot{./data/sparse/geometric-subset-join/sigmas/n81_kn15_kf15_stats.csv}	
			
			\nextgroupplot[
			xmin=0.0,
			xmax=0.3,
			ymin=0,
			ymax=10,
			yticklabel=\empty,
			width=\plotwidth cm,
			height=\plotheight cm,
			xlabel=$\sigma$,
			]
			\addruntimeplot{./data/sparse/geometric-subset-join/sigmas/n63_kn500_kf500_stats.csv}	
			
		\end{groupplot}
	\end{tikzpicture}
	\\[-2ex]
	\caption{%
		We report above for the geometric data set the percentage of fixed variables and runtime after applying Proposition~\ref{lemma:persistency-subset-separation} (Rows 1--2),
		Proposition~\ref{proposition:edge-cut-persistency} (Rows 3--4) and 
		Corollary~\ref{corollary:subset-join-all-pairs-at-once} (Rows 5--6), separately.
		From left to right, the sparsity of the graph decreases as written in the column title $(k, \vert V\vert, \bar{E})$.}
	\label{figure:geometric-independent-subsets-subset-join-edge-cut-sparse-joint-alphas-combined}
\end{figure}
\tikzmath{\plotheight=4.0;}
\tikzmath{\plotwidth=4.6;}

\providecommand{\addtriangleplot}{}	
\renewcommand{\addtriangleplot}[1]{\addplot[
	only marks,
	mark=*,
	black,
	mark options = {draw=none}
	] table[
	col sep=comma,
	x expr=\thisrow{sigma}, 
	y expr=100*\thisrow{medianEliminatedTriangles},
	y error minus expr=100*\thisrow{medianEliminatedTriangles} - 100*\thisrow{q25EliminatedTriangles},
	y error plus expr=100*\thisrow{q75EliminatedTriangles} - 100*\thisrow{medianEliminatedTriangles} 
	] {#1};}

\providecommand{\addruntimeplot}{}	
\renewcommand{\addruntimeplot}[1]{\addplot[
	only marks,
	mark=*,
	black,
	mark options = {draw=none}
	] table[
	col sep=comma,
	x expr=\thisrow{sigma}, 
	y expr=\thisrow{medianDuration} / 1e9,
	y error minus expr=(\thisrow{medianDuration} - \thisrow{q25Duration}) / 1e9,
	y error plus expr=(\thisrow{q75Duration} - \thisrow{medianDuration})  / 1e9
	] {#1};}

\begin{figure}[!t]
	\centering
	
	\begin{tikzpicture}[baseline]
		\begin{groupplot}[group style={group size= 4 by 2, horizontal sep=0.44cm, vertical sep=0.50cm}]
			
			
			\nextgroupplot[
			title={$(5, 99, 2079)$},
			title style={align=center},
			ylabel={Triples [\%]},
			xmin=0.0,
			xmax=0.3,
			ymin=0,
			ymax=100,
			width=\plotwidth cm,
			height=\plotheight cm,
			xticklabels=\empty
			]
			\addtriangleplot{./data/sparse/geometric-triangle-cut/sigmas/n99_kn5_kf5_stats.csv}
			
			\nextgroupplot[
			title={$(10, 90, 2210)$},
			title style={align=center},
			xmin=0.0,
			xmax=0.3,
			ymin=0,
			ymax=100,
			yticklabel=\empty,
			width=\plotwidth cm,
			height=\plotheight cm,
			xticklabels=\empty
			]
			\addtriangleplot{./data/sparse/geometric-triangle-cut/sigmas/n90_kn10_kf10_stats.csv}
			
			\nextgroupplot[
			title={$(15, 81, 2344)$},
			title style={align=center},
			xmin=0.0,
			xmax=0.3,
			ymin=0,
			ymax=100,
			yticklabel=\empty,
			width=\plotwidth cm,
			height=\plotheight cm,
			xticklabels=\empty
			]
			\addtriangleplot{./data/sparse/geometric-triangle-cut/sigmas/n81_kn15_kf15_stats.csv}
			
			\nextgroupplot[
			title={$(\infty, 63, 1953)$},
			title style={align=center},
			xmin=0.0,
			xmax=0.3,
			ymin=0,
			ymax=100,
			yticklabel=\empty,
			width=\plotwidth cm,
			height=\plotheight cm,
			xticklabels=\empty
			]
			\addtriangleplot{./data/sparse/geometric-triangle-cut/sigmas/n63_kn500_kf500_stats.csv}
			
			
			\nextgroupplot[
			ylabel={Runtime [s]},
			xlabel=$\sigma$,
			xmin=0.0,
			xmax=0.3,
			ymin=0,
			ymax=30,
			width=\plotwidth cm,
			height=\plotheight cm
			]
			\addruntimeplot{./data/sparse/geometric-triangle-cut/sigmas/n99_kn5_kf5_stats.csv}	
			
			\nextgroupplot[
			xmin=0.0,
			xmax=0.3,
			ymin=0,
			ymax=30,
			yticklabel=\empty,
			width=\plotwidth cm,
			height=\plotheight cm,
			xlabel=$\sigma$
			]
			\addruntimeplot{./data/sparse/geometric-triangle-cut/sigmas/n90_kn10_kf10_stats.csv}	
			
			\nextgroupplot[
			xmin=0.0,
			xmax=0.3,
			ymin=0,
			ymax=30,
			yticklabel=\empty,
			width=\plotwidth cm,
			height=\plotheight cm,
			xlabel=$\sigma$
			]
			\addruntimeplot{./data/sparse/geometric-triangle-cut/sigmas/n81_kn15_kf15_stats.csv}	
			
			\nextgroupplot[
			xmin=0.0,
			xmax=0.3,
			ymin=0,
			ymax=30,
			yticklabel=\empty,
			width=\plotwidth cm,
			height=\plotheight cm,
			xlabel=$\sigma$
			]
			\addruntimeplot{./data/sparse/geometric-triangle-cut/sigmas/n63_kn500_kf500_stats.csv}	
			
		\end{groupplot}
        \end{tikzpicture}
        \\[-2ex]
	\caption{%
		We report above for the geometric data set the percentage of fixed triples and runtime after applying Proposition~\ref{lemma:persistency-triplet-cut}. From left to right, the sparsity of the graph decreases as written in the column title $(k, \vert V\vert, \bar{E})$.}
	\label{figure:geometric-triangle-cut-sparse-joint-alphas}
\end{figure}

Next, we consider a data set of instances that arise from the geometric problem of finding equilateral triangles in a noisy point cloud; see Figure~\ref{figure:experiments}b.
For this, we fix three equilateral triangles in the plane. 
For each vertex $\vec{a}$ of a triangle, we draw a number of points from a Gaussian distribution with mean $\vec{a}$ and covariance matrix $\sigma^2\mathbbm{1}$. 
This defines the set of nodes, $V$.
We build a graph $G = (V, E)$ from these points in the following way. 
First we add all pairs $pq\in \tbinom V2$ as edges if $p$ and $q$ were drawn from vertices of the same triangle. 
Second, with respect to a parameter $k\in \mathbb{N}$, we compute for each point $\vec{a}_p$ its $k$ nearest neighbors, $V_p^n$, and its $k$ farthest neighbors, $V_p^f$. Then we add for every $p\in V$ the pairs $\{pq\mid q\in V_p^n\cup V_p^f\}$ as edges to the graph $G$.
Adding both the $k$ nearest and the $k$ farthest neighbors is motivated by the following intuition. Three points that are mutually close likely are drawn from the same triangle. For three points that are mutually far apart we can make the decision whether these points belong to the same triangle or to different triangles based on the deviation of interior angles from $\pi / 3$. For three points that are neither mutually close or far apart, this decision is ambiguous. Therefore, we prefer to include both edges between far apart and close points. 
We let $T = \{pqr\in \tbinom V3\mid pq\in E, qr\in E, pr\in E \}$, i.e., we introduce costs for all triples that form a 3-clique.

For any three points $\vec{a}_p, \vec{a}_q, \vec{a}_r$ such that $pqr\in T$, 
let $\varphi_p$, $\varphi_q, \varphi_r$ be the interior angles of the triangle spanned by these points, and let $d^{max}_{pqr}$ and $d^{min}_{pqr}$ be the maximum and minimum length of edges in this triangle.
If the three points are mutually close, $d^{max}_{pqr} \leq 4\sigma$, we reward solutions in which these belong to the same set by letting $c_{pqr} = -1 + \frac{d^{max}_{pqr}}{4\sigma}$. 
If only two points are close, $d^{max}_{pqr} > 4\sigma$ and $d^{min}_{pqr} \leq 4\sigma$, we let $c_{pqr} = 0$. 
If the three points are mutually far apart, $d^{min}_{pqr} > 4\sigma$, we calculate the sum of the deviations of the inner angles from $\frac \pi 3, $ $\delta_{pqr} = \sum_i\vert \varphi_i - \frac{\pi}{3}\vert$. 
If $\delta_{pqr} \leq \frac \pi 6$, we let $c_{pqr} = -1 + \frac{6\delta_{pqr}}{\pi}$. Otherwise, $c_{pqr} = \frac{6}{7}\frac{\delta_{pqr} - \frac{\pi}{6}}{\pi}$. 

The percentage of variables fixed by applying all conditions jointly, as described in Section~\ref{section:mixing-conditions}, 
and with respect to $\sigma$ is reported in  Figure~\ref{figure:geometric-sparse-joint-sigmas-combined} (Rows 1--2).
Here, the hardness of the instances is embodied by $\sigma$. 
Depending on the sparsity of the instances, the number of nodes ranges from $\vert V\vert = 54$ (for $k = \infty$) to $\vert V\vert = 90$ (for $k = 5$).
As $\sigma$ increases, the percentage of fixed variables decreases. 
The runtime increases, as can be seen from Figure~\ref{figure:geometric-sparse-joint-sigmas-combined} (Row~2), and stays below one minute for all these instances. We also observe that for sparser instances (lower $k$) we can fix less variables at the same noise level $\sigma$.
The percentage of variables fixed by applying Proposition~\ref{lemma:persistency-subset-separation}, Proposition~\ref{proposition:edge-cut-persistency} and Corollary~\ref{corollary:subset-join-all-pairs-at-once} separately is shown in 
Figures~\ref{figure:geometric-independent-subsets-subset-join-edge-cut-sparse-joint-alphas-combined}
and
the percentage of triples fixed by applying Proposition~\ref{lemma:persistency-triplet-cut} separately is shown in 
Figure~\ref{figure:geometric-triangle-cut-sparse-joint-alphas}.
Also here, all the cut conditions are effective whereas the only useful join condition is Corollary~\ref{corollary:subset-join-all-pairs-at-once}.
Moreover, Corollary~\ref{corollary:subset-join-all-pairs-at-once} is the most effective 
for complete graphs. In contrast, Propositions~\ref{lemma:persistency-subset-separation} and~\ref{proposition:edge-cut-persistency} are more effective for sparse instances.

The runtime and percentage of variables fixed by applying all conditions jointly and with respect to the instance size are shown in 
Figure~\ref{figure:geometric-sparse-joint-sigmas-combined} (Rows 3--4).
Similar to the partition data set, we see that, as the instance size increases, the number of fixed variables declines while the runtime increases. 
For some $\sigma$, our conditions can fix more variables in instances defined on complete graphs than in instances defined on sparse graphs.
For increasing instance size, the runtime for $\sigma = 0.1$ and $k = 5$ converges to a $\mathcal{O}(\vert V\vert^{5.77})$ trend and for $k=10$ to a $\mathcal{O}(\vert V\vert^{5.36})$ trend. 
Again, this is close to the expected worst-case complexity of $\mathcal{O}(\vert V\vert^6)$.
Moreover, we are able to handle larger sparse graphs, induced by larger instances.

\section{Connection to the Cubic Multicut Problem}\label{sec: multicut}
In this section, we turn to a natural question that arises in the context of correlation clustering and multicuts.
When the objective function is linear, it is well-known that the two problems are equivalent, as mentioned also in the introduction. 
In particular, the feasible vectors of the two problems are in a 1-to-1 correspondence given by the affine map $x \mapsto 1 - x$.
This implies that the optimal value of the (linear) correlation clustering problem is equal to a constant minus the optimal value of the (linear) multicut problem. 
It is therefore reasonable to ask ourselves whether the same behaviour occurs also in the nonlinear setting. 
One important thing to realize is that, while the cubic term in \eqref{eq: pb} tells us exactly what we want, i.e., it is equal to one if and only if all the three corresponding points are in the same node set, this expressiveness property does not transfer to the higher-order multicut problem.
In fact, if we simply replace the variables with their affine transformation $x \mapsto 1 - x$, the cubic term becomes $(1 - x_{pq})(1 - x_{pr})(1 - x_{qr})$ for $pqr \in T$. 
We observe that this product equals one if and only if all the correlation clustering variables $x_{pq}$, $x_{pr}$, $x_{qr}$ have value zero, meaning that the three points end up in three different node subsets. 
However, we would desire the cubic term in the multicut problem to evaluate $1$ also when two points out of three are together but the third one is in another node subset. 
This suggests that the higher-order multicut problem cannot be expressed naturally without the introduction of additional variables, contrary to the higher-order correlation clustering problem. 
Next, we define the variables of the cubic multicut problem, and show the relationship between the optimal values of this problem and of the cubic correlation clustering. 
\begin{gather}
\forall pq \in E, \qquad z_{pq} := 1 - x_{pq} = \begin{cases}
1, \qquad & \text{if } p, q \text{ are in different subsets,} \\
0, \qquad & \text{if } p, q \text{ otherwise}, 
\end{cases} \\
\forall pqr \in T, \qquad y_{pqr} := 1 - x_{pq}x_{pr}x_{qr} = \begin{cases}
1, \qquad & \text{if } p, q, r \text{ are not all together,} \\
0, \qquad & \text{if } p, q, r \text{ otherwise}.
\end{cases}
\end{gather}
The cubic multicut problem w.r.t.~to a graph $G = (V, E)$, triples $T$, and costs $c \in \mathbb{R}^{I(G)}$ is: 
\begin{align}\label{eq: homc}
	\smallmath{\min_{\substack{z: E \to \{0,1\} \\ y: T \to \{0,1\}}} \quad} & \smallmath{\sum_{pqr \in T} c_{pqr} y_{pqr} + \sum_{pq \in E} c_{pq} z_{pq} + c_{\emptyset}} \\
	\smallmath{\text{s.t.} \quad \ } & \smallmath{\forall (V_C, E_C)\in \mathrm{cycles}(G)\, \forall e'\in E_C\colon \quad z_{e'} \leq \sum_{e\in E_C\setminus \{e'\}}z_e }\\
	\label{eq:cubic-multicut-edges-not-cut-triplet-not-cut-constraint}
	&\smallmath{\forall pqr\in T\colon \quad y_{pqr} \leq z_{pq} + z_{qr} + z_{pr}}\\
	\label{eq:cubic-multicut-one-edge-cut-triplet-cut-constraint}
	&\smallmath{\forall pqr\in T\colon \quad y_{pqr} \geq z_{pq} \text{\, and \,} y_{pqr} \geq z_{qr} \text{\, and \,} y_{pqr} \geq z_{pr}}
\end{align}
We denote the objective value of \eqref{eq: pb} by $\text{obj}_{\text{\ccp}}$ and the value of \eqref{eq: homc} by $\text{obj}_\text{CMCP}$:
\begin{align}
\text{obj}_{\text{\ccp}} & = \sum_{pqr \in T} c_{pqr} x_{pq}x_{pr}x_{qr} + \sum_{pq \in E} c_{pq} x_{pq} + c_{\emptyset} \\ 
& = \sum_{pqr \in T} c_{pqr} - \sum_{pqr \in T} c_{pqr} (1 - x_{pq}x_{pr}x_{qr}) + \sum_{pq \in E} c_{pq} - \sum_{pq \in E} c_{pq} (1 - x_{pq}) - c_{\emptyset} + 2c_{\emptyset} \\
& = \sum_{pqr \in T} c_{pqr} - \sum_{pqr \in T} c_{pqr} y_{pqr} + \sum_{pq \in E} c_{pq} - \sum_{pq \in E} c_{pq} z_{pq} - c_{\emptyset} + 2c_{\emptyset} = - \text{obj}_\text{CMCP} + C, 
\end{align}
where $C$ is a constant (specifically, $C = \sum_{pqr \in T} c_{pqr} + \sum_{pq \in E} c_{pq} + 2c_{\emptyset}$).

We conclude that the two problems are equivalent, similarly to the linear case. 
However, the two problems have a different number of variables in the non-linear setting.
Hence, all partial optimality results obtained for one problem can be transferred to the other problem once we flip the sign of the costs and the binary values of the variables.
\section{Conclusion}
We establish partial optimality conditions for the cubic correlation clustering problem for general graphs. 
In particular, we generalize the conditions and the implementation established by~\cite{stein-2023} for the case of complete graphs to that of general graphs. 
We empirically evaluate these conditions on two data sets and make the following observations:
Firstly, we observe for the specific instances investigated in this article that all cut conditions are effective and the only join condition that is effective is Proposition~\ref{proposition:subset-join-proposition} (in its simplified form of Corollary~\ref{corollary:subset-join-all-pairs-at-once}). 
In the interest of objectivity, we discuss all conditions generalized from prior work unconditionally and report also the ineffectiveness of some conditions. 
Secondly, we observe that our implementation can handle larger instances, with respect to the number of elements, as we sparsify the instances.

One question we do not answer here is in which order to apply certified joins. 
The fraction of fixed variables depends on this order, and looking into heuristics for choosing an order appears promising.
Toward future work, it would be interesting to see whether partial optimality conditions are completely or partially transferable to higher-order correlation clustering or to the case where triple costs additionally depend on node labels.

\acks{%
	David Stein acknowledges partial support by the Federal Ministry of Education and Research of Germany through grant 16LW0079K.
	Bjoern Andres acknowledges partial support by the Federal Ministry of Education and Research of Germany through grant 16KIS2332K.}

\newpage

\appendix
\section{Reductions to Minimum Cut Problems}\label{sec:tech-min-cut} 
Here, we discuss how, for a given pair or triple, we reduce the search for subsets $U, U' \subseteq V$ that satisfy \eqref{eq:assumption-edge-cut-inequality}--\eqref{eq:triangle-edge-join-2} maximally to the min $st$-cut problem.
In any of these cases, we have $G = (V, E)$ be a graph and $c\in \mathbb{R}^{I(G)}$ such that $c_{pqr} \geq 0$ for all $pqr\in T$, and $c_{pq}\geq 0$ for all $pq\in E$. 
Moreover, in the cases of~\eqref{eq:assumption-edge-cut-inequality} and \eqref{eq:edge-join-inequality}~we have an edge $ij \in E$ such that $i\in U$ and $j\notin U$, while in the cases of~\eqref{eq:triplet-cut-condition}, \eqref{eq:assumption-triplet-join}, \eqref{eq:triangle-edge-join-1} and \eqref{eq:triangle-edge-join-2} we have a triple $ijk \in T$ such that $ij, ik\in \delta(U)$. Note that in the latter case we can assume that $i\in U$ and $j, k\notin U$ without loss of generality. We handle both cases jointly by  denoting the set of vertices not in $U$ by $V_0 \subseteq V$. The search for subsets that satisfy \eqref{eq:assumption-edge-cut-inequality}--\eqref{eq:triangle-edge-join-2} maximally then takes the form:
\begin{equation}
	\min_{\substack{U \subseteq V \colon \\ i \in U,\\ j \notin U, \forall j \in V_0}}\quad \sum_{pqr\in T_{\delta(U)}}c_{pqr} + \sum_{pq\in \delta(U)}c_{pq} 
	\enspace .
	\label{eq:verification-problem-unified}
\end{equation}
To begin with, we move costs of triples to costs of pairs:
\begin{proposition}\label{prop: reduction costs}
	\label{eq:cubic-st-cut-reduction-min-st-cut}
	Let $G = (V, E)$ be a graph, $c\in \mathbb{R}^{I(G)}$ and $U \subseteq V$. Then
	$\sum_{pqr\in T_{\delta(U)}} c_{pqr} = \frac{1}{2}\sum_{pq\in \delta(U)}\sum_{\substack{r\in V \setminus \{p, q\} : \\ pqr \in T}}c_{pqr}$.
\end{proposition}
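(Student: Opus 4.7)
The plan is to prove the identity by a double-counting argument based on the following elementary observation: for any triple $pqr \in T_{\delta(U)}$, exactly two of the three edges in $\tbinom{pqr}{2}$ belong to $\delta(U)$.

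To establish this observation, I would fix $pqr \in T_{\delta(U)}$ and consider the intersection $\{p,q,r\} \cap U$. If this intersection were empty or equal to $\{p,q,r\}$, then all three edges $pq$, $pr$, $qr$ would lie entirely outside $\delta(U)$, contradicting $pqr \in T_{\delta(U)}$. Hence $|\{p,q,r\} \cap U| \in \{1, 2\}$. In either case, one vertex of the triple is separated from the other two by $U$, so the two edges incident to that singled-out vertex belong to $\delta(U)$, while the remaining edge (connecting the two vertices on the same side of $U$) does not. Thus each triple in $T_{\delta(U)}$ contains exactly two edges of $\delta(U)$.

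Now I would rewrite the right-hand side by swapping the order of summation, noting that $\sum_{pq \in \delta(U)} \sum_{r \,:\, pqr \in T} c_{pqr} = \sum_{pqr \in T} c_{pqr} \cdot |\delta(U) \cap \tbinom{pqr}{2}|$. For $pqr \notin T_{\delta(U)}$, the multiplicity is $0$; for $pqr \in T_{\delta(U)}$, the multiplicity is exactly $2$ by the observation above. Therefore the right-hand side equals $\tfrac{1}{2} \cdot 2 \sum_{pqr \in T_{\delta(U)}} c_{pqr}$, which matches the left-hand side.

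There is no real obstacle here; the only subtle point is the case analysis establishing that precisely two edges of each cut triple cross the cut, and this follows immediately from the partition of $\{p,q,r\}$ induced by $U$.
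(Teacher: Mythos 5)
Your proof is correct and rests on the same double-counting fact as the paper's, namely that every triple in $T_{\delta(U)}$ has exactly two of its three edges in $\delta(U)$; the paper establishes this implicitly by splitting $T_{\delta(U)}$ into triples with two vertices in $U$ and triples with two vertices in $V\setminus U$ and then symmetrizing the resulting sums. Your explicit multiplicity argument is simply a cleaner organization of the same computation.
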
 
\begin{proof}
	Let $U \subseteq V$.
	Observe that
	\begin{align} \label{eq:identity-sum-all-TdeltaR-triplets}
		&\sum_{pqr\in T_{\delta(U)}}c_{pqr} 
		= 
		\sum_{pq\in \tbinom U2}\sum_{\substack{r\in V \setminus U : \\ pqr \in T}}c_{pqr} 
		+ \sum_{pq\in \tbinom{V \setminus U}{2}}\sum_{\substack{r\in U : \\ pqr \in T}} c_{pqr} \\
		= &\frac 12 \sum_{p\in U}\sum_{q\in U \setminus \{p\}}\sum_{\substack{r\in V \setminus U : \\ pqr \in T}}c_{pqr} + \frac 12 \sum_{p\in V \setminus U}\sum_{q\in V \setminus \left(U \cup \{p\}\right)}\sum_{\substack{r\in U : \\ pqr \in T}}c_{pqr} \\
		= &\frac 12 \sum_{p\in U}\sum_{q\in V \setminus U}
		\Bigl(\sum_{\substack{r\in U \setminus \{p\} : \\ pqr \in T}}c_{pqr} + \sum_{\substack{r\in V \setminus \left(U \cup \{q\}\right) : \\ pqr \in T}}c_{pqr}\Bigr) 
		= \frac 12 \sum_{pq\in \delta(U)}\sum_{\substack{r\in V \setminus \{p, q\} : \\ pqr \in T}}c_{pqr}. \qedhere
	\end{align}
\end{proof}
Consequently, \eqref{eq:verification-problem-unified} is equivalent to 
	$\displaystyle\min_{\substack{U \subseteq V \colon \\ i \in U,\\ j \notin U, \forall j \in V_0}}\quad \sum_{pq\in \delta(U)}c'_{pq}$,
where $c'_{pq} = c_{pq} + \frac{1}{2}\sum_{\substack{r\in V \setminus \{p, q\} \\ pqr \in T}}c_{pqr}$ for all $pq \in E$.
Next, we reduce the above minimization problem to a \emph{quadratic unconstrained binary optimization problem}, by applying the following proposition:
\begin{proposition}
	\label{lemma:qpbo-translation}
	Let $G = (V, E)$ be a graph and $c\in \mathbb{R}^E$. 
	Moreover, let $i\in V$ and $V_0 \subseteq V\setminus \{i\}$. 
	Furthermore, let $V' = V \setminus \left(V_0 \cup \{i\}\right)$ and $c' \colon (E \cap \tbinom{V'}{2}) \cup V	' \cup \{\emptyset\} \to \mathbb{R}$ such that
	\begin{gather}
		c'_p = \sum_{q\in V \setminus \{p\} :  pq \in E}c_{pq} - 2c_{pi} \: \forall p \in V' \colon pi \in E, \qquad
		c'_p = \sum_{q\in V \setminus \{p\} :  pq \in E}c_{pq} \: \forall p \in V' \colon pi \notin E
		\\
		c'_{pq} = -2c_{pq} \: \forall pq \in E \cap \tbinom{V'}{2},
		\qquad
		c'_\emptyset = \sum_{q\in V \setminus \{i\} : qi \in E}c_{qi}.
	\end{gather}
	Then:
	\begin{equation}
		\min_{\substack{U \subseteq V \colon \\ i \in U, \\ j \not\in U, \forall j \in V_0}} \sum_{pq\in \delta(U)} c_{pq} 
		= \min_{y\in \{0, 1\}^{V'}} 
		\sum_{pq\in E \cap \tbinom{V'}{2}} c'_{pq}y_p y_q + \sum_{p\in V'}c'_p y_p + c'_\emptyset
		\enspace .
		\label{eq:node-encoding-equality-2} 
	\end{equation}
\end{proposition}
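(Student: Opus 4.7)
The plan is to reparameterize the optimization problem on $U \subseteq V$ as an optimization problem on an indicator vector $y \in \{0,1\}^{V'}$ and then expand the resulting polynomial in $y$, collecting constant, linear, and quadratic coefficients, and matching them against $c'_\emptyset$, $c'_p$, and $c'_{pq}$.

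First I would introduce, for any $U$ feasible on the left-hand side, the $V$-indexed binary vector $x^U \in \{0,1\}^V$ with $x^U_p = 1$ iff $p \in U$. The constraints $i \in U$ and $V_0 \cap U = \emptyset$ fix $x^U_i = 1$ and $x^U_j = 0$ for every $j \in V_0$, so the free coordinates are exactly those indexed by $V' = V \setminus (V_0 \cup \{i\})$; call this free part $y = (x^U_p)_{p \in V'}$. The correspondence $U \leftrightarrow y$ is a bijection between the feasible $U$'s and $\{0,1\}^{V'}$. The key elementary identity is that, for any edge $pq \in E$, one has $\mathbbm{1}[pq \in \delta(U)] = x^U_p + x^U_q - 2 x^U_p x^U_q$ since both sides equal $1$ exactly when $x^U_p \neq x^U_q$. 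Consequently,
\begin{equation}
\sum_{pq \in \delta(U)} c_{pq} = \sum_{pq \in E} c_{pq}\bigl(x^U_p + x^U_q - 2 x^U_p x^U_q\bigr) \enspace .
\end{equation}

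Next I would split the sum on the right according to which of the three classes $V'$, $V_0$, $\{i\}$ the endpoints $p,q$ belong to. Edges inside $V_0$ and edges between $V_0$ and $\{i\}$ contribute only constants (the former zero, the latter $c_{iq}$ for each $iq \in E$ with $q \in V_0$); edges $pi$ with $p \in V'$ contribute $c_{pi}(1 - y_p)$; edges between $V'$ and $V_0$ contribute $c_{pq} y_p$; and edges inside $V'$ contribute $c_{pq}(y_p + y_q - 2 y_p y_q)$. Summing the constants gives exactly $\sum_{q \in V \setminus \{i\}: qi \in E} c_{qi} = c'_\emptyset$, because $V \setminus \{i\} = V' \cup V_0$. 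The quadratic part comes solely from the within-$V'$ edges and equals $\sum_{pq \in E \cap \binom{V'}{2}} (-2 c_{pq}) y_p y_q = \sum_{pq \in E \cap \binom{V'}{2}} c'_{pq} y_p y_q$, by definition of $c'_{pq}$.

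Finally, for the linear coefficient of each $y_p$ with $p \in V'$, I would collect all linear contributions touching $p$: from within-$V'$ edges we get $\sum_{q \in V' \setminus \{p\}: pq \in E} c_{pq}$; from $V'$--$V_0$ edges we get $\sum_{q \in V_0: pq \in E} c_{pq}$; and from the $pi$ edge (if it exists) we get $-c_{pi}$. Their sum is $\sum_{q \in V \setminus \{p\}: pq \in E} c_{pq}$ when $pi \notin E$, and $\sum_{q \in V \setminus \{p\}: pq \in E} c_{pq} - 2c_{pi}$ when $pi \in E$ (since the $pi$ term is present once in the full-neighbour sum and subtracted once more from $c_{pi}(1-y_p)$). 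These match $c'_p$ in the two cases. Because every feasible $U$ corresponds bijectively to some $y \in \{0,1\}^{V'}$ and the two objectives agree term-by-term, taking the minimum on both sides yields~\eqref{eq:node-encoding-equality-2}. There is no real obstacle here beyond careful bookkeeping of how each edge class contributes to the constant, linear, and quadratic parts, and the factor-of-two from the $x_p + x_q - 2 x_p x_q$ identity.
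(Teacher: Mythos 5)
Your proposal is correct and follows essentially the same route as the paper: encode $U$ by its indicator vector, use the identity $\mathbbm{1}[pq\in\delta(U)]=x_p+x_q-2x_px_q$, substitute $x_i=1$ and $x_j=0$ for $j\in V_0$, and collect the constant, linear, and quadratic coefficients to match $c'_\emptyset$, $c'_p$, and $c'_{pq}$. The paper performs the same bookkeeping in one chain of equalities rather than by explicitly splitting edges into classes, but the argument is identical in substance.
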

\begin{proof}
	Let $U \subseteq V$ such that $i \in U$ and $\forall j \in V_0 \colon j\not \in U$. 
	We define $y\in \{0, 1\}^V$ such that $y = \mathbbm{1}_U$. 
	Then, we have $y_i = 1$ and $\forall j \in V_0 \colon y_j = 0$. 
	Moreover, it follows 
\small{	\begin{align}
		&\smashoperator[r]{\sum_{pq\in \delta(U)}} c_{pq}
		= 
		\hspace{-0.5ex}\sum_{\mathclap{pq\in E}}c_{pq}\left(y_p (1-y_q) + y_q (1-y_p)\right) 
		= 
		\hspace{-0.5ex}\sum_{\mathclap{pq\in E}}c_{pq}\left(y_p + y_q - 2y_p y_q\right) 
		= -2
		\sum_{\mathclap{pq\in E}}c_{pq}y_p y_q + 
		\sum_{\mathclap{\substack{p,q \in V : \\ p \neq q, \\ pq \in E}}} c_{pq}y_p \\
		& = -2 \sum_{pq\in E \cap \tbinom{V'}{2}}c_{pq}y_p y_q - 2\sum_{\substack{p\in V' : \\pi\in E}}c_{pi} y_p + \sum_{p\in V'}\sum_{\substack{q\in V \setminus \{p\} : \\ pq \in E}}c_{pq}y_p + \sum_{\substack{q\in V \setminus \{i\} : \\ qi \in E}} c_{qi} \\
		& =-2\sum_{pq \in E \cap \tbinom{V'}{2}}c_{pq}y_p y_q + \sum_{\substack{p\in V' : \\pi \in E}}
		\Bigl(-2c_{pi} + \sum_{\substack{q\in V \setminus \{p\} : \\ pq\in E}}c_{pq}\Bigr)y_p 
		+
		\sum_{\substack{p\in V' : \\pi \notin E}}\sum_{\substack{q\in V \setminus \{p\} : \\ pq\in E}}c_{pq}y_p 
		+
		\sum_{\substack{q\in V \setminus \{i\} : \\ qi\in E}}c_{qi} \\
		&= \sum_{pq\in \tbinom{V'}{2}}c'_{pq}y_p y_q + \sum_{p\in V'} c'_p y_p + c'_\emptyset.\qedhere
	\end{align}%
	}\normalsize%
\end{proof}%
For the instances of \eqref{eq:node-encoding-equality-2} that arise from deciding  \eqref{eq:assumption-edge-cut-inequality}--\eqref{eq:triangle-edge-join-2}, we have $\forall pq\in E \cap \tbinom{V'}{2} \colon c'_{pq}\leq 0$.
Thus, the right-hand side of \eqref{eq:node-encoding-equality-2} is \emph{submodular} and can be minimized in strongly polynomial time \citep{boros-2008, kolmogorov-2004}.
	For completeness we describe the reduction of quadratic unconstrained binary optimization to an instance of min-$st$-cut in detail: Lemma~\ref{lemma:qubo-to-max-flow-analogue} reformulates the objective function of quadratic unconstrained binary optimization. Proposition~\ref{prop:last-prop} shows a reduction of this reformulation to min-$st$-cut.
	\begin{lemma}
		\label{lemma:qubo-to-max-flow-analogue}
		Let $G = (V, E)$ be a graph and $c\in \mathbb{R}^{V\cup E}$. We define $c'\colon \mathbb{R}^{V\cup E}$ as 
		\begin{gather}
			c'_p = c_p + \frac{1}{2}\sum_{\substack{q\in V\setminus \{p\} : \\ pq\in E}}c_{pq} \:\:\:\: \forall p\in V, \qquad
			c'_{pq} = - \frac{1}{2}c_{pq} \:\:\:\: \forall pq\in E.
		\end{gather}
		Then, for any $y\in \{0, 1\}^V$ we have that
		\begin{equation}
			\sum_{pq\in E}c_{pq} y_py_q + \sum_{p\in V} c_p y_p =  \sum_{p\in V}\sum_{\substack{q\in V\setminus \{p\} : \\ pq\in E}} c'_{pq}y_p (1-y_q) + \sum_{\substack{p\in V : \\ c'_p > 0}} c'_p y_p -\sum_{\substack{p\in V : \\ c'_p< 0}} c'_p (1-y_p) + \sum_{\substack{p\in V : \\ c'_p < 0}}c'_p\enspace.
		\end{equation}
	\end{lemma}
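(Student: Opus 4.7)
\medskip

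\noindent\textbf{Proof proposal.}
The plan is to expand the right-hand side in two stages, showing that (i) the first term, indexed over pairs, reduces to the desired quadratic form plus a spurious linear correction $-\sum_{p\in V}c'_p y_p$, and (ii) the remaining three terms on the right-hand side collapse to exactly $+\sum_{p\in V}c'_p y_p$, cancelling the correction.

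For stage (i), I would observe that the double sum $\sum_{p\in V}\sum_{q\in V\setminus\{p\}:\,pq\in E} c'_{pq}y_p(1-y_q)$ runs over \emph{ordered} pairs, so each undirected edge $pq\in E$ contributes $c'_{pq}\bigl(y_p(1-y_q)+y_q(1-y_p)\bigr)=c'_{pq}(y_p+y_q-2y_py_q)$. Plugging in $c'_{pq}=-\tfrac12 c_{pq}$ yields per edge $c_{pq}y_py_q-\tfrac12 c_{pq}(y_p+y_q)$. Summing over $E$ and reordering the linear part gives
\begin{equation*}
\sum_{pq\in E}c_{pq}y_py_q \;-\; \sum_{p\in V}y_p\cdot \tfrac12\sum_{q\in V\setminus\{p\}:\,pq\in E}c_{pq}.
\end{equation*}
Using the definition of $c'_p$, the inner coefficient equals $c'_p-c_p$, so the double sum on the right-hand side equals
\begin{equation*}
\sum_{pq\in E}c_{pq}y_py_q + \sum_{p\in V}c_p y_p - \sum_{p\in V}c'_p y_p \;=\; \text{LHS} - \sum_{p\in V}c'_p y_p.
\end{equation*}

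For stage (ii), I would simply rewrite
\begin{align*}
\sum_{\substack{p\in V:\\c'_p>0}}c'_p y_p - \sum_{\substack{p\in V:\\c'_p<0}}c'_p(1-y_p) + \sum_{\substack{p\in V:\\c'_p<0}}c'_p
&= \sum_{\substack{p\in V:\\c'_p>0}}c'_p y_p + \sum_{\substack{p\in V:\\c'_p<0}}c'_p y_p \\
&= \sum_{p\in V}c'_p y_p,
\end{align*}
where the last equality is because terms with $c'_p=0$ vanish. Adding the contributions from (i) and (ii) yields exactly the LHS.

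There is no real obstacle here: the identity is a purely mechanical algebraic reformulation whose only subtleties are (a) keeping track of the ordered-versus-unordered pair convention in the double sum (a factor of $2$ appears and is absorbed by $c'_{pq}=-\tfrac12 c_{pq}$), and (b) noticing that the three ``pointwise'' terms on the right-hand side are just a split of $\sum_p c'_p y_p$ designed to make the transformation match the standard min $st$-cut formulation used in Proposition~\ref{prop:last-prop}. Once both simplifications are performed, the identity is immediate.
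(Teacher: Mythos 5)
Your proof is correct and is essentially the same direct algebraic verification as the paper's: the paper rewrites the left-hand side over ordered pairs via $y_py_q=-y_p(1-y_q)+y_p$ and then splits $\sum_p c'_p y_p$ by sign, whereas you expand the right-hand side back down to the left-hand side, but the manipulations are identical up to direction.
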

	\begin{proof}
		Let $y\in \{0, 1\}^V$. 
		We have that
		\small{\begin{align}
			&\sum_{pq\in E}c_{pq}y_p y_q + \sum_{p\in V}c_py_p = \frac{1}{2}\sum_{p\in V}\sum_{\substack{q\in V \setminus \{p\} : \\ pq\in E}}c_{pq}y_p y_q + \sum_{p\in V}c_py_p \\
			 = &-\frac{1}{2}\sum_{p\in V} \sum_{\substack{q\in V \setminus \{p\} : \\ pq\in E}} c_{pq}y_p (1-y_q) 
			+ 
			\frac{1}{2}\sum_{p\in V} \sum_{\substack{q\in V \setminus \{p\} : \\ pq\in E}} c_{pq}y_p  
			+ 
			\sum_{p\in V} c_{p}y_p \\
			= &-\frac{1}{2}\sum_{p\in V} 
			\smashoperator[r]{\sum_{\substack{q\in V \setminus \{p\} : \\ pq\in E}}} c_{pq}y_p (1-y_q) 
			+ 
			\sum_{p\in V} 
			\Bigl(c_p + \frac{1}{2}\sum_{\substack{q\in V \setminus \{p\} : \\ pq\in E}} c_{pq}\Bigr)y_p 
			= \sum_{p\in V} 
			\sum_{\substack{q\in V \setminus \{p\} : \\ pq\in E}} c'_{pq}y_p (1-y_q) + \sum_{p\in V} c'_p y_p \\
			= &\sum_{p\in V} \sum_{\substack{q\in V \setminus \{p\} : \\ pq\in E}} c'_{pq}y_p (1-y_q) + \sum_{\substack{p\in V : \\ c'_p > 0}} c'_p y_p -\sum_{\substack{p\in V : \\ c'_p< 0}} c'_p (1-y_p) + \sum_{\substack{p\in V : \\ c'_p < 0}}c'_p. \qedhere
		\end{align}}\normalsize%
	\end{proof}
	\begin{proposition}\label{prop:last-prop}
		Let $G = (V, E)$ be a graph and $c\in \mathbb{R}^{V \cup E}$. 
		We define $\phi_c \colon \{0, 1\}^V \to \mathbb{R}$ such that for all $y\in \{0, 1\}^V$ it holds that
			\begin{align}
				\phi_c(y) = &\sum_{p\in V} \sum_{\substack{q\in V \setminus \{p\} : \\ pq \in E}} c_{pq}y_p (1-y_q) 
				+ \sum_{\substack{p\in V : \\ c_p > 0}} c_p y_p - \sum_{\substack{p\in V : \\ c_p< 0}} c_p(1-y_p).
			\end{align}%
		Furthermore, we define $V' = V \cup \{s, t\}$, $E'\subset V'\times V'$ such that $
			(s, p)\in E' \Leftrightarrow c_p < 0,\: \forall p \in V; \: 
			(p, t)\in E' \Leftrightarrow c_p > 0,\: \forall p \in V; \:$ and $
			(p, q)\in E' \land (q, p)\in E', \: \forall pq\in E
$.		
		We also define $c' \in \mathbb{R}^{E'}$ as 
		$c'_{(s, p)} = -c_p, \: \forall (s, p) \in E'; \:
		c'_{(p, t)} = c_p, \: \forall (p, t)\in E'; $ and $
		c'_{(p, q)} = c'_{(q, p)}= c_{pq}, \: \forall pq\in E .$
		Moreover, we define the function $\varphi_{c'}: \{0, 1\}^{V'} \to \mathbb{R}$ such that for all $y\in \{0, 1\}^{V'}$ it holds that 
		$\varphi_{c'}(y) = \sum_{(p,q)\in E'}c'_{(p, q)}y_p(1- y_q).$
		Then we have that
		\begin{equation}
			\min_{x\in \{0, 1\}^V}\phi_c(x) = \min_{\substack{y\in \{0, 1\}^{V'} : \\ y_s = 1  , \,y_t = 0}} \varphi_{c'}(y).
		\end{equation}%
	\end{proposition}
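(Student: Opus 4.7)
\medskip
\noindent\textbf{Proof plan for Proposition~\ref{prop:last-prop}.}
The statement is the classical reduction of the given pseudo-Boolean function to the cut function of an $st$-network, so the plan is to identify the feasible sets and then verify the two objectives coincide term-by-term.

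First, I would set up the bijection
\[
	\iota \colon \{0,1\}^V \;\longrightarrow\; \{\,y \in \{0,1\}^{V'} \mid y_s = 1,\ y_t = 0\,\},
	\qquad \iota(x)_p = x_p \text{ for } p \in V,\ \iota(x)_s = 1,\ \iota(x)_t = 0.
\]
This is evidently a bijection, so it suffices to show $\phi_c(x) = \varphi_{c'}(\iota(x))$ for every $x \in \{0,1\}^V$: taking the minimum on both sides then yields the equality of optima.

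Second, I would expand $\varphi_{c'}(\iota(x))$ by splitting $E'$ into three disjoint groups according to the construction of $E'$: the source arcs $(s,p)$, the sink arcs $(p,t)$, and the pairs of arcs $(p,q),(q,p)$ coming from edges $pq \in E$. Using $y_s = 1$ and $y_t = 0$, each group contributes a term that can be matched with a term of $\phi_c$:
\begin{align}
	\sum_{(s,p) \in E'} c'_{(s,p)} y_s(1-y_p)
	&= \sum_{\substack{p \in V : \\ c_p < 0}} (-c_p)(1 - y_p)
	= -\sum_{\substack{p \in V : \\ c_p < 0}} c_p (1 - y_p), \\
	\sum_{(p,t) \in E'} c'_{(p,t)} y_p(1-y_t)
	&= \sum_{\substack{p \in V : \\ c_p > 0}} c_p\, y_p, \\
	\sum_{\substack{(p,q)\in E' \\ p,q \in V}} c'_{(p,q)} y_p(1-y_q)
	&= \sum_{pq \in E} c_{pq}\bigl(y_p(1-y_q) + y_q(1-y_p)\bigr)
	= \sum_{p \in V}\!\!\sum_{\substack{q \in V\setminus\{p\} : \\ pq \in E}} c_{pq}\, y_p(1-y_q).
\end{align}
Summing the three identities gives exactly $\phi_c(x)$, as $y_p = x_p$ on $V$.

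Third, once the identity $\phi_c(x) = \varphi_{c'}(\iota(x))$ holds for every $x$, the equality of the minima is immediate from the bijection. There is no genuine obstacle: the only mild care is in the first term of $\phi_c$, which is an ordered double sum over $V$, so each unordered edge $pq \in E$ appears twice, matching the two opposite arcs $(p,q), (q,p)$ that we place in $E'$ for each $pq \in E$.
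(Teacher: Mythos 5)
Your proposal is correct and follows essentially the same route as the paper: both define the obvious bijection between $\{0,1\}^V$ and $\{y\in\{0,1\}^{V'}\mid y_s=1,\,y_t=0\}$, split the sum over $E'$ into source arcs, sink arcs, and the paired arcs arising from each $pq\in E$, and match the resulting three contributions term-by-term with $\phi_c$. The observation that the ordered double sum in $\phi_c$ counts each unordered edge twice, matching the two opposite arcs $(p,q),(q,p)$, is exactly the bookkeeping the paper's proof carries out as well.
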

	\begin{proof}
		First, the map $\chi\colon \{0, 1\}^V \to \{y \in \{0, 1\}^{V'}\mid y_s = 1 \,\land \, y_t = 0\}$ such that $\chi(y)_s = 1$, $\chi(y)_t = 0$ and $\chi(y)_p = y_p$ for any $p\in V$ is bijective. 
		Second, for any $y\in \{0, 1\}^V$ it holds that
\small{		\begin{align}
			&\varphi_{c'}(\chi(y))
			= \sum_{\mathclap{(p, q)\in E'}}c'_{pq}\chi(y)_p (1- \chi(y)_q) 
			= 
			\sum_{p\in V}\hspace{-0.25ex}
			\smashoperator[r]{\sum_{\substack{q\in V \setminus \{p\} : \\ pq\in E}}}c'_{(p, q)}y_p(1- y_q) + \sum_{\mathclap{\substack{p\in V : \\ c_p > 0}}}c'_{(p, t)}y_p 
			+ 
			\sum_{\mathclap{\substack{p\in V : \\ c_p < 0}}}c'_{(s, p)}(1- y_p) \\
			= &\sum_{p\in V}\sum_{\substack{q\in V\setminus \{p\} : \\ pq\in E}}c_{pq}y_p(1- y_q) +\sum_{\substack{p\in V : \\ c_p > 0}}c_py_p - \sum_{\substack{p\in V : \\ c_p < 0}}c_p(1- y_p) = \phi_c(y)\qedhere
		\end{align}}%
\normalsize
	\end{proof}
	
	If $c_{pq} \leq 0$, $pq\in E$, and therefore $c'_{pq} \geq 0$, $\forall pq\in E$, in Lemma~\ref{lemma:qubo-to-max-flow-analogue} the resulting instance can be solved efficiently.	
\section{Reproduction of Experiments from \cite{stein-2023}}\label{sec:reproduction-experiments-conference-article} 
We run experiments from~\cite{stein-2023} on the same hardware as used for our experiments in this manuscript. Moreover, we impose a runtime limit of $T = 60\mathrm{s}$.

\subsection{Partition Data Set}
In Figure~\ref{figure:partition-complete-joint} and Figure~\ref{figure:partition-goemetric-complete-individual-combined} (Row~1) we show the same experiments as described in Section~\ref{section:experiments-partition} for complete graphs, but with the code from~\cite{stein-2023}. We observe a slightly better runtime trend of $\mathcal{O}(\vert V\vert^{5.79})$ than for our new implementation applied to complete graphs.

\subsection{Geometric Data Set}
In Figure~\ref{figure:geometric-complete-joint} and Figure~\ref{figure:partition-goemetric-complete-individual-combined} (Row~2) we show the same experiments as described in Section~\ref{section:experiments-geometric} for complete graphs, but with the code from~\cite{stein-2023}. We observe a worse runtime behavior of $\mathcal{O}(\vert V\vert^{7.45})$, in contrast to the behavior of $\mathcal{O}(\vert V\vert^{5.8})$ reported in \cite{stein-2023}. We conjecture that the reason for this difference lies in the different runtime limits imposed, i.e., in this manuscript we impose a runtime limit of 60s and in~\cite{stein-2023} it is 1000s. Therefore, we assume that the runtimes for small $\vert V\vert$ affect the runtime trend disproportionately. Indeed, if we ignore the first two points in Figure~\ref{figure:geometric-complete-joint}d), then we obtain a runtime trend of $\mathcal{O}(\vert V\vert^{5.54})$ that is closer to $\mathcal{O}(\vert V\vert^{5.8})$ reported in~\cite{stein-2023}.
\tikzmath{\plotheight=3.7;}
\tikzmath{\plotwidth=4.45;}

\providecommand{\addvariableplot}{}
\renewcommand{\addvariableplot}[1]{\addplot+[
	only marks,
	mark=*,
	mark options={draw=none}
	] table[
	col sep=comma,
	x expr=\thisrow{numberOfPoints}, 
	y expr=\thisrow{medianEliminatedVariables}*100,
	y error minus expr=100*(\thisrow{medianEliminatedVariables} - \thisrow{q25EliminatedVariables}),
	y error plus expr=100*(\thisrow{q75EliminatedVariables} - \thisrow{medianEliminatedVariables})
	] {#1};}

\providecommand{\addtriangleplot}{}
\renewcommand{\addtriangleplot}[1]{\addplot+[
	only marks,
	mark=*,
	mark options={draw=none}
	] table[
	col sep=comma,
	x expr=\thisrow{numberOfPoints}, 
	y expr=100*\thisrow{medianEliminatedTriangles},
	y error minus expr=100*\thisrow{medianEliminatedTriangles} - 100*\thisrow{q25EliminatedTriangles},
	y error plus expr=100*\thisrow{q75EliminatedTriangles} - 100*\thisrow{medianEliminatedTriangles} 
	] {#1};}

\providecommand{\addruntimeplot}{}
\renewcommand{\addruntimeplot}[1]{\addplot+[
	only marks,
	mark=*,
	mark options={draw=none}
	] table[
	col sep=comma,
	x expr=\thisrow{numberOfPoints}, 
	y expr=\thisrow{medianDuration} / 1e9,
	y error minus expr=(\thisrow{medianDuration} - \thisrow{q25Duration}) / 1e9,
	y error plus expr=(\thisrow{q75Duration} - \thisrow{medianDuration})  / 1e9
	] {#1};}

\providecommand{\addpartitionvariableplot}{}
\renewcommand{\addpartitionvariableplot}[1]{\addplot+[
	only marks,
	mark=*,
	mark options={draw=none}
	] table[
	col sep=comma,
	x expr=\thisrow{alpha}, 
	y expr=\thisrow{medianEliminatedVariables}*100,
	y error minus expr=100*(\thisrow{medianEliminatedVariables} - \thisrow{q25EliminatedVariables}),
	y error plus expr=100*(\thisrow{q75EliminatedVariables} - \thisrow{medianEliminatedVariables})
	] {#1};}

\providecommand{\addpartitionruntimeplot}{}
\renewcommand{\addpartitionruntimeplot}[1]{\addplot+[
	only marks,
	mark=*,
	mark options={draw=none}
	] table[
	col sep=comma,
	x expr=\thisrow{alpha}, 
	y expr=\thisrow{medianDuration} / 1e9,
	y error minus expr=(\thisrow{medianDuration} - \thisrow{q25Duration}) / 1e9,
	y error plus expr=(\thisrow{q75Duration} - \thisrow{medianDuration})  / 1e9
	] {#1};}

\begin{figure}[!b]
	\centering
	
	\begin{tikzpicture}[baseline]
		\begin{groupplot}[group style={group size= 4 by 1, horizontal sep=0.80cm, vertical sep=0.75cm}]
			
			\nextgroupplot[
			title={Variables [\%]},
			xlabel=$\alpha$,
			xmin=0.25,
			xmax=0.7,
			ymin=0,
			ymax=100,
			width=\plotwidth cm,
			height=\plotheight cm,
			legend style={legend columns=1, font=\tiny}
			]
			\addpartitionvariableplot{./data/complete/partition/alphas/n7_beta0.00_stats.csv}
			\addlegendentry{$\beta = 0.00$}
			\addpartitionvariableplot{./data/complete/partition/alphas/n7_beta0.01_stats.csv}
			\addlegendentry{$\beta = 0.01$}
			\addpartitionvariableplot{./data/complete/partition/alphas/n7_beta0.50_stats.csv}
			\addlegendentry{$\beta = 0.50$}
			\addpartitionvariableplot{./data/complete/partition/alphas/n7_beta1.00_stats.csv}
			\addlegendentry{$\beta = 1.00$}
			
			\nextgroupplot[
			title={Runtime [s]},
			xlabel=$\alpha$,
			xmin=0.25,
			xmax=0.7,
			ymin=0,
			ymax=60,
			width=\plotwidth cm,
			height=\plotheight cm,
			domain=8:750,
			domain y=1e-6:1e2,
			samples=100,
			legend pos=south east,
			legend style={font=\scriptsize},
			xshift=-0.10cm
			]
			\addpartitionruntimeplot{./data/complete/partition/alphas/n7_beta0.00_stats.csv}
			\addpartitionruntimeplot{./data/complete/partition/alphas/n7_beta0.01_stats.csv}
			\addpartitionruntimeplot{./data/complete/partition/alphas/n7_beta0.50_stats.csv}
			\addpartitionruntimeplot{./data/complete/partition/alphas/n7_beta1.00_stats.csv}
			
			\nextgroupplot[
			title={Variables [\%]},
			xlabel=$\vert V\vert$,
			xmin=0,
			xmax=110,
			ymin=0,
			ymax=100,
			width=\plotwidth cm,
			height=\plotheight cm,
			legend style={legend columns=1, font=\tiny},
			xlabel shift=-0.1cm
			]
			\addvariableplot{./data/complete/partition/ns/alpha0.00_beta0.50_stats.csv}
			\addlegendentry{$\alpha = 0.00$}
			\addvariableplot{./data/complete/partition/ns/alpha0.40_beta0.50_stats.csv}
			\addlegendentry{$\alpha = 0.40$}
			\addvariableplot{./data/complete/partition/ns/alpha0.50_beta0.50_stats.csv}
			\addlegendentry{$\alpha = 0.50$}
			\addvariableplot{./data/complete/partition/ns/alpha0.65_beta0.50_stats.csv}
			\addlegendentry{$\alpha = 0.65$}
			
			\nextgroupplot[
			xmode=log,
			ymode=log,
			title={Runtime [s]},
			xlabel=$\vert V\vert$,
			xmin=6,
			xmax=110,
			ymin=1e-6,
			ymax=1e2,
			width=\plotwidth cm,
			height=\plotheight cm,
			domain=8:750,
			domain y=1e-6:1e2,
			samples=100,
			legend pos=south east,
			legend style={font=\scriptsize},
			xshift=0.30cm,
			xlabel shift=-0.2cm
			]
			\addruntimeplot{./data/complete/partition/ns/alpha0.00_beta0.50_stats.csv}
			\addruntimeplot{./data/complete/partition/ns/alpha0.40_beta0.50_stats.csv}
			\addruntimeplot{./data/complete/partition/ns/alpha0.50_beta0.50_stats.csv}
			\addruntimeplot{./data/complete/partition/ns/alpha0.65_beta0.50_stats.csv}
			
			\addplot[no markers, color=black] {3.64e-9*x^5.79};
			\legend{,,,,$O(\lvert V\rvert^{5.79})$}
		\end{groupplot}		
		
		\node[left=0.50cm, above=0.25cm] at (group c1r1.north west) {a)};
		\node[left=0.35cm, above=0.25cm] at (group c2r1.north west) {b)};
		\node[left=0.50cm, above=0.25cm] at (group c3r1.north west) {c)};
		\node[left=0.70cm, above=0.25cm] at (group c4r1.north west) {d)};
	\end{tikzpicture}
	\\[-2.75ex]
	\caption{
		We report above for the partition dataset the percentage of fixed variables and the runtime after applying all conditions jointly, as described in Section~\ref{section:mixing-conditions}, for the implementation from~\cite{stein-2023} for complete graphs and on the new hardware. 
		a-b) show these with respect to $\alpha$ and $\vert V\vert = 56$. c-d) show these with respect to $\vert V\vert$.
	}
	\label{figure:partition-complete-joint}
\end{figure}
\tikzmath{\plotheight=3.7;}
\tikzmath{\plotwidth=4.45;}

\providecommand{\addgeometricvariableplot}{}
\renewcommand{\addgeometricvariableplot}[1]{\addplot[
	only marks,
	mark=*,
	black,
	mark options={draw=none}
	] table[
	col sep=comma,
	x expr=\thisrow{sigma}, 
	y expr=\thisrow{medianEliminatedVariables}*100,
	y error minus expr=100*(\thisrow{medianEliminatedVariables} - \thisrow{q25EliminatedVariables}),
	y error plus expr=100*(\thisrow{q75EliminatedVariables} - \thisrow{medianEliminatedVariables})
	] {#1};}

\providecommand{\addgeometricruntimeplot}{}
\renewcommand{\addgeometricruntimeplot}[1]{\addplot[
	only marks,
	mark=*,
	black,
	mark options={draw=none}
	] table[
	col sep=comma,
	x expr=\thisrow{sigma}, 
	y expr=\thisrow{medianDuration} / 1e9,
	y error minus expr=(\thisrow{medianDuration} - \thisrow{q25Duration}) / 1e9,
	y error plus expr=(\thisrow{q75Duration} - \thisrow{medianDuration})  / 1e9
	] {#1};}

\begin{figure}[!b]
	\centering
	
	\begin{tikzpicture}[baseline]
		\begin{groupplot}[group style={group size= 4 by 1, horizontal sep=0.8cm, vertical sep=1cm}]			
			\nextgroupplot[
			xlabel=$\sigma$,
			title={Variabes [\%]},
			xmin=0,
			xmax=0.1,
			ymin=0,
			ymax=100,
			width=\plotwidth cm,
			height=\plotheight cm,
			xtick distance=0.05,
			ticklabel style={
				/pgf/number format/.cd,
				/pgf/number format/fixed
			}
			]
			\addgeometricvariableplot{./data/complete/geometric/sigmas/n54_stats.csv}
			
			\nextgroupplot[
			xlabel=$\sigma$,
			title={Runtime [s]},
			xmin=0,
			xmax=0.1,
			ymin=0,
			ymax=60,
			width=\plotwidth cm,
			height=\plotheight cm,
			legend pos=south east,
			legend style={font=\scriptsize},
			xtick distance=0.05,
			ticklabel style={
				/pgf/number format/.cd,
				/pgf/number format/fixed
			},
			xshift=-0.10cm
			]
			
			\addgeometricruntimeplot{./data/complete/geometric/sigmas/n54_stats.csv}
			
			\nextgroupplot[
			title={Variables [\%]},
			xlabel=$\vert V\vert$,
			xmin=0,
			xmax=110,
			ymin=0,
			ymax=100,
			width=\plotwidth cm,
			height=\plotheight cm,
			legend pos = south east,
			legend style={legend columns=1, font=\tiny},
			xlabel shift=-0.1cm
			]
			\addvariableplot{./data/complete/geometric/ns/sigma0.0100_stats.csv}
			\addlegendentry{$\sigma = 0.01$}
			\addvariableplot{./data/complete/geometric/ns/sigma0.0600_stats.csv}
			\addlegendentry{$\sigma = 0.06$}
			\addvariableplot{./data/complete/geometric/ns/sigma0.1000_stats.csv}
			\addlegendentry{$\sigma = 0.10$}
			
			\nextgroupplot[
			xmode=log,
			ymode=log,
			title={Runtime [s]},
			xlabel=$\vert V\vert$,
			xmin=6,
			xmax=110,
			ymin=1e-6,
			ymax=1e2,
			width=\plotwidth cm,
			height=\plotheight cm,
			domain=8:750,
			domain y=1e-6:1e2,
			samples=100,
			legend pos=south east,
			legend style={font=\scriptsize},
			xshift=0.3cm,
			xlabel shift=-0.2cm
			]
			\addruntimeplot{./data/complete/geometric/ns/sigma0.0100_stats.csv}
			\addruntimeplot{./data/complete/geometric/ns/sigma0.0600_stats.csv}
			\addruntimeplot{./data/complete/geometric/ns/sigma0.1000_stats.csv}
			
			\addplot[no markers, color=black] {5.11e-12*x^7.45};
			\legend{,,,$O(\lvert V\rvert^{7.45})$}
			
		\end{groupplot}
		
		\node[left=0.25cm, above=0.25cm] at (group c1r1.north west) {a)};
		\node[left=0.25cm, above=0.25cm] at (group c2r1.north west) {b)};
		\node[left=0.25cm, above=0.25cm] at (group c3r1.north west) {c)};
		\node[left=0.25cm, above=0.25cm] at (group c4r1.north west) {d)};
	\end{tikzpicture}
	\\[-2.75ex]
	\caption{
		We report above for the geometric dataset the percentage of fixed variables and the runtime after applying all conditions jointly, as described in Section~\ref{section:mixing-conditions}, for the implementation from~\cite{stein-2023} for complete graphs and on the new hardware. a-b) show these with respect to $\sigma$ and $\vert V \vert = 54$. c-d) show these with respect to $\vert V\vert$.
	}
	\label{figure:geometric-complete-joint}
\end{figure}
\tikzmath{\plotheight=3.7;}
\tikzmath{\plotwidth=4.55;}

\providecommand{\addvariableplot}{}	
\renewcommand{\addvariableplot}[1]{\addplot+[
	only marks,
	mark=*,
	mark options={draw=none}
	] table[
	col sep=comma,
	x expr=\thisrow{alpha}, 
	y expr=\thisrow{medianEliminatedVariables}*100,
	y error minus expr=100*(\thisrow{medianEliminatedVariables} - \thisrow{q25EliminatedVariables}),
	y error plus expr=100*(\thisrow{q75EliminatedVariables} - \thisrow{medianEliminatedVariables})
	] {#1};}

\providecommand{\addtriangleplot}{}	
\renewcommand{\addtriangleplot}[1]{\addplot+[
	only marks,
	mark=*,
	mark options={draw=none}
	] table[
	col sep=comma,
	x expr=\thisrow{alpha}, 
	y expr=100*\thisrow{medianEliminatedTriples},
	y error minus expr=100*\thisrow{medianEliminatedTriples} - 100*\thisrow{q25EliminatedTriples},
	y error plus expr=100*\thisrow{q75EliminatedTriples} - 100*\thisrow{medianEliminatedTriples} 
	] {#1};}

\providecommand{\addvariableplotsigma}{}	
\renewcommand{\addvariableplotsigma}[1]{\addplot[
	only marks,
	mark=*,
	black,
	mark options={draw=none}
	] table[
	col sep=comma,
	x expr=\thisrow{sigma}, 
	y expr=\thisrow{medianEliminatedVariables}*100,
	y error minus expr=100*(\thisrow{medianEliminatedVariables} - \thisrow{q25EliminatedVariables}),
	y error plus expr=100*(\thisrow{q75EliminatedVariables} - \thisrow{medianEliminatedVariables})
	] {#1};}

\providecommand{\addtriangleplotsigma}{}	
\renewcommand{\addtriangleplotsigma}[1]{\addplot[
	only marks,
	mark=*,
	black,
	mark options={draw=none}
	] table[
	col sep=comma,
	x expr=\thisrow{sigma}, 
	y expr=100*\thisrow{medianEliminatedTriangles},
	y error minus expr=100*\thisrow{medianEliminatedTriangles} - 100*\thisrow{q25EliminatedTriangles},
	y error plus expr=100*\thisrow{q75EliminatedTriangles} - 100*\thisrow{medianEliminatedTriangles} 
	] {#1};}

\begin{figure}[!b]
	\centering
	
	\begin{tikzpicture}[baseline]
		\begin{groupplot}[group style={group size= 4 by 2, horizontal sep=0.7cm}]
			
			\nextgroupplot[
			title={Variables [\%]},
			xlabel=$\alpha$,
			xmin=0.25,
			xmax=0.7,
			ymin=0,
			ymax=100,
			width=\plotwidth cm,
			height=\plotheight cm,
			legend pos=north east,
			legend style={font=\tiny}
			]
			\addvariableplot{./data/complete/partition-individual/alphas/n7_beta0.00_findIndependentSubsets_stats.csv}
			\addlegendentry{$\beta = 0.00$}
			
			\addvariableplot{./data/complete/partition-individual/alphas/n7_beta0.01_findIndependentSubsets_stats.csv}
			\addlegendentry{$\beta = 0.01$}
			
			\addvariableplot{./data/complete/partition-individual/alphas/n7_beta0.50_findIndependentSubsets_stats.csv}
			\addlegendentry{$\beta = 0.50$}
			
			\addvariableplot{./data/complete/partition-individual/alphas/n7_beta1.00_findIndependentSubsets_stats.csv}
			\addlegendentry{$\beta = 1.00$}

			\nextgroupplot[
			title={Variables [\%]},
			xlabel=$\alpha$,
			xmin=0.25,
			xmax=0.7,
			ymin=0,
			ymax=100,
			width=\plotwidth cm,
			height=\plotheight cm,
			yticklabels=\empty
			]
			\addvariableplot{./data/complete/partition-individual/alphas/n7_beta0.00_edgeCut_stats.csv}
			\addvariableplot{./data/complete/partition-individual/alphas/n7_beta0.01_edgeCut_stats.csv}
			\addvariableplot{./data/complete/partition-individual/alphas/n7_beta0.50_edgeCut_stats.csv}
			\addvariableplot{./data/complete/partition-individual/alphas/n7_beta1.00_edgeCut_stats.csv}
			
			\nextgroupplot[
			title={Triples [\%]},
			xlabel=$\alpha$,
			yticklabels=\empty,
			xmin=0.25,
			xmax=0.7,
			ymin=0,
			ymax=100,
			width=\plotwidth cm,
			height=\plotheight cm
			]
			\addtriangleplot{./data/complete/partition-individual/alphas/n7_beta0.00_tripletCut_stats.csv}
			\addtriangleplot{./data/complete/partition-individual/alphas/n7_beta0.01_tripletCut_stats.csv}
			\addtriangleplot{./data/complete/partition-individual/alphas/n7_beta0.50_tripletCut_stats.csv}
			\addtriangleplot{./data/complete/partition-individual/alphas/n7_beta1.00_tripletCut_stats.csv}
			
			\nextgroupplot[
			title={Variables [\%]},
			xlabel=$\alpha$,
			yticklabels=\empty,
			xmin=0.25,
			xmax=0.7,
			ymin=0,
			ymax=100,
			width=\plotwidth cm,
			height=\plotheight cm
			]
			\addvariableplot{./data/complete/partition-individual/alphas/n7_beta0.00_subsetJoin_stats.csv}
			\addvariableplot{./data/complete/partition-individual/alphas/n7_beta0.01_subsetJoin_stats.csv}
			\addvariableplot{./data/complete/partition-individual/alphas/n7_beta0.50_subsetJoin_stats.csv}
			\addvariableplot{./data/complete/partition-individual/alphas/n7_beta1.00_subsetJoin_stats.csv}

			\nextgroupplot[
			xlabel=$\sigma$,
			xmin=0.0,
			xmax=0.1,
			ymin=0,
			ymax=100,
			width=\plotwidth cm,
			height=\plotheight cm,
			xtick distance=0.05,
			ticklabel style={
				/pgf/number format/.cd,
				/pgf/number format/fixed
			}
			]
			\addvariableplotsigma{./data/complete/geometric_individual/sigmas/n54_findIndependentSubsets_stats.csv}

			\nextgroupplot[
			xlabel=$\sigma$,
			xmin=0.0,
			xmax=0.1,
			ymin=0,
			ymax=100,
			width=\plotwidth cm,
			height=\plotheight cm,
			yticklabels=\empty,
			xtick distance=0.05,
			ticklabel style={
				/pgf/number format/.cd,
				/pgf/number format/fixed
			}
			]
			\addvariableplotsigma{./data/complete/geometric_individual/sigmas/n54_edgeCut_stats.csv}
			
			\nextgroupplot[
			xlabel=$\sigma$,
			yticklabels=\empty,
			xmin=0.0,
			xmax=0.1,
			ymin=0,
			ymax=100,
			width=\plotwidth cm,
			height=\plotheight cm,
			xtick distance=0.05,
			ticklabel style={
				/pgf/number format/.cd,
				/pgf/number format/fixed
			}
			]
			\addtriangleplotsigma{./data/complete/geometric_individual/sigmas/n54_tripletCut_stats.csv}
						
			\nextgroupplot[
			xlabel=$\sigma$,
			yticklabels=\empty,
			xmin=0.0,
			xmax=0.1,
			ymin=0,
			ymax=100,
			width=\plotwidth cm,
			height=\plotheight cm,
			xtick distance=0.05,
			ticklabel style={
				/pgf/number format/.cd,
				/pgf/number format/fixed
			}
			]
			\addvariableplotsigma{./data/complete/geometric_individual/sigmas/n54_subsetJoin_stats.csv}
			
		\end{groupplot}
	\end{tikzpicture}
	\\[-2.75ex]
	\caption{
		We report above the percentage of fixed variables and triples when applying Propositions~\ref{lemma:persistency-subset-separation} to \ref{lemma:persistency-triplet-cut} and Corollary~\ref{corollary:subset-join-all-pairs-at-once} (Columsn~1--4), separately, and for the implementation from~\cite{stein-2023} for complete graphs and on the new hardware.
		Row 1 shows these for the partition data set with 56 elements. 
		Row 2 shows these for the geometric data set with 54 elements.
	}
	\label{figure:partition-goemetric-complete-individual-combined}
\end{figure}
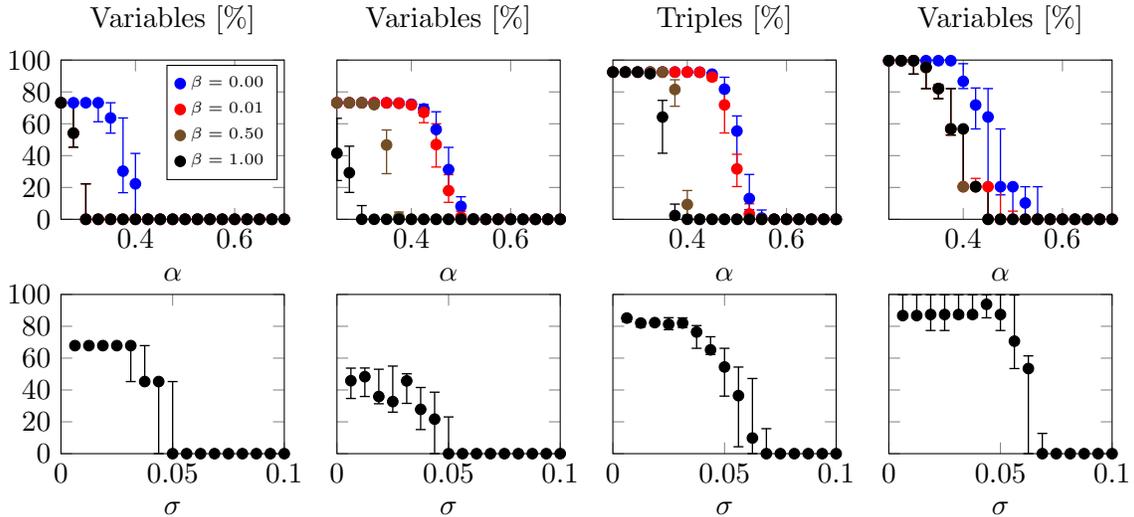

\vskip 0.2in
\bibliography{arxiv}

\end{document}